\newtheorem{theorem}{Theorem}[section]
\newtheorem{lemma}[theorem]{Lemma}
\newtheorem{proposition}[theorem]{Proposition}
\newtheorem{remark}[theorem]{Remark}
\def\square{\hbox{\vrule\vbox{\hrule\phantom{o}\hrule}\vrule}}
\newcommand{\be}{\begin{equation}}
\newcommand{\ee}{\end{equation}}
\numberwithin{equation}{section}
\newcommand{\Z}{\mathbb{Z}}
\newcommand{\R}{\mathbb{R}}
\newcommand{\C}{\mathbb{C}}
\newcommand{\W}{{\mathcal W}}
\newcommand{\cO}{{\mathcal O}}
\newcommand{\lan}{\langle}
\newcommand{\ran}{\rangle}
\newcommand{\pal}{\parallel}
\newcommand{\re}{{\rm Re}\hskip 1pt }
\newcommand{\im}{{\rm Im}\hskip 1pt }
\newcommand{\ord}{{\mathcal O}}
\newcommand{\ai}{{\rm Ai}\,}
\def\eq#1{(\ref{#1})}
\newtheorem{thm}{Theorem}[section]
\newtheorem{lem}{Lemma}[section]
\newtheorem{rem}{Remark}[section]
\numberwithin{equation}{section}
\begin{document}

\title{Widths of resonances above an energy-level crossing}
\author{S.~Fujii\'e${}^1$, A.~Martinez${}^2$ and T.~Watanabe${}^3$}



\maketitle
\addtocounter{footnote}{1}
\footnotetext{{\tt\small  Department of Mathematical Sciences, Ritsumeikan University, 1-1-1 Noji-Higashi, Kusatsu, 525-8577,  Japan, 
fujiie@fc.ritsumei.ac.jp} }  
\addtocounter{footnote}{1}
\footnotetext{{\tt\small Universit\`a di Bologna,  
Dipartimento di Matematica, Piazza di Porta San Donato, 40127
Bologna, Italy, 
andre.martinez@unibo.it }}  
\footnotetext{{\tt\small  Department of Mathematical Sciences, Ritsumeikan University,  1-1-1 Noji-Higashi, Kusatsu, 525-8577,  Japan,
t-watana@se.ritsumei.ac.jp} }  

\begin{abstract}
We study the existence and location of the resonances of a $2\times 2$ semiclassical system of coupled Schr\"odinger operators, in the case where the two electronic levels cross at some point, and one of them is bonding, while the other one is anti-bonding. Considering energy levels just above that of the crossing, we find the asymptotics of both the real parts and the imaginary parts of the resonances close to such energies. This is a continuation of our previous works where we considered energy levels around that of the crossing.
\end{abstract}
\vskip 4cm
{\it Keywords:} Resonances; Born-Oppenheimer approximation; eigenvalue crossing.
\vskip 0.5cm
{\it Subject classifications:} 35P15; 35C20; 35S99; 47A75.
\pagebreak

\section{Introduction}
In this paper we continue our study of the resonances for a $2\times 2$ semiclassical system of coupled one dimensional Schr\"odinger operators. Let us recall (see, e.g. \cite{KMSW}) that such systems appear in a natural way after a Born-Oppenheimer reduction of diatomic molecular Hamiltonians (in which case, the semiclassical parameter $h$ actually represents the square root of the quotient between the electronic and nuclear masses).

The situation we investigate is that of two electronic levels that cross at some real point, corresponding to an energy (say, 0) that is inside the continuous spectrum of the Hamiltonian (see Figure 1). 
Such a phenomenon occurs when some energy levels of the molecule cross each other, and this may happen even for diatomic molecules (see, e.g., \cite{Be, DiVi, Le, LeSu} and references therein), in which case, due to the rotational invariance of the system, the dimension of the space variable $x$ can be  reduced to 1 (so that $x$ actually represents the distance between the two atoms): see \cite{KMSW, MaSo}. Then, the eigenprojectors can generically be assumed to be smooth, and the resulting matrix-potential diagonalizes smoothly, leading to the model we are studying.

Let us also observe that the case of avoided-crossing with a gap of order smaller than $h$ also enters our model (in this case, the gap of the avoided-crossing may be included in the coefficient $r_0$ of the interaction $W$: see Assumption (A5)).

In \cite{FMW1}, we have considered energies very close to 0 (mainly, of size $\ord (h^{\frac23}$)), and we have proved the existence of resonances there, and given a sharp estimate on their widths in the case of an elliptic interaction at the crossing point.

Then, this result has been extended in \cite{FMW2} to the case of non elliptic interactions, that is, more precisely, to the physical case where the interaction consists of a (non zero) vector field.

In both cases, the main technique consists in constructing global WKB solutions for the system. This is made possible thanks to appropriate estimates on the fundamental solutions of the scalar Schr\"odinger operators involved in the problem. (Let us recall that the usual WKB constructions made for scalar operators cannot be extended to systems in general.)

Here, we consider the same situation, but this time we investigare the resonances $E=E(h)$ that have a real part close to some $E_0>0$ and an imaginary part $\ord (h)$. Of course, we still use the WKB solutions constructed in \cite{FMW1}, but since the region where they oscillate  overlap, the behaviours of the fundamental solutions are not as good as in the cases of \cite{FMW1, FMW2}. However, they are sufficient to prove the existence of resonances with a rough estimate of their location.

In order to compute in a precise way their asymptotics, we need to implement microlocal techniques in our methods. Namely, we need to specify the microlocal behaviour of the resonant states near the two crossing points (in phase-space) of the two characteristic surfaces of the problem (see Figure 2). To do so, around each of these points we construct two microlocal bases of solutions with specific properties concerning their microsupports, and we express any resonant state in these two basis. After that, the connection formulas between the bases permit us to obtain a very precise asymptotic expansion of the resonant states on the outgoing branch of the characteristic set, from which sharp estimates on the resonances (both their real parts and their widths) can be derived.

The microlocal constructions are made in a spirit similar to that of \cite{Ab} (that is, mainly by reducing the operator to a normal form), but with a particular attention on the choice of the Fourier integral operators used in the reduction.

Let us observe that our final result gives resonance widths of order $h^2$, that should be compared with the order $h^{\frac53}$ found in \cite{FMW1}. This difference can be explained by the different geometries of the characteristic sets. Indeed, in both cases they are constituted by  two curves, but in the current situation these curves cross transversally, while in \cite{FMW1} they are tangent to each other. In some sense, it becomes natural to think that in \cite{FMW1} the particle escapes more easily than here, which makes its life-time shorter, and  thus its resonance width larger.

Let us also mention the work \cite{As} where a similar problem is considered, but for negative energies (in which case the resonance widths are exponentially small).

The content of the paper is as follows: In section 2, we state our main theorem.
In section 3, we construct outgoing solutions to our system using the method established in \cite{FMW1} and \cite{FMW2}.
In section 4, we compute the wronskian between these outgoing solutions to show the existence and a rough estimate of the location of resonances. In section 5, we study the structure of the space of microlocal solutions near  crossing points.
In section 6, we apply this result to compute the asymptotic behaviour of the resonant state globally in the phase space.
Finally in section 7, we complete the proof of our main theorem.

{\bf Acknowledgements} This research was  supported by  the
University of Bologna during the visit of the first author in 2018 as well as the JSPS grant-in-aid for scientific research. We also thank  Y.Tsutsumi for his hospitality in Kyoto University where part of this work was done and M. Assal for his valuable remarks after reading the text carefully.

\begin{center}
\scalebox{0.4}{
\includegraphics{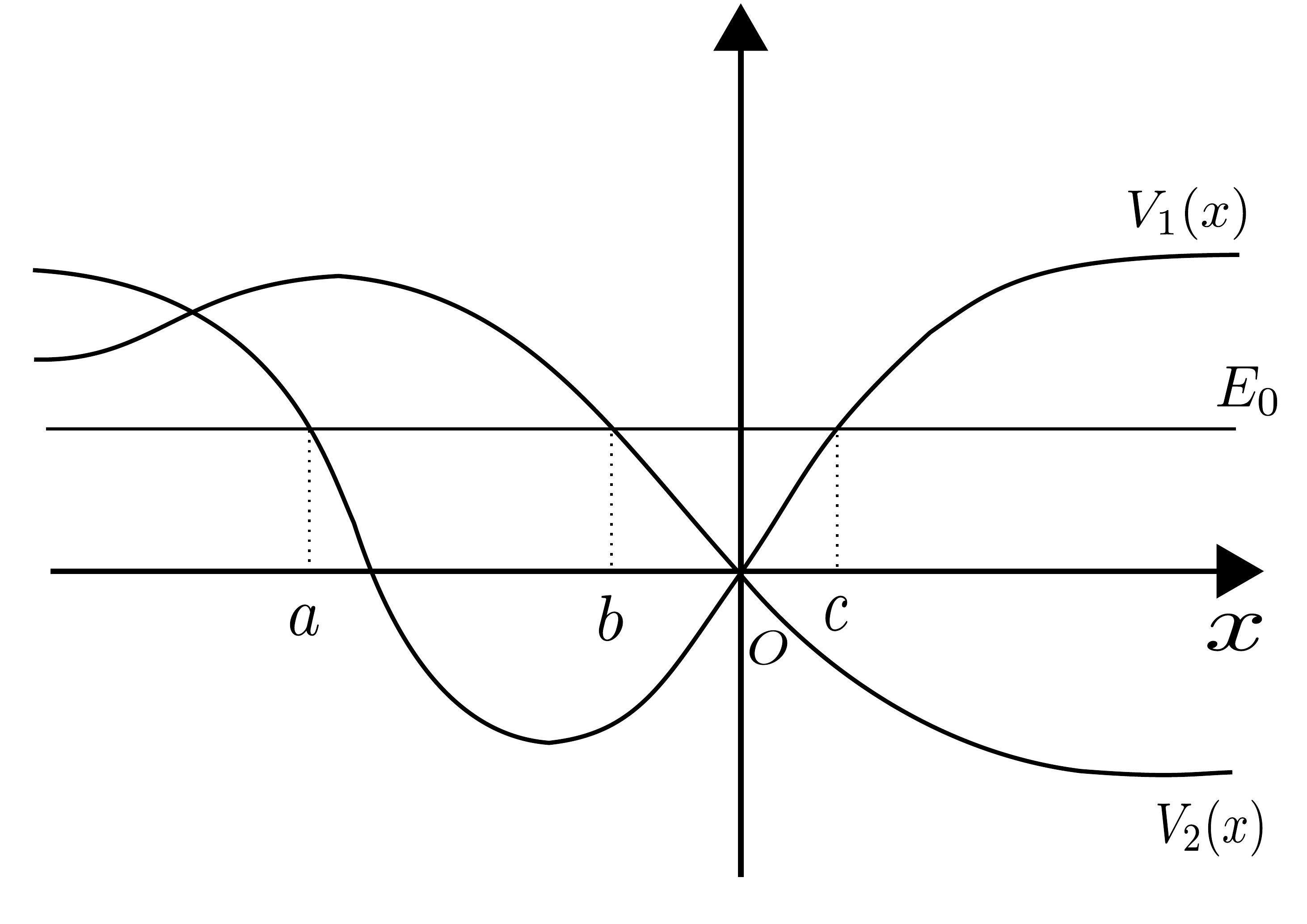}
}
\centerline{Figure 1: The two potentials}
\end{center}

\begin{center}
\scalebox{0.4}{
\includegraphics{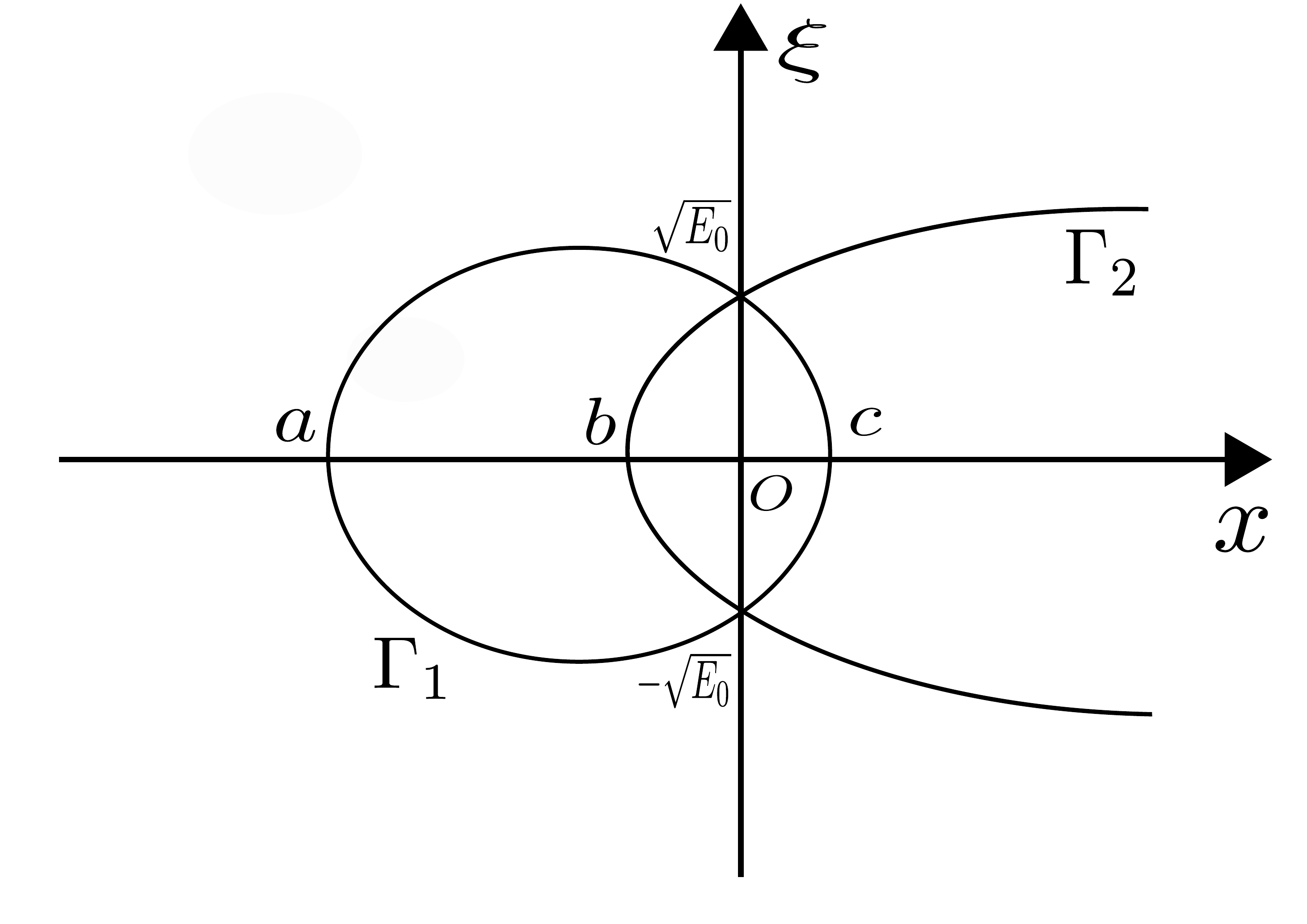}
}
\centerline{Figure 2: Phase-space picture}
\end{center}

\vskip 0.3cm

\section{Assumptions and results}

We consider a $2\times 2$ Schr\"odinger operator of the type,
\begin{equation}
\label{sch}
\begin{aligned}
Pu &= Eu,\qquad
P &= \left(
\begin{matrix}
 P_1 & hW\\
hW^* & P_2
\end{matrix}
\right),
\end{aligned}
\end{equation}
where $D_x$ stands for $-i\frac{d}{d x}$, $P_j=h^2D_x^2 + V_j(x)$ ($j=1,2$), $W=W(x,hD_x)$ is a semiclassical differential operator, and $W^*$ is the formal adjoint of $W$.
We study the asymptotic distribution of resonances in the semiclassical limit $h\to 0_+$ in a neighborhood of a fixed real energy $E_0$.

We suppose the following conditions on the potentials $V_1(x), 
V_2(x)$ (see Figure 1),  the positive number $E_0$  and  the interaction
$W(x,hD_x)$:

{\bf Assumption (A1)}
$V_1(x)$, $V_2(x)$ 
 are real-valued analytic functions on $\R$, and extend to holomorphic functions
 in the complex domain,
$$
{\mathcal S}=\{x\in\C\,;\,|\im\, x|<\delta_0\lan\re \,x\ran\}
$$
where $\delta_0>0$ is a constant, and $\lan t\ran:=(1+|t|^2)^{1/2}$.

{\bf Assumption (A2)} For $j=1,2$, $V_j$ admits limits as $\re\, x\to \pm\infty$ in ${\mathcal S}$, and they satisfy,
$$
\begin{aligned}
\lim_{{\re\,x\to -\infty}\atop{x\in {\mathcal S}}}  V_1(x)>E_0\, ;\, \lim_{{\re\,x\to -\infty}\atop{x\in {\mathcal S}}} V_2(x)>E_0\, ;\\
\lim_{{\re\,x\to +\infty}\atop{x\in {\mathcal S}}} V_1(x)>E_0\, ;\, \lim_{{\re\,x\to +\infty}\atop{x\in {\mathcal S}}} V_2(x)<E_0.
\end{aligned}
$$

{\bf Assumption (A3)} There exist three  numbers $a<b<0<c$ such that
$$
V_1(a)=V_1(c)=V_2(b)=E_0,
$$
$$
V_1'(a) < 0,\qquad V_1'(c) > 0,\qquad V_2'(b) < 0.
$$
and that
$$
\begin{array}{ll}
V_1>E_0\text{ on }(-\infty, a)\cup (c,+\infty),\quad
&V_1<E_0\text{ on }(a,c), \\[8pt]
V_2>E_0\text{ on }(-\infty, b),\quad
&V_2<E_0\text{ on }(b,+\infty).
\end{array}
$$

{\bf Assumption (A4)}
The set $\{x\in \R; V_1(x)=V_2(x)\, , \, V_1(x)\leq E_0\, ,\, V_2(x)\leq E_0\}$ is reduced to $\{0\}$, and one has $V_1(0)=V_2(0)=0$, $V'_1(0)>0$, $V_2'(0)<0$.

In particular, in the phase-space, the characteristic sets 
$\Gamma_j:=\{\xi^2 +V_j(x)=E_0\}$ ($j=1,2$) intersect transversally at 
$(0,\pm\sqrt{E_0})$ (see Figure 2).

{\bf Assumption (A5)}
$W(x,hD_x)$ is a first order differential operator,
$$
W(x,hD_x)=r_0(x)+ir_1(x)hD_x,
$$
where $r_0$ and $r_1$ are bounded analytic function on ${\mathcal S}$,
are real on the real, and such that $W$ is elliptic at the 
crossing points $(0,\pm\sqrt{E_0})$, that is,
$$
(r_0(0),r_1(0)) \not=(0,0).
$$

\vskip 0.3cm
In this situation, in a neighbourhood of the energy $E_0$, the spectrum of $P$ is essential only, and the resonances of $P$ can be defined, e.g., as the values $E\in\C$ such that the equation $Pu=Eu$ has a non trivial outgoing solution $u$, that is, a non identically vanishing solution such that, for some $\theta >0$ sufficiently small, the function $x\mapsto u(xe^{i\theta})$ is in $L^2(\R)\oplus L^2(\R)$ (see, e.g., \cite{AgCo, ReSi}). Equivalently, the resonances are the eigenvalues of the operator $P$ acting on $L^2(\R_\theta)\oplus L^2(\R_\theta)$, where $\R_\theta$ is a complex distortion of $\R$ that coincides with $e^{i\theta}\R$ for $x\gg 1$ (see, e.g., \cite{HeMa}). We denote by ${\rm Res}(P)$ the set of these resonances.

For $E\in \C$ close enough to $E_0$, we define the action,
$$
{\mathcal A}(E):= \int_{a(E)}^{c(E)}\sqrt{ E-V_1(t)} \, dt,
$$
where $a(E)$ (respectively $c(E)$) is the unique solution of $V_1(x)=E$ close to $a$ (respectively close to $c$). 
In this situation, ${\mathcal A}(E)$ is an analytic function of $E$ near $E_0$
 and ${\mathcal A}'(E)$ is strictly positive for any real $E$ near $E_0$.

We also fix $\delta_0>0$ sufficiently small and $C_0>0$ arbitrarily large, and we plan to study the resonances of $P$ lying in the set ${\mathcal D}_h(\delta_0, C_0)$ given by,
\be
{\mathcal D}_h(\delta_0,C_0):= [E_0-\delta_0, E_0+\delta_0]-i[0,C_0h].
\ee

For $h>0$ and $k\in\Z$ such that $(k+\frac12)\pi h$ belongs to ${\mathcal A}( [E_0-2\delta_0, E_0+2\delta_0])$, we set,
\be
\label{defekh}
e_k(h):={\mathcal A}^{-1}\left( (k+\frac12)\pi h\right).
\ee
Then, our main result is,
\begin{thm}\sl
\label{mainth}
Under Assumptions (A1)-(A5), there exists $\delta_0>0$ such that for any $C_0>0$, one has, for $h>0$ small enough
$$
{\rm Res}\,(P)\cap {\mathcal D}_h(\delta_0, C_0)
=\{E_k(h); k\in\Z\}\cap{\mathcal D}_h(\delta_0, C_0)
$$
where the $E_k(h)$'s are complex numbers that satisfy
\begin{align}
\label{reEk}
\re \,E_k(h) &= e_k(h) + {\mathcal O}(h^{2}),\\
\label{imEk}
\im \,E_k(h) &= - C(e_k(h))h^2+ {\mathcal O}(h^{7/3}),
\end{align}
uniformly as $h \to 0$. Here
$$
C(E)=\frac {\pi }{\gamma{\mathcal A}'(E)}
\left |r_0(0)E^{-\frac 14}\sin \left(\frac{{\mathcal B}(E)}{h}  + \frac{\pi}{4}\right) 
+ r_1(0)E^{\frac 14}\cos \left(\frac{{\mathcal B}(E)}{h} + \frac{\pi}{4}\right)\right |^2
$$
with $\gamma := V_1'(0)-V_2'(0) > 0$ and 
$${\mathcal B}(E):=\int_{b(E)}^0\!\!\!\!\sqrt{E-V_2(x)}dx+\int_0^{c(E)}\!\!\!\!\!\sqrt{E-V_1(x)}dx,$$
where $b(E)$ is the unique root of $V_2(x)=E$ close to $b$.
\end{thm}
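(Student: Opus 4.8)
\emph{Reduction to a Wronskian and rough localization (\S\S\,3--4).} The plan is to turn the resonance condition into a Bohr--Sommerfeld-type quantization equation for $E$ near each $e_{k}(h)$, carrying the width at order $h^{2}$ on its right-hand side, and then to solve it. Since $V_{1}>E_{0}$ near $\pm\infty$ while $V_{2}<E_{0}$ near $+\infty$, the solutions of $(P-E)u=0$ decaying at $-\infty$ form a two-dimensional space, so do those outgoing at $+\infty$, and $E\in{\rm Res}(P)$ iff these two planes meet nontrivially, i.e.\ iff a Wronskian ${\mathcal W}(E)$ of four well-chosen solutions vanishes. Following \cite{FMW1,FMW2}, I would produce such solutions by Volterra iteration on the scalar fundamental solutions of $P_{1}-E$ and $P_{2}-E$, keeping track of their WKB behaviour on the classically allowed parts of $\Gamma_{1}$ and $\Gamma_{2}$. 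With the coupling switched off one gets ${\mathcal W}(E)={\mathcal W}_{2}(E)\bigl(\cos({\mathcal A}(E)/h)+o(1)\bigr)$ with $|{\mathcal W}_{2}(E)|$ bounded from below (the second channel carries no bound state), so ${\mathcal W}$ vanishes exactly once near each $e_{k}(h)$: this gives existence, puts ${\rm Res}(P)\cap{\mathcal D}_{h}(\delta_{0},C_{0})$ in bijection with the admissible $k$'s via a point $E_{k}(h)$ close to $e_{k}(h)$ with $\im E_{k}={\mathcal O}(h)$, and the lower bound on $|{\mathcal W}(E)|$ away from the $e_{k}(h)$ rules out any other resonance in ${\mathcal D}_{h}(\delta_{0},C_{0})$. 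Finally, a Green identity for the system expresses $\im E_{k}$, up to a fixed positive constant, as $-h$ times the outgoing flux carried by the second component of the resonant state $u_{k}$ on $\Gamma_{2}$, divided by $\|u_{k}\|^{2}$; so computing $\im E_{k}$ amounts to computing precisely the amplitude leaked from $\Gamma_{1}$ onto the outgoing branch of $\Gamma_{2}$, which is of size $\sqrt{h}$ — whence $\im E_{k}$ of size $h^{2}$.

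\emph{Microlocal normal form at the crossings (\S\,5).} To get that precision I would work microlocally near the two transversal intersection points $(0,\pm\sqrt{E_{0}})\in\Gamma_{1}\cap\Gamma_{2}$. Near each of them I would conjugate $P-E$ by a carefully chosen elliptic Fourier integral operator (in the spirit of \cite{Ab}, but with the FIO tuned so that the microsupports of the resulting solutions stay controlled) so that $\Gamma_{1}$ becomes $\{\xi=0\}$, $\Gamma_{2}$ becomes $\{x=0\}$, the diagonal part of the system becomes $\mathrm{diag}(hD_{x},x)$ up to elliptic factors, and the off-diagonal part becomes $h$ times an elliptic first-order operator with principal symbol $\sigma(W)(0,\pm\sqrt{E_{0}})=r_{0}(0)\pm i r_{1}(0)\sqrt{E_{0}}$. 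In this normal form the $2\times2$ system is integrable: one obtains explicitly, modulo ${\mathcal O}(h^{\infty})$, two microlocal bases of solutions attached to the two branches, reads off their microsupports, and from them the connection matrix relating the WKB data of $u$ on the four legs of $\Gamma_{1}\cup\Gamma_{2}$ at the crossing. Its channel-changing entries are of order $\sqrt{h}$ (a transversal crossing with an ${\mathcal O}(h)$ coupling behaves like a Landau--Zener crossing with gap $\sim h$), with explicit leading coefficient proportional to $\sigma(W)(0,\pm\sqrt{E_{0}})/\gamma$, $\gamma=V_{1}'(0)-V_{2}'(0)$ being the transversality constant of the crossing.

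\emph{Global assembly and the quantization condition (\S\S\,6--7).} Propagating along $\Gamma_{1}\cup\Gamma_{2}$, I would then combine: the Bohr--Sommerfeld phase ${\mathcal A}(E)$ and the Maslov contribution $2\cdot\tfrac{\pi}{2}$ accumulated around the loop $\Gamma_{1}$ through its turning points $a(E),c(E)$; the turning-point connection at $b(E)$ (Maslov $\tfrac{\pi}{4}$), which together with the outgoing condition at $+\infty$ fixes the relative phase of the two branches of $\Gamma_{2}$ in terms of $\int_{b(E)}^{0}\sqrt{E-V_{2}}$; and the two crossing connection matrices. The amplitude leaked onto the outgoing branch of $\Gamma_{2}$ per turn around $\Gamma_{1}$ is, up to a common phase, a fixed linear combination of the leak at $(0,+\sqrt{E_{0}})$ and that at $(0,-\sqrt{E_{0}})$ reflected back at $b(E)$; collecting all phases, its modulus squared equals $\bigl|\,r_{0}(0)E^{-1/4}\sin(\tfrac{{\mathcal B}(E)}{h}+\tfrac{\pi}{4})+r_{1}(0)E^{1/4}\cos(\tfrac{{\mathcal B}(E)}{h}+\tfrac{\pi}{4})\bigr|^{2}$, where $E^{-1/4}$ is the WKB density factor at the crossing (recall $V_{j}(0)=0$, so $\xi=\sqrt{E}$ there) and the extra $\sqrt{E}$ in the $r_{1}$-term comes from the $hD_{x}$ in $W=r_{0}+ir_{1}hD_{x}$, while the $\sin$ and $\cos$ encode these same two parts of $W$ (the $hD_{x}$ turning the $\cos$-type WKB into a $\sin$). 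The resonance condition near $e_{k}(h)$ thus reduces to an equation of the form
$$
{\mathcal A}(E)=\Bigl(k+\tfrac12\Bigr)\pi h+h^{2}g(E)-\frac{i\pi h^{2}}{\gamma}\Bigl|\,r_{0}(0)E^{-1/4}\sin\bigl(\tfrac{{\mathcal B}(E)}{h}+\tfrac{\pi}{4}\bigr)+r_{1}(0)E^{1/4}\cos\bigl(\tfrac{{\mathcal B}(E)}{h}+\tfrac{\pi}{4}\bigr)\Bigr|^{2}+{\mathcal O}(h^{7/3}),
$$
with $g$ a bounded real-valued function. Since ${\mathcal A}'>0$ near $E_{0}$, the right-hand side is an ${\mathcal O}(h^{2})$ perturbation of the strictly monotone map $E\mapsto(k+\tfrac12)\pi h$, so a contraction-mapping (or Rouch\'e) argument gives, for each admissible $k$, a unique solution $E_{k}(h)\in{\mathcal D}_{h}(\delta_{0},C_{0})$ with $E_{k}(h)-e_{k}(h)={\mathcal A}'(e_{k}(h))^{-1}\times(\text{the }h^{2}\text{ term})+{\mathcal O}(h^{7/3})$; separating real and imaginary parts yields \eqref{reEk} and \eqref{imEk} with $C(E)=\pi\gamma^{-1}{\mathcal A}'(E)^{-1}|\cdots|^{2}$ as stated, and the first step guarantees there is nothing else in ${\mathcal D}_{h}(\delta_{0},C_{0})$.

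\emph{The main obstacle.} The heart of the matter is the microlocal step together with the assembly: building the microlocal bases at the transversal crossings with genuinely controlled microsupports — which is what forces the delicate, non-canonical choice of the reducing FIOs — and then transporting the crossing connection formulas coherently around the entire characteristic set, tracking every Maslov factor through the three turning points and the two crossings, so that the interference between the two crossings and the reflection at $b(E)$ assembles into exactly the modulus squared appearing in $C(E)$. This is harder than in \cite{FMW1,FMW2} precisely because here the two oscillatory WKB regimes overlap, so the scalar fundamental-solution estimates are no longer sharp enough by themselves and the microlocal normal-form analysis is indispensable.
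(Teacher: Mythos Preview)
Your overall architecture matches the paper's exactly: Wronskian condition and rough localization via Volterra iteration on scalar fundamental solutions (\S\S3--4), microlocal normal form at the two crossing points (\S5), transport of connection data around $\Gamma_1\cup\Gamma_2$ with Maslov factors at the three turning points (\S6), and the Green identity for the width together with a computation of $\|u\|^2$ (\S7). The final formula for $C(E)$ is assembled in the paper just as you describe.

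There are two technical differences worth flagging. First, for the normal form the paper does \emph{not} reduce the $2\times 2$ system directly to a Landau--Zener model $\mathrm{diag}(hD_x,x)$ plus $h$-off-diagonal; instead it uses the ellipticity of $W^*$ at $\rho_\pm$ to eliminate one component and obtain a \emph{scalar} fourth-order equation $Qv_2\sim 0$ with $Q=W^*(P_1-E)W_*^{-1}(P_2-E)-h^2W^*W$, whose principal symbol has a nondegenerate saddle; then the Helffer--Sj\"ostrand/Ramond normal form brings $F(Q,h)$ to $yhD_y+\tfrac{ih}{2}$, and the subprincipal computation $F(0,h)=-\tfrac{ih}{2}+\mu h^2$ with $\mu=-\frac{|W(\rho_-)|^2}{2\gamma\sqrt E}+{\mathcal O}(h)$ is what produces the crossing constants. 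Your matrix route (closer to \cite{Ab}) should also work, but the scalar reduction makes the subprincipal bookkeeping, and hence the constant $\gamma^{-1}$ in $C(E)$, particularly transparent. Second, the paper does \emph{not} package real and imaginary parts into a single complex quantization equation as you do at the end; it derives $\re E_k=e_k(h)+{\mathcal O}(h^2)$ directly from the monodromy condition $e^{2i{\mathcal A}(E)/h}=-1+{\mathcal O}(h)$ around $\Gamma_1$, and obtains $\im E_k$ separately from the Green identity $-\im E=\|u\|^{-2}|t_{2,R}^+|^2 h(1+{\mathcal O}(h))$ together with $\|u\|^2=4{\mathcal A}'(E)+{\mathcal O}(h^{1/3})$ --- this last ${\mathcal O}(h^{1/3})$ is precisely what yields the ${\mathcal O}(h^{7/3})$ remainder in \eqref{imEk}.
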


\begin{rem}
The physical case corresponds to $r_0= 0$ identically (see, e.g. \cite{FMW2}). In that case, the quantity $C(E)$  reduces to,
$$
C(E)=\frac {\pi }{\gamma{\mathcal A}'(E)}
 r_1(0)^2E^{\frac 12}\cos^2 \left(\frac{{\mathcal B}(E)}{h} + \frac{\pi}{4}\right).
$$
\end{rem}

\begin{rem}

Our theorem is valid above the crossing energy $0$ with $h$-independent positive real part.
However, it is interesting to observe the behaviour of the function $C(E)$ for positive $E$ of order $h^{\frac 23}$,
where the asymptotics of resonances are studied in \cite{FMW1} and \cite{FMW2}.
In this energy region, the factors $E^{-\frac 14}$ and $E^{\frac 14}$ concerning $r_0(0)$ and $r_1(0)$ respectively
are of order $h^{-\frac 16}$ and $h^{\frac 16}$, and hence their square together with $h^2$ in \eq{imEk} give
$h^{\frac 53}$ and $h^{\frac 73}$, which coincide with the results in  \cite{FMW1} and \cite{FMW2} respectively.
More precisely, substituting $\lambda_k(h) h^{2/3}$ for $e_k(h)$, we recover the same formula in the above papers
for the width of resonances.

\end{rem}

\begin{rem}
It is important to remark that the function $C(E)$ vanishes on a discrete subset of $\R$ which is determined by the action
${\mathcal B}(E)$.
In particular this set is given by
$$
{\mathcal B}(E)=\left (m-\frac 14\right )\pi h,\quad m\in\Z
$$
if $r_1(0)=0$, and by
$$
{\mathcal B}(E)=\left (m+\frac 14\right )\pi h,\quad m\in\Z
$$
if $r_0(0)=0$. When $e_k(h)$ takes such an value, the first term of the RHS of \eq{imEk} vanishes, which implies that
the width of the corresponding resonance is smaller than $h^2$.
This phenomenon can be seen as an effect of the closed trajectory in the phase space made by $\Gamma_1$ and $\Gamma_2$ (other than $\Gamma_1$ itself).
\end{rem}

\section{Background}
\label{sect3_0}
We fix $\theta >0$ small enough, we use the following notations:
$$
I_L:= (-\infty, 0]\,\, ; \,\, I_R^\theta:=F_\theta ([0, +\infty))\,\, ; \,\,  F_\theta (x):= x+i\theta f(x)
$$
where $f\in C^{\infty}([b, +\infty);\R_+)$, $f(x)=x$ for $x$ large enough, $f(x)=0$ for $x\in[b,x_\infty]$ for some $x_\infty>c$, and $f$ is chosen in such a way that, for any $x\geq x_\infty$, one has,
\be
\label{decItheta}
\im \int_{x_\infty}^{F_\theta (x)}\sqrt{ E-V_2(t)} dt \geq -Ch,
\ee
with some positive constant $C$ (see \cite{FMW1}).

\subsection{Fundamental solutions on $I_L$}
\label{sect3}

For $E\in {\mathcal D}_h(\delta_0,C_0)$ with $\delta_0$ small enough,  let $u_{j,L}^\pm$ be the solutions to 
$(P_j - E)u = 0$ in $(-\infty, 0]$ constructed as in Appendix 2 of \cite{FMW1}. In particular, $u_{j,L}^-$ decays exponentially at $-\infty$, while $u_{j,L}^+$ grows exponentially, and their
Wronskian $\W_{j,L}:=\W[u_{j,L}^-, u_{j,L}^+]$  satisfies 
\begin{equation}
\label{wronsL}
\W_{j,L} = \frac{-2}{\pi h^{\frac23}}(1+\ord(h)) \qquad (h \to 0).
\end{equation}

In the interior of the interval $[\re\, a(E), \re\, c(E)]$, we also have (see \cite[Remark 4.1]{FMW1},
\be
\label{u1L-}
u_{1,L}^-(x) = \frac{2h^{\frac 16}}{\sqrt \pi}(E-V_1(x))^{-\frac14}\cos\left( \frac{{\mathcal A}(E) +\nu_1(x)}{h} -\frac{\pi}4\right) +\ord(h^{\frac76}),
\ee
where $\nu_1(x):= \int_{c(E)}^x \sqrt{E-V_1(t)} dt$.

For any $k\geq 0$ integer, we set,
$$
C^k_b(I_L) := \{ u\, :\, I_L \to \C \mbox{ of class } C^k \,;\, \sum_{0\leq j\leq k}\sup_{x\in I_L}|u^{(j)}(x)| < +\infty \},
$$
equipped with the norm $\pal u \pal_{C^k_b(I_L)} := \sum_{0\leq j\leq k}\sup_{I_L}|u^{(j)}|$, 
and we define a fundamental solution
$$K_{j,L}\, :\, C^0_b(I_L) \to C^2_b(I_L)\qquad (j=1,2),$$
of $P_j - E$ on $I_L$ by setting, for $v\in C^0_b(I_L)$,
\be
\label{eq1}
K_{j,L}[v](x) := \frac{u_{j,L}^+(x)}{h^2 \W_{j,L}} \int_{-\infty}^x  u_{j,L}^-(t)v(t)\,dt + \frac{u_{j,L}^-(x)}{h^2 \W_{j,L}} \int_x^{0}\!\!\!\! u_{j,L}^+(t)v(t)\,dt.
\ee
Then, $K_{j,L}$ satisfies,
$$
(P_j-E)K_{j,L}={\mathbf 1},
$$
and, because of the form of the operator $W$, an integration by parts shows that we also have,
$$K_{j,L}W,\,\, K_{j,L}W^*\, :\, C^0_b(I_L) \to C^0_b(I_L)\qquad (j=1,2).$$

As in \cite{FMW1}, in view of the construction of solutions to the system, the key result is the following proposition:
\begin{proposition}\sl
\label{PropK1K2} One has,
\be
\label{estnormM1L}
\pal h^2K_{1,L}WK_{2,L}W^* \pal_{{\mathcal L}(C^0_b(I_L))} + \pal h^2K_{2,L}W^*K_{1,L}W \pal_{{\mathcal L}(C^0_b(I_L))}= \ord (h^{\frac1{3}}),
\ee
\be
\label{estnormK2L}
 |hK_{2,L}W^* v(0)| + |hK_{1,L}W v(0)| = \ord (\sup_{I_L}|v|),
\ee
uniformly with respect to $v\in C^0_b(I_L)$ and $h>0$ small enough.
\end{proposition}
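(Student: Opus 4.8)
Here is a plan.

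The plan is to follow the scheme used for the corresponding estimate in \cite{FMW1}, the new difficulty being that on $I_L$ the two classically allowed regions now overlap: on $(b(E),0]$ both $P_1$ and $P_2$ are oscillatory, on $(a(E),b(E))$ only $P_1$ is, and on $(-\infty,a(E))$ neither is. First I would assemble the ingredients from \cite[Appendix~2]{FMW1}: for $j=1,2$, $u_{j,L}^-$ decays and $u_{j,L}^+$ grows exponentially in the forbidden region; on an $\ord(h^{\frac23})$-neighbourhood of the turning point $\tau_1(E):=a(E)$ (resp. $\tau_2(E):=b(E)$) they are Airy-type, of size $\ord(1)$; and on the rest of the allowed region the WKB form $u_{j,L}^{\pm}\sim h^{\frac16}(E-V_j)^{-\frac14}e^{\pm i\phi_j/h}$ holds (cf. \eqref{u1L-}) with $\phi_j(x):=\int_{\tau_j(E)}^x\sqrt{E-V_j(t)}\,dt$; together with $\W_{j,L}\sim-2\pi^{-1}h^{-\frac23}$ by \eqref{wronsL}, so $(h^2\W_{j,L})^{-1}\sim-\tfrac{\pi}{2}h^{-\frac43}$. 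Because $W$ is first order with coefficients bounded, analytic and real on $\R$, an integration by parts in the two integrals of \eqref{eq1} (now with $W^{\#}v$ in place of $v$) moves the derivative occurring in $W^{\#}\in\{W,W^*\}$ from $v$ onto $u_{j,L}^{\mp}$; the boundary terms so produced at the variable point cancel between the two integrals (by $u_{j,L}^+(u_{j,L}^-)'-u_{j,L}^-(u_{j,L}^+)'=-\W_{j,L}$), leaving only a rank one term supported at $0$. One thus writes $K_{j,L}W^{\#}=\widetilde K_{j,L}^{\#}+R_j^{\#}$, with $R_j^{\#}v=c_j^{\#}\,v(0)\,u_{j,L}^-$, $|c_j^{\#}|\lesssim h\,|u_{j,L}^+(0)|/|h^2\W_{j,L}|=\ord(h^{-\frac16})$, and $\widetilde K_{j,L}^{\#}$ given by the kernel
$$
\frac{u_{j,L}^+(x)\,(Wu_{j,L}^-)(t)}{h^2\W_{j,L}}\,\mathbf 1_{t<x}\;+\;\frac{u_{j,L}^-(x)\,(Wu_{j,L}^+)(t)}{h^2\W_{j,L}}\,\mathbf 1_{t>x},
$$
in which no derivative of $v$ survives and $|Wu_{j,L}^{\mp}|\lesssim h^{\frac16}(E-V_j)^{-\frac14}$ on the allowed region (an integrable singularity at $\tau_j(E)$), $\ord(1)$ on the turning-point neighbourhood and exponentially small elsewhere; here one uses the reality of the coefficients to replace $W^{\#}u_{j,L}^{\mp}$ by $Wu_{j,L}^{\mp}$ modulo lower-order terms of the same size.

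The estimate \eqref{estnormK2L} then follows at once: on $I_L$, at $x=0$ the $\mathbf 1_{t>0}$-half of the kernel of $K_{j,L}$ vanishes, and since $0$ lies in the interior of the allowed region of both $P_j$,
$$
|hK_{j,L}W^{\#}v(0)|\lesssim h\,\frac{|u_{j,L}^+(0)|}{|h^2\W_{j,L}|}\Big(\int_{-\infty}^{0}|Wu_{j,L}^-|\,|v|\,dt+h\,|u_{j,L}^-(0)v(0)|\Big)\lesssim h\cdot h^{-\frac76}\cdot h^{\frac16}\,\sup_{I_L}|v|=\ord\big(\sup_{I_L}|v|\big).
$$

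For \eqref{estnormM1L} I would write $h^2K_{1,L}WK_{2,L}W^*=h^2\,\widetilde K_{1,L}^{W}\circ\widetilde K_{2,L}^{W^*}$ plus the contributions of the rank one factors $R_1,R_2$, which are bounded by a positive power of $h$ by simpler arguments. The key point is that bounding $\widetilde K_{1,L}^{W}$ and $\widetilde K_{2,L}^{W^*}$ separately in $\cL(C^0_b(I_L))$ is hopeless: those norms equal $\sup_x\int_{I_L}|\mathrm{kernel}(x,t)|\,dt$, a negative power of $h$ (the loss concentrated near $a(E)$ and $b(E)$, where $u_{j,L}^{\pm}$ are of size $\ord(1)$ rather than $\ord(h^{\frac16})$), so that the prefactor $h^2$ does not save the composition; one must keep it together and exploit oscillation. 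Applying $W$ to $\widetilde K_{2,L}^{W^*}v$ produces the two-integral expression with $u_{2,L}^{\pm}$ replaced by $Wu_{2,L}^{\pm}$ plus a bounded pointwise multiple $c(t)v(t)$ (with $c$ analytic, the turning-point singularities cancelling in the relevant Wronskian-type combination), whose contribution is again a positive power of $h$; inserting the result into $\widetilde K_{1,L}^{W}$ and using Fubini reduces matters to estimating, uniformly in the outer variables, the inner integral in the variable $t$ over the overlap region, whose only surviving oscillatory factor is $e^{i(\varepsilon_1\phi_1(t)+\varepsilon_2\phi_2(t))/h}$, $\varepsilon_1,\varepsilon_2\in\{\pm1\}$. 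When $\varepsilon_1=\varepsilon_2$ the phase derivative $\pm(\sqrt{E-V_1}+\sqrt{E-V_2})$ stays bounded away from $0$ on $[b(E),0]$ and one integration by parts gains a factor $h$; when $\varepsilon_1=-\varepsilon_2$ the phase derivative $\pm(\sqrt{E-V_1(t)}-\sqrt{E-V_2(t)})$ vanishes exactly at $t=0$, where $V_1=V_2=0$, with non-degenerate second derivative $\pm\gamma/(2\sqrt E)$ — here Assumption (A4) and the transversal crossing of $\Gamma_1,\Gamma_2$ enter — so stationary phase gains only $h^{1/2}$. Collecting the powers of $h$ (the $h^2$, two factors $(h^2\W_j)^{-1}\sim h^{-\frac43}$, the amplitudes, the phase gain) gives the bound $\ord(h^{\frac13})$, the resonant sign choices being the worst case; $h^2K_{2,L}W^*K_{1,L}W$ is symmetric, the critical point of $\phi_2-\phi_1$ again being $t=0$.

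It remains to treat, uniformly, the regions where this WKB description fails: on $(-\infty,\min(a(E),b(E)))$ all contributions are exponentially small, by the decay of the $u_{j,L}^-$ and Laplace estimates (the growth of the $u_{j,L}^+$ being absorbed against the $u_{\cdot,L}^-$ they are integrated with); and on the $\ord(h^{\frac23})$-neighbourhoods of $a(E)$ and $b(E)$ the WKB forms must be replaced by the Airy-type expansions of \cite{FMW1} and matched to the oscillatory regime. I expect this matching, together with the uniformity of the stationary phase estimate, to be the main obstacle. One cannot bound the double integral zone by zone with absolute values — that loses, as explained above — so the stationary phase estimate near $t=0$ must be run with the outer variable $x$ (and the innermost variable of the double integral) ranging over the whole allowed region: in particular through the cases where $x$ is within $\ord(\sqrt h)$ of $0$, so that the critical point $t=0$ sits near an endpoint $t=x$ of a branch of the kernel and only a half stationary phase contribution is available, and where $x$ is within $\ord(h^{\frac23})$ of $a(E)$ or $b(E)$, where one must splice with the Airy estimates without losing more than the already rough power $h^{\frac13}$. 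This interaction of the overlap of the two oscillatory regions with the turning-point behaviour is precisely what is absent from \cite{FMW1, FMW2}, where the allowed regions are disjoint and a sharper (even exponentially small) bound is available; it is the overlap that degrades the estimate to the polynomial gain $h^{\frac13}$.
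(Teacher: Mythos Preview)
Your proposal is correct and mirrors the paper's argument closely: the integration-by-parts decomposition $K_{j,L}W^\#=\widetilde K_{j,L}^\#+R_j^\#$ is exactly the paper's splitting $U_j=\widetilde U_j+h\,r_1(0)u_{j,L}^-(x)u_{j,L}^+(0)\delta_{t=0}$; the four cross-terms you call ``rank-one contributions'' are the paper's $A_2,A_3,A_4$; and the main term is the paper's $A_1$. You correctly identify the overlap $[\re b(E),0]$ as the new region, the four phases $\pm\nu_1\pm\nu_2$ in the intermediate variable $t$, the non-stationary versus stationary dichotomy with the unique non-degenerate critical point at $t=0$ coming from (A4), and the need for half/variable-endpoint stationary-phase lemmas when an endpoint approaches $0$ --- precisely the paper's Lemmas~3.4, 3.6, 3.8.

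The only organisational difference is that you propose to Fubini and do the $t$-integral first, with smooth WKB amplitude and both endpoints $s,x$ varying, whereas the paper integrates the innermost variable $s$ first, packaging the arbitrary $v$ into $w_\pm(t):=\int W_2u_{2,L}^\mp\,v$ with $|w_\pm|=\cO(h^{1/6})\sup|v|$ and $|w_\pm'|=\cO(\sup|v|)$, and then runs stationary phase in $t$ against this merely $C^1$ amplitude (hence the tailored Lemma~3.6). Both routes work; yours avoids the low-regularity stationary-phase lemma at the price of tracking two moving endpoints simultaneously near the critical point, which is still covered by the paper's Lemma~3.8 via $\int_s^x=\int_s^0+\int_0^x$. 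One small muddle: your sentence ``Applying $W$ to $\widetilde K_{2,L}^{W^*}v$ produces\dots'' is superfluous --- once you have written $K_{1,L}W=\widetilde K_{1,L}^W+R_1$, the $W$ is already absorbed into the kernel of $\widetilde K_{1,L}^W$, so the composition $\widetilde K_{1,L}^W\circ\widetilde K_{2,L}^{W^*}$ is directly the double integral $\iint\widetilde U_1(x,t)\widetilde U_2(t,s)v(s)\,ds\,dt$ (up to Wronskian factors), with no further $W$ to distribute.
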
 
\begin{remark}\sl 
Actually, it can also be proved that $\pal hK_{2,L}W^* \pal_{{\mathcal L}(C^0_b(I_L))} = \ord (h^{-\frac16})$, but for our purpose the better estimate \eqref{estnormK2L} on $hK_{2,L}W^* v(0)$ is needed.
\end{remark}

\begin{proof} We denote by $U_1(x,t)$ (respectively $U_2(x,t)$) the distributional kernel of $h^2\W_{1,L}K_{1,L}W$ (resp. $h^2\W_{2,L}K_{2,L}W^*$). An integration by parts shows that we have,

\be
\label{noyaux}
\begin{aligned}
U_1(x,t)=& \widetilde U_1(x,t)+hr_1(0)u_{1,L}^-(x)u_{1,L}^+(0)\delta_{t=0};\\
U_2(x,t)=&  \widetilde U_2(x,t)+h{r_1(0)}u_{2,L}^-(x)u_{2,L}^+(0)\delta_{t=0},
\end{aligned}
\ee
with, 
$$
\begin{aligned}
& \widetilde U_1(x,t):= u_{1,L}^+(x)(W_1u_{1,L}^-)(t){\bf 1}_{\{t<x\}}+u_{1,L}^-(x)(W_1u_{1,L}^+)(t){\bf 1}_{\{x<t<0\}}\\
&  \widetilde U_2(x,t):=u_{2,L}^+(x)(W_2u_{2,L}^-)(t){\bf 1}_{\{t<x\}}+u_{2,L}^-(x)(W_2u_{2,L}^+)(t){\bf 1}_{\{x<t<0\}}.
\end{aligned}
$$
where
$W_1={}^tW=r_0-ihD_x \cdot r_1$, $W_2={}^tW^*=r_0-i r_1 hD_x$,
and we first prove the estimate on  $h^2K_{1,L}WK_{2,L}W^*$. For $v\in C^0_b(I_L)$, we have to study,
\be
\label{K1K2}
\begin{aligned}
h^2K_{1,L}WK_{2,L}W^*v(x) & =\frac1{h^2\W_{1,L}\W_{2,L}} \iint U_1(x,t)U_2(t,s) v(s) ds dt\\
& = \cO(h^{-\frac23})\iint U_1(x,t)U_2(t,s) v(s) ds dt.
\end{aligned}
\ee

We divide $\iint U_1(x,t)U_2(t,s) v(s) ds dt$ into four terms,

\be
\label{K1K2bis}
\iint U_1(x,t)U_2(t,s) v(s) ds dt=A_1(x)+A_2(x)+A_3(x)+A_4(x)
\ee
with,
\be
\begin{aligned}
& A_1(x):=\iint \widetilde U_1(x,t)\widetilde U_2(t,s) v(s) ds dt;\\
& A_2(x):= h{r_1(0)}u_{2,L}^+(0)v(0)\int_{-\infty}^{0}\widetilde U_1(x,t) u_{2,L}^-(t)dt;\\
& A_3(x):= hr_1(0)u_{1,L}^-(x)u_{1,L}^+(0)\int_{-\infty}^{0}\widetilde U_2(0,s) v(s)ds;\\
& A_4(x) := h^2|r_1(0)|^2u_{1,L}^-(x)u_{1,L}^+(0)u_{2,L}^-(0)u_{2,L}^+(0)v(0).
\end{aligned}
\ee

By construction, we have $u_{j,L}^-(x)u_{j,L}^+(t)=\cO(1)$ uniformly on $\{ x\leq t\leq 0\}$, and therefore,
\be
\label{estA4}
\sup_{I_L}|A_4| =\cO(h^2)\sup_{I_L}|v|.
\ee

For the same reason,

\be
\label{estA3.1}
\sup_{I_L}|A_3| =\cO(h)\left| \int_{-\infty}^{0}\widetilde U_2(0,s) v(s)ds\right|,
\ee

and, denoting by $b(E)\in\C$ the unique solution of $V_2(x)=E$ close to $b$, we write,
$$
\int_{-\infty}^{0}\widetilde U_2(0,s) v(s)ds=\int_{-\infty}^{\re b(E)}\widetilde U_2(0,s) v(s)ds+\int_{\re b(E)}^{0}\widetilde U_2(0,s) v(s)ds.
$$

By the same computations as in \cite{FMW1}, Section 3, we have (see \cite{FMW1}, Formula (3.11)),
$$
\int_{-\infty}^{\re b(E)}\widetilde U_2(0,s) v(s)ds =\cO(h^{\frac23})\sup_{I_L}|v|.
$$
On the other hand, on the interval $[\re b(E), 0]$ the functions $u_{2,L}^\pm$ are oscillating, and  we have (see, e.g., \cite{FMW1}, Section 8),
$$
|u_{2,L}^\pm (s)| + |h(u_{2,L}^\pm)' (s)| =\cO(h^{\frac16} |s-b(E)|^{-\frac14}). 
$$
As a consequence,
$$
\int_{\re b(E)}^{0}\widetilde U_2(0,s) v(s)ds =\cO(h^{\frac13})\sup_{I_L}|v|,
$$
and thus,
\be
\int_{-\infty}^{0}\widetilde U_2(0,s) v(s)ds =\cO(h^{\frac13})\sup_{I_L}|v|.
\ee
Inserting into (\ref{estA3.1}), we obtain,
\be
\label{estA3}
\sup_{I_L}|A_3| =\cO(h^{\frac43})\sup_{I_L}|v|.
\ee

Concerning $A_2$, since
$|u_{2,L}^+(0)|=\cO(h^{\frac16})$ uniformly, we have,
\be
\label{estA2prov}
A_2(x) =\cO(h^{\frac76})\left|\int_{-\infty}^{0}\widetilde U_1(x,t) u_{2,L}^-(t)dt\right|\sup_{I_L}|v|.
\ee
We prove,
\begin{lemma}\sl
\label{lemsigma}
$$
\int_{-\infty}^{0}\widetilde U_1(x,t) u_{2,L}^-(t)dt=\cO(h^{\frac{1}{3}}).
$$
\end{lemma}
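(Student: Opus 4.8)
The plan is to recall the explicit form of the kernel $\widetilde U_1(x,t)$, namely
$$
\widetilde U_1(x,t)= u_{1,L}^+(x)(W_1u_{1,L}^-)(t){\bf 1}_{\{t<x\}}+u_{1,L}^-(x)(W_1u_{1,L}^+)(t){\bf 1}_{\{x<t<0\}},
$$
so that the integral splits, for each fixed $x\in I_L$, into two pieces: one of the form $u_{1,L}^+(x)\int_{-\infty}^{\min(x,0)}(W_1u_{1,L}^-)(t)u_{2,L}^-(t)\,dt$ and one of the form $u_{1,L}^-(x)\int_{\max(x,-\infty)}^{0}(W_1u_{1,L}^+)(t)u_{2,L}^-(t)\,dt$. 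First I would split the $t$-integration at $\re\,a(E)$ and at $\re\,b(E)$, so as to separate the region where $u_{1,L}^\pm$ are exponentially monotone (to the left of $\re\,a(E)$) from the region where they oscillate (between $\re\,a(E)$ and $0$); similarly for $u_{2,L}^-$, which decays exponentially to the left of $\re\,b(E)$ and oscillates on $[\re\,b(E),0]$. On the far-left region, the product $u_{1,L}^\pm(x)(W_1u_{1,L}^\mp)(t)$ is $\cO(1)$ uniformly by the Wronskian normalization, while $u_{2,L}^-(t)$ is exponentially small except near $\re\,b(E)$, giving a contribution $\cO(h^{2/3})$ exactly as in the estimate of $\int_{-\infty}^{\re b(E)}\widetilde U_2(0,s)v(s)\,ds$ quoted from \cite{FMW1}, Formula (3.11).

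The main work is the oscillatory region $t\in[\re\,b(E),0]$, where both $u_{1,L}^\pm$ and $u_{2,L}^-$ have the WKB form $\cO(h^{1/6}|t-\cdot(E)|^{-1/4})$ with phases $\pm\tfrac1h\nu_1(t)$ and $-\tfrac1h\nu_2(t)$ respectively (here I would use \eqref{u1L-} and the analogous expansion for $u_{1,L}^+$, together with the bound on $u_{2,L}^-$ recalled just before the lemma). Because $V_1(0)=V_2(0)=0$ with $V_1'(0)>0>V_2'(0)$ (Assumption (A4)), the two phases $\nu_1$ and $\nu_2$ have distinct derivatives $\mp\sqrt{E-V_1(t)}$ and $-\sqrt{E-V_2(t)}$ at every point of $(\re\,b(E),0)$ except possibly at the turning points; in particular the combined phase is non-stationary away from $t=\re\,a(E)$, $t=\re\,b(E)$ and $t=0$. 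So the strategy is a non-stationary phase / integration-by-parts argument on each subinterval bounded away from these three points, which produces an extra factor $h$ and hence a contribution $\cO(h\cdot h^{1/6}\cdot h^{1/6})=\cO(h^{4/3})$, well within the claimed $\cO(h^{1/3})$; near each of the three distinguished points one localizes on an interval of length $\cO(h^{2/3})$ and uses the crude bound $|u_{1,L}^\pm(t)u_{2,L}^-(t)|=\cO(h^{1/3}|t-\cdot|^{-1/2})$, whose integral over such an interval is $\cO(h^{1/3}\cdot h^{1/3})=\cO(h^{2/3})$. Multiplying by the prefactor $u_{1,L}^\pm(x)=\cO(1)$ (on all of $I_L$, again by construction) keeps everything $\cO(h^{1/3})$ uniformly in $x$.

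The point requiring care — and the main obstacle — is the matching at the turning points $\re\,a(E)$ and $\re\,b(E)$ and at the crossing abscissa $t=0$: there the WKB expansions degenerate, the amplitudes blow up like $|t-\cdot(E)|^{-1/4}$, and one cannot integrate by parts. The resolution is to use the Airy-type uniform bounds from \cite[Section 8]{FMW1} (which is precisely why the weaker pointwise bounds $|u_{j,L}^\pm|=\cO(h^{1/6}|\cdot|^{-1/4})$ are stated there) and to note that the singularities are integrable; the only genuine constraint is that the three distinguished points stay at mutual distance bounded below, which holds for $E\in\cD_h(\delta_0,C_0)$ with $\delta_0$ small since $a<b<0$. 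Assembling the contributions, the worst term is $\cO(h^{2/3})$ from the turning-point layers, which is $\cO(h^{1/3})$, proving the lemma; and then inserting this into \eqref{estA2prov} yields $\sup_{I_L}|A_2|=\cO(h^{7/6}\cdot h^{1/3})\sup_{I_L}|v|=\cO(h^{3/2})\sup_{I_L}|v|$, consistent with the $\cO(h^{1/3})$ bound targeted in Proposition \ref{PropK1K2}.
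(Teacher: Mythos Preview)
Your proposal is correct but considerably more elaborate than what the paper actually does. The paper's argument is essentially two lines: on $(-\infty,\re\,b(E)]$ one uses $\widetilde U_1(x,t)=\cO(1)$ together with the exponential decay of $u_{2,L}^-$ to obtain $\cO(h^{2/3})$, exactly as you outline; but on $[\re\,b(E),0]$ the paper simply invokes the pointwise bounds $\widetilde U_1(x,t)=\cO(h^{1/6})$ and $u_{2,L}^-(t)=\cO(h^{1/6}|t-b(E)|^{-1/4})$ and integrates directly, getting $\cO(h^{1/3})$ with no oscillatory-integral machinery whatsoever. Your stationary-phase and integration-by-parts scheme is not wrong --- it would in fact yield the sharper bound $\cO(h^{2/3})$ --- but it is unnecessary here, since the crude size estimate already delivers the stated $\cO(h^{1/3})$. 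The oscillatory arguments you sketch become genuinely relevant only later in the paper (in the treatment of $A_{1,1}^\pm$ and in Lemma~\ref{lemU2uL}), where a bound strictly better than $\cO(h^{1/3})$ is required.

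One small inaccuracy worth flagging: you assert that ``the prefactor $u_{1,L}^\pm(x)=\cO(1)$ on all of $I_L$, again by construction''. This is false for $u_{1,L}^+$, which grows exponentially as $x\to -\infty$. Your conclusion is nevertheless unaffected, because in the piece $u_{1,L}^+(x)\int_{-\infty}^x(\cdots)\,dt$ the constraint $t<x$ forces $x>\re\,b(E)>\re\,a(E)$ whenever $t$ lies in the oscillatory region $[\re\,b(E),0]$, and on that range one does have $u_{1,L}^+(x)=\cO(h^{1/6})$. (This is implicitly how the paper obtains $\widetilde U_1(x,t)=\cO(h^{1/6})$ for $t\in[\re\,b(E),0]$.) Also, there is no need to single out $t=\re\,a(E)$ as a ``distinguished point'' on $[\re\,b(E),0]$, since $a(E)<b(E)$ and hence $\re\,a(E)$ does not lie in that interval.
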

\begin{proof} Using that $\widetilde U_1(x,t)=\cO(1)$ uniformly on $(-\infty, \re c(E)]$, and that, for any $\delta >0$, there exists $\alpha >0$ constant such that,
$$
\begin{aligned}
& u_{2,L}^-(t)=\cO(h^{\frac16}e^{-\alpha |t|/h}) \mbox{ in } (-\infty, \re b(E)-\delta];\\
& u_{2,L}^-(t)=\cO(h^{\frac16}|b(E)-t|^{-\frac14}e^{-\alpha |b(E)-t|^{\frac32}/h}) \mbox{ in } [\re b(E)-\delta, \re b(E)-h^{\frac23}];\\
& u_{2,L}^-(t)=\cO(1) \mbox{ in }   (-\infty, \re c(E)],
\end{aligned}
$$
we immediately obtain,
$$
\int_{-\infty}^{\re b(E)}\widetilde U_1(x,t) u_{2,L}^-(t)dt=\cO(h^{\frac23}).
$$
On the other hand, when $t\in [\re b(E), 0]$, we have $\widetilde U_1(x,t)=\cO(h^{1/6})$ and $u_{2,L}^-(t)=\cO(h^{1/6}|t-b(E)|^{-1/4})$.
Therefore,
$$
\int^{0}_{\re b(E)}\widetilde U_1(x,t) u_{2,L}^-(t)dt=\cO(h^{\frac13}),
$$
and the result follows.
\end{proof}

Going back to (\ref{estA2prov}), this gives us,
\be
\label{estA2}
\sup_{I_L}|A_2| =\cO(h^{\frac{3}{2}})\sup_{I_L}|v|.
\ee

Concerning $A_1$, we first observe that, by the same estimates as in \cite{FMW1}, Section 3 (see \cite{FMW1}, proof of Proposition 3.1), we have,
$$
\int_{-\infty}^{\re b(E)}\int_{-\infty}^{\re b(E)}\left|\widetilde U_1(x,t)\widetilde U_2(t,s)\right|  ds dt =\cO(h^{\frac43})
$$
uniformly for $x\in (-\infty, 0]$ and $h>0$ small enough,
and thus,
\be
\label{finalestA10}
\sup_{x\in I_L}\left|\int_{-\infty}^{\re b(E)}\int_{-\infty}^{\re b(E)}\widetilde U_1(x,t)\widetilde U_2(t,s) v(s) ds dt \right|=\cO(h^{\frac43})\sup_{I_L}|v|.
\ee
It remains to study the three quantities,
$$
\begin{aligned}
& A_{1,1}(x):= \int_{\re b(E)}^{0}dt\int_{\re b(E)}^{0}\widetilde U_1(x,t)\widetilde U_2(t,s) v(s) ds;\\
& A_{1,2}(x):=\int_{-\infty}^{\re b(E)}dt\int_{\re b(E)}^{0}\widetilde U_1(x,t)\widetilde U_2(t,s) v(s) ds ;\\
& A_{1,3}(x):=  \int_{\re b(E)}^{0}dt\int_{-\infty}^{\re b(E)}\widetilde U_1(x,t)\widetilde U_2(t,s) v(s) ds.
\end{aligned}
$$

For $A_{1,2}(x)$, since $t\leq s$ on the domain of integration, we have,
$$
A_{1,2}(x) =\int_{-\infty}^{\re b(E)} \widetilde U_1(x,t)u_{2,L}^-(t)dt \int_{\re b(E)}^{0}(W_2u_{2,L}^+)(s)v(s)ds,
$$
and thus, since $W_2u_{2,L}^+$ is $\cO(h^{1/6}|s-b(E)|^{-1/4})$ on $[\re b(E),0]$,
\be
\label{estA22-1}
A_{1,2}(x)=\cO(h^{1/6})\sup_{I_L}|v|\int_{-\infty}^{\re b(E)} \widetilde U_1(x,t)u_{2,L}^-(t)dt.
\ee
Thanks to the exponential decay of $u_{2,L}^-$ on $(-\infty, \re b(E))$, we immediately see that, for any $\delta >0$, we have,
\be
\label{estA22-2}
\int_{-\infty}^{\re b(E)-\delta} \widetilde U_1(x,t)u_{2,L}^-(t)dt =\cO(e^{-\alpha /h}),
\ee
with $\alpha =\alpha(\delta) >0$. On the other hand, if $\delta$ is sufficiently small, for $t\in [\re b(E)-\delta,\re b(E)]$, we have,
$$
\begin{aligned}
& \widetilde U_1(x,t)=\cO (h^{\frac16});\\
& u_{2,L}^-(t) =\cO(h^{\frac16}|t-b(E)|^{-\frac14}e^{-\beta (\re b(E) -t)^{3/2}/h}),
\end{aligned}
$$
where $\beta >0$ is some positive constant. Hence,
$$
\int_{\re b(E)-\delta}^{\re b(E)} \widetilde U_1(x,t)u_{2,L}^-(t)dt =\cO(h^{\frac13})\int_{-\delta}^\delta t^{-\frac14}e^{-\beta t^{3/2}/h}dt=\cO(h^{5/6}),
$$
and we finally obtain,
\be
\label{finalestA12}
\sup_{I_L} |A_{1,2}|=\cO(h)\sup_{I_L}|v|.
\ee

Concerning $A_{1,3}(x)$, we have $s\leq t$ on the domain of integration, and thus,
$$
A_{1,3}(x) =\int_{\re b(E)}^{0} \widetilde U_1(x,t)u_{2,L}^+(t)dt \int_{-\infty}^{\re b(E)}(W_2u_{2,L}^-)(s)v(s)ds.
$$
Since $\int_{-\infty}^{\re b(E)} |W_2u_{2,L}^-(s)|ds= \cO(h^{\frac23})$ (this can be seen, e.g., as in \cite{FMW1}, Section 3), we deduce,
\be
\label{estA23-1}
A_{1,3}(x)=\cO(h^{\frac23}\sup_{I_L}|v|)\int_{\re b(E)}^{0} \widetilde U_1(x,t)u_{2,L}^+(t)dt.
\ee
On the other hand, for $t\in [\re b(E),0]$, 
$$
\widetilde U_1(x,t)u_{1,L}^+(t)=\cO(h^{1/3}|t-b(E)|^{-1/4}),
$$
and thus
\be
\label{estfinaleA13}
A_{1,3}=\cO(h)\sup_{I_L}|v|.
\ee

Concerning $ A_{1,1}(x)$, we write,
\be
\label{decompA11}
A_{1,1}(x)= A_{1,1}^+(x)+A_{1,1}^-(x)
\ee
with,
$$
\begin{aligned}
& A_{1,1}^\pm(x):=  \int_{\re b(E)}^{0}\widetilde U_1(x,t)u_{2,L}^\pm(t)w_\pm (t)dt ;\\
& w_+(t):=\int_{\re b(E)}^t (W_2u_{2,L}^-)(s)v(s)ds;\\
& w_-(t):=\int_t^{0} (W_2u_{2,L}^+)(s)v(s)ds,
\end{aligned}
$$
and we observe that, since $|u_{2,L}^\pm|+|W_2u_{2,L}^\pm| =\cO(h^{\frac16} |E-V_2|^{-\frac14}$) 
and $(E-V_2)^{-\frac14}$ is integrable on $[\re b(E), 0]$, we have,
\be
\label{estw(t)}
w_\pm (t) =\cO(h^{\frac16})\sup_{I_L}|v|.
\ee
In addition, by definition, we also have,
\be
\label{estw'(t)}
w_\pm' (t) =\cO(\sup_{I_L}|v|).
\ee

Now, we fix $\lambda \geq1$  arbitrarily large, and we write,
\be
\begin{aligned}
A_{1,1}^\pm(x) =\int_{\re b(E)}^{\re b(E)+\lambda h^{2/3}} & \widetilde U_1(x,t)u_{2,L}^\pm(t)w_\pm (t)dt\\
&+\int_{\re b(E)+\lambda h^{2/3}}^{0}\widetilde U_1(x,t)u_{2,L}^\pm(t)w_\pm (t)dt,
\end{aligned}
\ee
and thus, using \eqref{estw(t)} and the fact that $\widetilde U_1(x,t) =\cO(h^{1/6})$ 
and $u_{2,L}^\pm(t)=\cO (h^{1/6}|t-b(E)|^{-1/4})$ 
when $t\in [\re b(E), \re b(E)+\lambda h^{2/3}]$, we obtain
\be
A_{1,1}^\pm(x) =\int_{\re b(E)+\lambda h^{2/3}}^{0}\widetilde U_1(x,t)u_{2,L}^\pm(t)w_\pm (t)dt +\cO(h)\sup_{I_L}|v|.
\ee

We first assume $x\leq \re b(E)$. In this case, one has $x\leq t$ on the domain of integration, and thus,
\be
\label{estA11-1}
A_{1,1}^\pm(x) =C(v)u_{1,L}^-(x) +\cO(h)\sup_{I_L}|v|.
\ee
with,
$$
C(v):=\int_{\re b(E)+\lambda h^{2/3}}^{0}W_1u_{1,L}^+(t)u_{2,L}^\pm(t)w_\pm (t)dt
$$
Now, on the interval $[\re b(E)+\lambda h^{2/3}, 0]$, since we stay far away from the turning points of $V_1$, we can use the WKB expansion of $u_{1,L}^+(t)$,
\be
\begin{aligned}
& u_{1,L}^-(t)=\frac{2h^{\frac16}}{\sqrt\pi}(E-V_1(t))^{-\frac14}\sin\left(h^{-1}\nu_1(t)+\frac{\pi}4\right)+\cO(h^{\frac76});\\
& h(u_{1,L}^-)'(t)=\frac{2h^{\frac16}}{\sqrt\pi}(E-V_1(t))^{\frac14}\cos\left(h^{-1}\nu_1(t)+\frac{\pi}4\right)+\cO(h^{\frac76}),
\end{aligned}
\ee
where we have set,
\be
\label{defnu1}
\nu_1(t):= \int_{a(E)}^t\sqrt{E-V_1(s)}ds.
\ee
On the other hand, concerning $u_{2,L}^-$, we have (see, e.g.,  \cite[Section 8]{FMW1}, and \cite{Ya}),
\be
\label{approxYaf}
u_{2,L}^-(t)=2(\xi_2'(t))^{-\frac12}\check \ai (h^{-\frac23}\xi_2(t))+\cO(h),
\ee

where $\ai$ is the usual Airy functions, $\check \ai(t):=\ai(-t)$, and where  $\xi_2=\xi_2(t;E)$ is the analytic continuation to complex values of $E$ of the function defined for $E$ real by,

$$
\begin{aligned}
& \xi_2(t;E) := \left( \frac32\int_{b(E)}^t\sqrt{E-V_2(s)}ds\right)^{\frac23} \quad \mbox{when } t\geq b(E);\\
& \xi_2(t;E) := -\left( \frac32\int_t^{b(E)}\sqrt{V_2(s)-E}ds\right)^{\frac23}\quad \mbox{when } t\leq b(E).
\end{aligned}
$$

By choosing $\lambda$ sufficiently large, we also see that $h^{-\frac23}\xi_2(t)$ becomes arbitrarily large when $t\in [\re b(E)+\lambda h^{2/3}, 0]$. Thus, there we can use the asymptotic behaviour of $\check \ai$ at infinity (see, e.g., \cite{Ol}),
$$
\check\ai (y)=\frac1{\sqrt \pi}y^{-\frac14}\sin\left( \frac23 y^{\frac32}+\frac{\pi}4\right)+\cO(|y|^{-\frac14-\frac32});\\
$$
valid uniformly as $|y|\to \infty$, $|\arg y| \leq \frac{2\pi}3-\delta$ ($\delta >0$ arbitrarily small). We obtain,
$$
u_{2,L}^-(t)=\frac{2h^{1/6}(\xi_2'(t))^{-\frac12}}{\sqrt{\pi}(\xi_2(t))^{\frac14}}\sin\left( \frac{2\xi_2(t)^{\frac32}}{3h} +\frac{\pi}4\right)+\cO\left(\frac{h^{7/6}}{|\xi_2(t)|^{\frac14+\frac32}}\right)+\cO(h).
$$
Then, using that $|\xi_2(E)|$ behaves like $|t-b(E)|$ on this interval, we obtain,
\be
\label{CC+c-}
C(v)=C_+(v) + C_-(v) +R(v)
\ee
where
$$
C_\pm(v)=\int_{\re b(E)+\lambda h^{2/3}}^{0} h^{\frac13}\frac{a_\pm (t)e^{\pm i\nu_1(t)/h}}{(t-b(E))^{1/4}}\sin\left( \frac{2\xi_2(t)^{\frac32}}{3h} +\frac{\pi}4\right)
w_\pm (t)dt
$$
with $a_{\pm}(t)$ smooth, and where,
$$
R(v) = \int_{\re b(E)+\lambda h^{2/3}}^{0}\left( \frac{\cO(h^{\frac76+\frac16})}{|t-b(E)|^{1/4}} +\frac{\cO(h^{\frac16+\frac76})}{|t-b(E)|^{\frac14+\frac32}}+\cO(h^{\frac76})\right)
|w_\pm (t)|dt.
$$
By the same arguments as before (in particular \eqref{estw(t)}), we obtain,
\be
\label{estR(t)}
R(v)=\cO(h^2+h+h^{\frac43})\sup_{I_L} |v|=\cO(h)\sup_{I_L} |v|.
\ee
On the other hand, setting,
\be
\label{defnu2}
 \nu_2(t):= \int_{b(E)}^t\sqrt{E-V_2(s)}ds,
\ee
we see that $C_+(v)$ and $C_-(v)$ are sums of terms of the type,
$$
B_+=\int_{\re b(E)+\lambda h^{2/3}}^{0} h^{\frac13}\frac{a (t)e^{\pm i(\nu_1(t)+ \nu_2(t))/h}}{(t-b(E))^{1/4}}
w_\pm (t)dt,
$$
or of the type,
$$
B_-=\int_{\re b(E)+\lambda h^{2/3}}^{0} h^{\frac13}\frac{a (t)e^{\pm i(\nu_1(t)- \nu_2(t))/h}}{(t-b(E))^{1/4}}
w_\pm (t)dt,
$$
with $a(t)$ smooth, and where the various $\pm$ are not related each other. 

Here we observe that, for $t\in[\re b(E)+\lambda h^{2/3}, 0]$, one has,
\begin{itemize}
\item $|\im \nu_1(t)| + |\im \nu_2(t)| =\cO(h)$;
\item $\re(\nu_1'(t)+\nu_2'(t))=\re(\sqrt {E-V_1(t)}+\sqrt{E-V_2(t)})\geq \frac1{C}$ for some constant $C>0$;
\item $\re(\nu_1'(t)-\nu_2'(t))=\re(\sqrt {E-V_1(t)}-\sqrt{E-V_2(t)})$ vanishes at $t=0$ only;
\item $|\re(\nu_1''(t)-\nu_2''(t))|\geq \frac1{C}$ for some constant $C>0$.
\end{itemize}

For the $B_+$-type terms, we write,
\be
\label{nonstation}
e^{\pm i(\nu_1(t)+ \nu_2(t))/h}=\frac{\pm h}{i(\nu_1'(t)+ \nu_2'(t))}\frac{d}{dt}(e^{\pm i(\nu_1(t)+ \nu_2(t))/h}),
\ee
and we make an integration by parts. Using the notation $\varphi:=\pm (\nu_1+\nu_2)$, we obtain,
$$
B_+= \cO(h^{\frac43})\sup_{I_L} |v| +ih^{\frac43}\int_{\re b(E)+\lambda h^{2/3}}^{0}e^{ i\varphi(t)/h}\frac{d}{dt}\left(\frac{ a(t)w_\pm(t)}{(t-b(E))^{\frac14}\varphi'(t)}\right)dt
$$
and thus, using \eqref{estw'(t)} and the fact that $\frac{d}{dt}\left(\frac{ a(t)}{(t-b(E))^{\frac14}\varphi'(t)}\right)$ is  $\cO(|t-b(E)|^{-\frac54})$,
\be
\label{estB+}
B_+= \cO(h^{\frac43})\sup_{I_L} |v|.
\ee
Concerning the $B_-$-type terms, in view of performing a stationary-phase argument, let $\chi\in C_0^\infty (\re b(E), 0)$ be a ($h$-independent) cut-off function, such that $\chi =1$ near 0. We write $B_-=B_{-,1}+B_{-,2}$, with,
$$
\begin{aligned}
& B_{-,1}:=\int_{\re b(E)+\lambda h^{2/3}}^{0} h^{\frac13}(1-\chi(t))\frac{a (t)e^{\pm i(\nu_1(t)- \nu_2(t))/h}}{(t-b(E))^{1/4}}
w_\pm (t)dt,
;\\
& B_{-,2}:=\int_{\re b(E)+\lambda h^{2/3}}^{0} h^{\frac13}\chi (t)\frac{a (t)e^{\pm i(\nu_1(t)- \nu_2(t))/h}}{(t-b(E))^{1/4}}
w_\pm (t)dt.
\end{aligned}
$$
Exactly as for $B_+$, we see,
\be
\label{estB-1}
B_{-,1}=\cO(h^{\frac43})\sup_{I_L} |v|.
\ee

In order to estimate $B_{-,2}$, we need some special version of the stationary-phase theorem (it is probably well-known, but we did not find any reference for it).
\begin{lemma}\sl 
\label{phasestat1}
Let  $\chi_0\in C_0^\infty (\R;[0,1])$ with $\chi_0=1$ near 0, and $\psi\in C^\infty(\R;\R)$ admitting 0 as unique stationary point in ${\rm Supp}\chi_0$ with $\psi''(0)\not=0$. Then, denoting by $K$ the convex hull of ${\rm Supp}\chi_0$ and by ${\rm sgn}\hskip 1pt\psi''(0)$ the sign of $\psi''(0)$, one has, for  $f\in C^2(\R)$,
\begin{equation}
\label{3.4}
\int e^{i\psi (t)/h}\chi_0(t)f(t)dt =f(0)e^{i\frac{\pi}4 \hskip 1pt {\rm sgn}\hskip 1pt\psi''(0)}\sqrt{\frac{2\pi h}{|\psi''(0)|}} + \cO(h)\sup_{K}(|f'|+|f''|),
\end{equation}
uniformly with respect to $h>0$ small enough.
\end{lemma}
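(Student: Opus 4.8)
The plan is to separate from the amplitude its value $f(0)$ at the stationary point, to treat that constant multiple of $\chi_0$ by the classical stationary phase formula for $C^\infty$ amplitudes, and to reduce the genuinely ``only $C^2$'' part of the amplitude to a single integration by parts. We may assume $\psi(0)=0$, since otherwise both sides are merely multiplied by the unimodular constant $e^{i\psi(0)/h}$.

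First I would write $f(t)=f(0)+t\,g(t)$, where $g(t):=\int_0^1 f'(st)\,ds$. Because $f\in C^2$, the function $g$ is of class $C^1$, and with $K$ the convex hull of $\mathrm{Supp}\,\chi_0$ (which contains $0$) one has $\sup_K|g|\le\sup_K|f'|$ and $\sup_K|g'|\le\frac12\sup_K|f''|$. Hence
\[
\int e^{i\psi/h}\chi_0 f\,dt=f(0)\,I_0+J,\qquad I_0:=\int e^{i\psi/h}\chi_0\,dt,\quad J:=\int e^{i\psi/h}\chi_0(t)\,t\,g(t)\,dt.
\]
In $I_0$ the amplitude $\chi_0$ is $C^\infty$, so the usual stationary phase expansion applies and, since $\chi_0(0)=1$, gives $I_0=e^{i\frac{\pi}{4}{\rm sgn}\hskip 1pt\psi''(0)}\sqrt{2\pi h/|\psi''(0)|}\,(1+\ord(h))$; thus $f(0)I_0$ already produces the announced main term, with an error $\ord(h^{3/2})|f(0)|$.

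For $J$ I would use that $0$ is the unique critical point of $\psi$ on $\mathrm{Supp}\,\chi_0$ together with $\psi''(0)\neq0$: since $\psi'(0)=0$, the function $m(t):=\psi'(t)/t$ is of class $C^\infty$ on $\R$, satisfies $m(0)=\psi''(0)$, and does not vanish on $\mathrm{Supp}\,\chi_0$; therefore $\chi_0/m\in C_0^\infty$ and $t=\psi'(t)/m(t)$ on $\mathrm{Supp}\,\chi_0$. Consequently
\[
J=\int\bigl(\psi'(t)\,e^{i\psi/h}\bigr)\frac{\chi_0(t)g(t)}{m(t)}\,dt=-ih\int\Bigl(\frac{d}{dt}e^{i\psi/h}\Bigr)\frac{\chi_0 g}{m}\,dt=ih\int e^{i\psi/h}\,\frac{d}{dt}\!\Bigl(\frac{\chi_0 g}{m}\Bigr)\,dt,
\]
the boundary terms vanishing because $\chi_0$ has compact support, and the last integrand being an honest continuous function since $g\in C^1$ — this is exactly where the hypothesis $f\in C^2$ is used. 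As $\frac{d}{dt}(\chi_0 g/m)$ is supported in $\mathrm{Supp}\,\chi_0$ and bounded by a constant depending only on $\chi_0$ and $\psi$ times $\sup_K|g|+\sup_K|g'|\le\sup_K(|f'|+|f''|)$, we obtain $|J|\le C\,h\sup_K(|f'|+|f''|)$. Adding the contribution of $f(0)I_0$ yields the asserted formula up to the term $\ord(h^{3/2})|f(0)|$ coming from the subleading part of $I_0$; this term is negligible in all the applications of the lemma (there $f(0)=\ord(h^{1/2})$, so it is $\ord(h^2)$), although in full generality it should be added to the right-hand side of \eqref{3.4}.

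The only genuine obstacle is the low regularity of $f$: one cannot feed $\chi_0 f$ into the full stationary phase expansion, and a crude remainder estimate would leave a $\sup_K|f|$ in the error. The mechanism that removes it is the one above — peel off $f(0)$, handle the remaining smooth factor $\chi_0$ by classical stationary phase, and reduce the rough factor to a single integration by parts via the factorization $t=\psi'(t)/m(t)$, which is precisely where the non-degeneracy $\psi''(0)\neq0$ and the uniqueness of the critical point on the support are needed.
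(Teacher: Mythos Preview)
Your approach coincides with the paper's: both split $f(t)=f(0)+tg(t)$ with $g(t)=\int_0^1 f'(\theta t)\,d\theta$, apply the standard stationary phase to the $f(0)$-piece (whose amplitude $\chi_0$ is $C^\infty$), and handle the $tg$-piece by a single integration by parts. The only difference is that the paper first performs a Morse change of variable (depending only on $\psi$) reducing to $\psi=\pm\mu t^2/2$, so that $t\,e^{i\psi/h}$ is a constant multiple of $\frac{d}{dt}e^{i\psi/h}$; you instead keep the general $\psi$ and use the factorization $t=\psi'(t)/m(t)$ with $m(t)=\psi'(t)/t\in C^\infty$ non-vanishing on ${\rm Supp}\,\chi_0$, which is a legitimate replacement for the Morse step.

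What the Morse reduction buys is precisely the point you flagged: for a pure quadratic phase and a cutoff that is identically $1$ near $0$, one has $\int e^{\pm i\mu t^2/(2h)}\chi_0\,dt=\mu^{-1/2}e^{\pm i\pi/4}\sqrt{2\pi h}+\cO(h^\infty)$, because all subleading stationary-phase corrections are derivatives of $\chi_0$ at $0$ and hence vanish. Thus the paper's residual $|f(0)|$-error is $\cO(h^\infty)$ rather than your $\cO(h^{3/2})$, and the formula \eqref{3.4} holds as written modulo a term $f(0)\,\cO(h^\infty)$ that the paper tacitly absorbs. Your observation that a term proportional to $|f(0)|$ is strictly speaking missing from \eqref{3.4} is correct; the Morse step is what pushes it to infinite order.
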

\begin{proof} First of all, by a smooth change of variable (depending only on $\psi$), we can assume that $\psi =\pm \mu t^2/2$ with $\mu >0$ constant. Then, writing,
$$
f(t) =f(0)+tg(t)
$$
with,
$$
g(t):=\int_0^1 f'(\theta t)d\theta,
$$
we obtain,
$$
\int e^{i\psi/h}\chi_0fdt =f(0)\int e^{\pm i\mu t^2/2h}\chi_0dt \pm\frac{h}{i\mu } \int 
\frac{d}{dt}(e^{\pm i\mu t^2/2h})\chi_0gdt.
$$
By the standard stationary-phase theorem (see, e.g., \cite{Ma}), we have,
\be
\label{standstatphase}
\int e^{\pm i\mu t^2/2h}\chi_0(t)dt=\mu^{-\frac12}e^{\pm i\frac{\pi}4}\sqrt{2\pi h}+\cO(h^\infty),
\ee
 and thus, by an integration by parts, we obtain,
\be
\label{intpart}
\int e^{i\psi/h}\chi_0fdt =f(0)(\mu^{-\frac12}e^{\pm i\frac{\pi}4}\sqrt{2\pi h}+\cO(h^\infty)) \pm \frac{ih}{\mu}\int e^{\pm i\mu t^2/2h}\frac{d}{dt}(\chi_0g)dt,
\ee
where the $\cO(h^\infty)$ does not depend on $f$. Then, the result follows from the fact that $\sup_{{\rm Supp}\chi_0}(|g|+|g'|)\leq \sup_K(|f'|+|f''|)$.
\end{proof}
\begin{remark}
\label{rem3.5}
The formula \eq{3.4} stays valid when the integration is restricted to the half line $\R_-$ or $\R_+$ just by replacing the first term of the RHS with its half.
To see this, it is enough to check, instead of \eq{standstatphase},  that
$$
\int_{\R_\pm} e^{\pm i\mu t^2/2h}\chi_0(t)dt=\frac 12\mu^{-\frac12}e^{\pm i\frac{\pi}4}\sqrt{2\pi h}+\cO(h^\infty),
$$
(see also Lemma \ref{statphase2} for more general cases) and that the endpoint term $\frac{ih}\mu g(0)$ arising from the integration by parts is of $\ord(h)\sup_K|f'|$.
\end{remark}

\begin{lemma}\sl 
\label{phasestat2}
Let  $I\subset \R$ be an open interval containing 0, and $\psi\in C^\infty(\R;\R)$ admitting 0 as unique stationary point in $\bar I$ with $\psi''(0)\not=0$. Then, one has,
for $u\in C_0^1(I)$,
$$
\int e^{i\psi (t)/h}u (t)dt =\cO(\sqrt{h})\sup (|u|+|u'|),
$$
uniformly with respect to $h>0$ small enough.
\end{lemma}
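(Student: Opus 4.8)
The plan is to reduce everything to a uniform $\cO(\sqrt h)$ bound on a primitive of $e^{i\psi/h}$, and then to spend only \emph{one} integration by parts on $u$, which is all that its $C^1$ regularity allows. Fix $t_0\in\R$ to the left of $\operatorname{supp}u$ and set $F(t):=\int_{t_0}^t e^{i\psi(s)/h}\,ds$. Once one knows that $\sup_{t\in\operatorname{supp}u}|F(t)|=\cO(\sqrt h)$, then since $u\in C_0^1(I)$ an integration by parts with no boundary contribution gives
$$\int e^{i\psi/h}u\,dt=\int F'(t)u(t)\,dt=-\int F(t)u'(t)\,dt,$$
whence $\bigl|\int e^{i\psi/h}u\,dt\bigr|\le \sup_{\operatorname{supp}u}|F|\cdot\|u'\|_{L^1}=\cO(\sqrt h)\sup(|u|+|u'|)$, the implied constant depending only on $\psi$, $I$ and $\operatorname{supp}u$. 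This is the whole point of the device: the ``missing'' second integration by parts is carried out once and for all inside a $u$-independent estimate on $F$.

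It then remains to prove $\sup_{\operatorname{supp}u}|F|=\cO(\sqrt h)$, which is van der Corput's second-derivative estimate; I would prove it by hand for self-containedness. Choose $\delta>0$ small with $\{|t|\le\delta\}\subset I$ and, say, $\psi''\ge\lambda>0$ on $[-\delta,\delta]$ (the case $\psi''\le-\lambda$ is identical), so that $\psi'$ is strictly increasing there. On the compact ``far'' set $\operatorname{supp}u\setminus(-\delta,\delta)$ one has $|\psi'|\ge c_0>0$ because $0$ is the only stationary point of $\psi$ in $\bar I$, and a single integration by parts shows that the contribution of each such interval to any value of $F$ is $\cO(h)$. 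On any subinterval of $[-\delta,\delta]$, I split the integration domain into $\{|\psi'|\le\sqrt{\lambda h}\}$, which by monotonicity of $\psi'$ is an interval of length $\le 2\sqrt{h/\lambda}$ and is handled by the trivial bound, and the at most two intervals where $|\psi'|>\sqrt{\lambda h}$, on each of which an integration by parts together with $\int|\psi''|/(\psi')^2=\bigl[-1/|\psi'|\bigr]\le(\lambda h)^{-1/2}$ yields $\cO(\sqrt{h/\lambda})$. Patching the near and far pieces (the far ones being even $\cO(h)$) gives $\sup_{\operatorname{supp}u}|F|=\cO(\sqrt h)$, as required.

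The main obstacle — and the reason this statement is worth isolating — is exactly the low regularity of $u$: for $u\in C^2$ one would just integrate by parts twice near $0$, whereas here only one integration by parts against $u$ is permitted and a single integration by parts by itself does not produce the $\sqrt h$ gain. Routing through $F$ resolves this, because the estimate on $F$ involves no derivative of $u$ at all and its only delicate ingredient is the elementary length bound on $\{|\psi'|\le\sqrt{\lambda h}\}$, which is where $\psi''(0)\neq0$ enters. Finally, exactly as in Remark \ref{rem3.5}, the same bound holds when the integral is taken over $\R_+$ or $\R_-$: one simply takes $F$ to vanish at $0$, so that the integration by parts still produces no boundary term, while the estimate $\sup|F|=\cO(\sqrt h)$ is unaffected.
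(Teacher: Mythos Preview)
Your proof is correct and takes a genuinely different route from the paper's. The paper mollifies $u$ at scale $\sqrt h$, setting $f(t)=(2\pi h)^{-1/2}\int e^{-(t-s)^2/2h}u(s)\,ds$; this gives $\sup|f-u|=\cO(\sqrt h)\sup|u'|$, and since $f$ is smooth one may apply Lemma~\ref{phasestat1} to $\chi_0 f$, the point being that $\sup|f'|\le\sup|u'|$ while $\sup|f''|\le Ch^{-1/2}\sup|u'|$, so the $\cO(h)\sup(|f'|+|f''|)$ remainder from that lemma becomes $\cO(\sqrt h)\sup|u'|$. Your argument instead absorbs the oscillatory gain into the $u$-independent primitive $F$ via van der Corput's second-derivative bound and then spends the single available integration by parts on $u$.

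Your approach is more elementary and self-contained: it does not rely on Lemma~\ref{phasestat1} (hence not on the standard stationary-phase theorem) and the half-line variant in Remark~\ref{rem3.7} falls out immediately by choosing $F(0)=0$. The paper's approach, on the other hand, reuses the infrastructure already in place and makes transparent why the loss relative to the $C^2$ case is exactly $\sqrt h$ (it comes from the mollification scale). One small point: as written your constant depends on $\operatorname{supp}u$ through the choice of $t_0$ and of $c_0$ on the far set; to match the paper's uniformity over $u\in C_0^1(I)$ (with $I$ bounded, as the paper implicitly assumes when it takes $\chi_0\equiv1$ on $I$), simply take $t_0$ to be the left endpoint of $I$ (or $t_0=0$) and use that $|\psi'|$ is bounded below on $\bar I\setminus(-\delta,\delta)$.
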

\begin{proof} 
Setting,
$$
f(t):= \frac1{\sqrt{2\pi h}}\int e^{-(t-s)^2/2h}u(s)ds,
$$
we have,
$$
f(t)-u(t)=\frac1{\sqrt{2\pi h}}\int e^{-(t-s)^2/2h}(u(s)-u(t))ds,
$$
and thus, since $|u(s)-u(t)| \leq |s-t|\sup |u'|$ and $\int e^{-(s-t)^2/2h}|s-t|ds = 2h$,
$$
|f(t)-u(t)|\leq\sqrt{\frac{2h}{\pi}} \hskip 0.1cm \sup |u'|.
$$
As a consequence, fixing $\chi_0\in C_0^\infty (\R;[0,1])$ with $\chi_0=1$ on $I$ (so that $\chi_0 u =u$), we have,
\be
\label{estf1}
\left|\int e^{i\psi (t)/h}u (t)dt - \int e^{i\psi (t)/h}\chi_0(t)f (t)dt\right| \leq  |I|\sqrt{\frac{2h}{\pi}} \hskip 0.1cm \sup |u'|.
\ee
On the other hand, applying Lemma \ref{phasestat1} to $f$, we have,
\be
\label{estf2}
\int e^{i\psi (t)/h}\chi_0(t)f (t)dt =f(0)\sqrt{\frac{2\pi h}{|\psi''(0)|}}  + \cO(h)\sup(|f'|+|f''|).
\ee
Now, we can write,
$$
\begin{aligned}
 & f'(t) =\frac1{\sqrt{2\pi h}}\int e^{-(t-s)^2/2h}u'(s)ds;\\
 & f''(t) = \frac{-h^{-1}}{\sqrt{2\pi h}}\int e^{-(t-s)^2/2h}(t-s)u'(s)ds.
\end{aligned}
$$
Therefore,
$$
|f'(t)| \leq \sup |u'|,
$$
and,
$$
|f''(t)| \leq \frac2{\sqrt{2\pi h}}\sup |u'|.
$$
Since also $|f(0)|\leq \sup|u|$, the result follows form (\ref{estf1})-(\ref{estf2}).
\end{proof}
\begin{remark}
\label{rem3.7}
As in Remark \ref{rem3.5}, this lemma remains true when the integration is restricted to a half line $\R_\pm$.
\end{remark}

Applying the previous lemma and remark with 
$$\psi = \pm (\nu_1- \nu_2),\quad u (t)=h^{\frac13}\frac{a (t)}{(t-b(E))^{1/4}}
w_\pm (t),
$$ we obtain,
$$
B_{-,2}=\cO(h^{\frac56})\sup_{{\rm Supp \chi}} (| w_\pm|+|w_\pm'|),
$$
and thus, by (\ref{estw(t)}) and the fact that, on ${\rm Supp}\chi$,  $w_\pm'(t)=\cO(h^{\frac16})v(t)$,
\be
\label{estB-2}
B_{-,2}=\cO(h)\sup_{I_L}|v|.
\ee
Gathering (\ref{estA11-1}), (\ref{CC+c-}), \eqref{estR(t)}, \eqref{estB+}, \eqref{estB-1} and \eqref{estB-2}, and using the fact that $u_{1,L}^-$ is uniformly bounded on $I_L$, we obtain,
\be
\sup_{(-\infty, \re b(E)]} |A_{1,1}^\pm| =\cO(h)\sup_{I_L}|v|.
\ee

Now, when $x\in[\re b(E), 0]$, the only difference is that $A_{1,1}^\pm(x)$ must be written as,
$$
\begin{aligned}
A_{1,1}^\pm(x)=u_{1,L}^+(x)\int_{\re b(E)}^x(W_1u_{1,L}^-)(t)u_{2,L}^\pm(t)w_\pm (t)dt\\
 + u_{1,L}^-(x)\int_x^{0}(W_1u_{1,L}^+)(t)u_{2,L}^\pm(t)w_\pm (t)dt,
\end{aligned}
$$
but since $u_{1,L}^+$ is uniformly bounded on $[\re b(E), 0]$, the previous argument works again for each of the two terms, as long as $x$ remains away from the critical point of $\nu_1-\nu_2$ (in this case, it suffices to choose $\chi$ in such a way that $x\notin {\rm Supp} \chi$). When $x$ is closed to 0, there are two changes in the proof of Lemma \ref{phasestat1}. The first one is that an extra-term appears in (\ref{intpart}), the boundary term $\mp ih\mu^{-1}e^{\pm i\mu \tilde x^2/2h}\chi_0(\tilde x)g(\tilde x) $ (where $\tilde x$ is the value of $x$ after the change of variable that transforms $\psi$ into $\pm \mu t^2/2$), but this term is clearly $\cO(h)\sup_K|f'|$. The other change concerns (\ref{standstatphase}), since we now have to estimate $\int_{\pm t\geq \pm\tilde x} e^{\pm i\mu t^2/2h}\chi_0(t)dt$. We need to prove,

\begin{lemma}\sl 
\label{statphase2}
For any $\mu>0$ independent of $h$, $a\in\R$ (which may depend on $h$) and $\chi_0$ as in Lemma \ref{phasestat1}, one has,
$$
\int_{\pm t\geq \pm a} e^{\pm i\mu t^2/2h}\chi_0(t)dt=\cO(\sqrt{h}).
$$
uniformly with respect to $h>0$ small enough.
\end{lemma}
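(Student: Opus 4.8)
The plan is to integrate by parts once against the cut-off $\chi_0$, after first showing that the incomplete quadratic oscillatory integral $\int_a^t e^{\pm i\mu\tau^2/(2h)}\,d\tau$ is $\cO(\sqrt h)$ uniformly in $a$ and $t$ — this is where the gain $\sqrt h$ comes from — the latter bound following by rescaling to a truncated Fresnel integral, which is uniformly bounded.

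By complex conjugation and the substitution $t\mapsto -t$ (note $\chi_0$ is real-valued and $\chi_0(-\cdot)$ is again of the required form), it suffices to treat the upper sign. Set $\Phi_a(t):=\int_a^t e^{i\mu\tau^2/(2h)}\,d\tau$; the substitution $\tau=\sqrt{2h/\mu}\,\sigma$ gives $\Phi_a(t)=\sqrt{2h/\mu}\,\bigl(F(t\sqrt{\mu/(2h)})-F(a\sqrt{\mu/(2h)})\bigr)$ with $F(x):=\int_0^x e^{i\sigma^2}\,d\sigma$. The function $F$ is bounded on $\R$: it is continuous, and an integration by parts on $[1,x]$ (writing $e^{i\sigma^2}=\tfrac1{2i\sigma}\tfrac{d}{d\sigma}e^{i\sigma^2}$) together with the convergence of $\int_1^{+\infty}\sigma^{-2}\,d\sigma$ shows that $F$ has finite limits as $x\to\pm\infty$. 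Hence $M_0:=\sup_{\R}|F|<+\infty$ and
\[
\sup_{t\in\R}|\Phi_a(t)|\le 2M_0\sqrt{2h/\mu}=\cO(\sqrt h),
\]
uniformly with respect to $a\in\R$ and $h>0$ small. (Alternatively, one could invoke van der Corput's lemma directly for the phase $\mu t^2/2$.)

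It then remains to integrate by parts. Since $\Phi_a'(t)=e^{i\mu t^2/(2h)}$, $\Phi_a(a)=0$ and $\chi_0$ has compact support,
\[
\int_{t\ge a}e^{i\mu t^2/(2h)}\chi_0(t)\,dt=\bigl[\Phi_a(t)\chi_0(t)\bigr]_a^{+\infty}-\int_a^{+\infty}\Phi_a(t)\chi_0'(t)\,dt=-\int_a^{+\infty}\Phi_a(t)\chi_0'(t)\,dt,
\]
which is $\cO(\sqrt h)\,\|\chi_0'\|_{L^1(\R)}=\cO(\sqrt h)$, uniformly as claimed. The argument presents no genuine obstacle; the one point deserving attention is the uniformity in $a$, which is guaranteed precisely because the Fresnel bound on $\Phi_a$ is insensitive to the position of $a$ (and because the boundary term at $t=a$ vanishes).
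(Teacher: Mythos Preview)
Your proof is correct, and it follows a genuinely different route from the paper's. The paper argues by splitting the domain of integration at a distance of order $h^{\sigma}$ from the critical point $t=0$: on the small piece it uses the trivial bound $\cO(h^{\sigma})$, and on the remaining piece it writes $e^{\pm i\mu t^{2}/2h}=\tfrac{\pm h}{i\mu t}\tfrac{d}{dt}e^{\pm i\mu t^{2}/2h}$ and integrates by parts, picking up $\cO(h^{1-\sigma})$; the choice $\sigma=\tfrac12$ balances the two contributions. You instead encapsulate the entire oscillatory gain in the single observation that the truncated Fresnel integral $\Phi_{a}(t)=\int_{a}^{t}e^{i\mu\tau^{2}/(2h)}\,d\tau$ is $\cO(\sqrt{h})$ uniformly in $a$ and $t$ (by rescaling to the bounded function $F(x)=\int_{0}^{x}e^{i\sigma^{2}}\,d\sigma$, or equivalently by van der Corput), and then a single integration by parts against $\chi_{0}'$ finishes. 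Your argument is cleaner and more conceptual, and in fact yields the slightly more precise bound $2M_{0}\sqrt{2/\mu}\,\|\chi_{0}'\|_{L^{1}}\sqrt{h}$; the paper's splitting has the advantage of being entirely self-contained, with no appeal to the Fresnel integral in the background.
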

\begin{proof} We treat the case $t\geq a$ only (the other one being similar), and we can assume that $a\in {\rm Supp} \,\chi_0$ (otherwise we already know that the integral is $\cO(\sqrt{h})$). We fix some $\sigma >0$ and, if $a\geq -h^\sigma$, we write,
$$
\begin{aligned}
\int_{ t\geq a} e^{\pm i\mu t^2/2h}\chi_0(t)dt & =\int_a^{a+2h^\sigma} e^{\pm i\mu t^2/2h}\chi_0(t)dt+\int_{ t\geq a+2h^\sigma} e^{\pm i\mu t^2/2h}\chi_0(t)dt\\
& = \cO(h^\sigma) \pm \frac{h}{i\mu }\int_{ t\geq a+2h^\sigma} \frac{d}{dt}\left( e^{\pm i\mu t^2/2h}\right)\frac{\chi_0(t)}{t}dt\\
& = \cO(h^\sigma +h^{1-\sigma}) \pm \frac{ih}{\mu }\int_{ t\geq a+2h^\sigma} e^{\pm i\mu t^2/2h}\frac{d}{dt}\left( \frac{\chi_0(t)}{t}\right)dt\\
& =  \cO(h^\sigma+h^{1-\sigma}) +\cO(h)\int_{ a+2h^\sigma}^{+\infty} \frac{dt}{t^2}\\
& =  \cO(h^\sigma+h^{1-\sigma}).
\end{aligned}
$$
If $a\leq-h^{\sigma}$, we write,
$$
\begin{aligned}
\int_{ t\geq a} e^{\pm i\mu t^2/2h}\chi_0(t)dt & =\int_a^{-h^\sigma} e^{\pm i\mu t^2/2h}\chi_0(t)dt+\int_{ t\geq -h^\sigma} e^{\pm i\mu t^2/2h}\chi_0(t)dt\\
& =\int_a^{-h^\sigma} e^{\pm i\mu t^2/2h}\chi_0(t)dt +\cO(h^\sigma+h^{1-\sigma})\\
& = \pm \frac{h}{i\mu }\int_a^{-h^\sigma} \frac{d}{dt}\left( e^{\pm i\mu t^2/2h}\right)\frac{\chi_0(t)}{t}dt+\cO(h^\sigma+h^{1-\sigma})\\
& =\cO(h^\sigma+h^{1-\sigma}),
\end{aligned}
$$
where the last estimates comes again from an integration by parts. Taking $\sigma =\frac12$, the result follows.
\end{proof}
 
Hence, the estimate remains the same as in Lemma \ref{phasestat2}, and we finally obtain,
\be
\label{finalestA11}
\sup_{I_L} |A_{1,1}|=\cO(h)\sup_{I_L}|v|.
\ee

Then, the required estimate on the norm of $h^2K_{1,L}WK_{2,L}W^*$ follows from \eqref{K1K2},  \eqref{K1K2bis},  \eqref{estA4}, \eqref{estA3},  \eqref{estA2},  \eqref{finalestA10},  \eqref{finalestA12},  \eqref{estfinaleA13} and  \eqref{finalestA11}. The same arguments also apply to $h^2K_{2,L}W^*K_{1,L}W$, and this proves \eqref{estnormM1L}.

Concerning \eqref{estnormK2L},  by using \eqref{noyaux} we have,
$$
\begin{aligned}
hK_{2,L}W^*v(0)  =\cO(h^{-\frac13}) \int_{-\infty}^{0} & \widetilde U_2(0,t) v(t) dt\\
&+\cO(h^{\frac23}){r_1(0)}u_{2,L}^-(0)u_{2,L}^+(0)v(0),
\end{aligned}
$$
and thus, using the fact that 
$$
|u_{2,L}^\pm (t)| + |W_2u_{2,L}^\pm (t)| =\cO(h^{\frac16})|t-b(E)|^{-\frac14}e^{- \frac1{h}\int^{b(E)}_t \sqrt{(V_2(s)-E)_+}ds},
$$
we obtain,
$$
\begin{aligned}
hK_{2,L}W^*v(0)  &=\cO(h^{-\frac13}) \int_{-\infty}^{0}  \widetilde U_2(0,t) v(t) dt +\cO(h)\sup_{I_L}|v|\\
&= \cO(h^{-\frac16})\int_{-\infty}^0W_2u_{2,L}^-(t)v(t) dt+\cO(h)\sup_{I_L}|v| \\
&= \cO(\sup_{I_L}|v|).
\end{aligned}
$$
Similar arguments hold for $hK_{1,L}Wv(0)$,  and Proposition \ref{PropK1K2} follows.
\end{proof}

 \subsection{Fundamental solutions on $I_R^\theta$}

Exactly as in \cite{FMW1}, the same constructions hold on $I_R^\theta$, and lead to fundamental solutions
$$K_{j,R}\, :\, C^0_b(I_R^\theta) \to C^2_b(I_R^\theta)\qquad (j=1,2),$$
of $P_j - E$ on $I_R^\theta$. Then, the same arguments as in the previous section also give,

\begin{proposition}\sl
\label{PropK1RK2R} One has,
\be
\label{estnormM1R}
\pal h^2K_{2,R}W^*K_{1,R}W \pal_{{\mathcal L}(C^0_b(I_R^\theta))} +\pal h^2K_{1,R}WK_{2,R}W^* \pal_{{\mathcal L}(C^0_b(I_R^\theta))} = \ord (h^{\frac1{3}}),
\ee
\be
\label{estnormK2R}
|hK_{1,R}W v(0)|+|hK_{2,R}W^* v(0)| = \ord (\sup_{I_R^\theta}|v|),
\ee
uniformly with respect to $v\in C_b^0(I_R^\theta)$ and $h>0$ small enough.
\end{proposition}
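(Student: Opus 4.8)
The plan is to transcribe the proof of Proposition \ref{PropK1K2}, since the geometry on $I_R^\theta$ is of the same type. Here $V_1$ has a turning point at $c(E)$ with $\re c(E)>0$, while $V_2$ stays classically allowed on all of $[0,+\infty)$, so the oscillatory regions of the underlying solutions $u_{1,R}^\pm$ and $u_{2,R}^\pm$ overlap on (essentially) $[0,\re c(E)]$, with the crossing point $x=0$ now at the \emph{left} endpoint of that overlap. Since $f\equiv 0$ on $[b,x_\infty]$ with $x_\infty>c$, this whole overlap lies on the real axis, where the phases $\nu_1,\nu_2$ are real; hence Lemmas \ref{phasestat1}, \ref{phasestat2} and \ref{statphase2} apply without change, and for $x\ge x_\infty$ the distortion estimate \eqref{decItheta} together with the exponential growth/decay of $u_{j,R}^\pm$ leaves ample room. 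Only the bookkeeping of the turning points and the left/right orientation has to be adapted.

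Concretely, I would introduce the distributional kernels $U_1^R,U_2^R$ of $h^2\W_{1,R}K_{1,R}W$ and $h^2\W_{2,R}K_{2,R}W^*$, integrate by parts as in \eqref{noyaux} to split off a boundary term supported at the common endpoint $x=0$ (carrying a factor $hr_1(0)$), expand $h^2K_{1,R}WK_{2,R}W^*v$ as a double integral against $U_1^R(x,t)U_2^R(t,s)$, and obtain the analogues $A_1,\dots,A_4$. The terms $A_2,A_3,A_4$ containing a $\delta$-factor are handled exactly as before, via the analogue of Lemma \ref{lemsigma} (now with the roles of $b(E)$ and $c(E)$ interchanged). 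For the core term $A_1^R$, I would split the $t$-integral at $\re c(E)$: on $[\re c(E),+\infty)$ the exponential decay of $u_{1,R}^\pm$ past $c(E)$, together with \eqref{decItheta}, kills the contribution as in \eqref{finalestA10}; on $[0,\re c(E)]$ I would insert the WKB expansion of $u_{1,R}^\pm$ for $t\le\re c(E)-\lambda h^{2/3}$ (with the Airy correction analogous to \eqref{approxYaf} handling the layer of width $\lambda h^{2/3}$ at $c(E)$) and the oscillatory bounds for $u_{2,R}^\pm$, reducing to $B_+$-type integrals with phase $\pm(\nu_1+\nu_2)/h$ (killed by one non-stationary integration by parts, since $\re(\nu_1'+\nu_2')\ge 1/C$) and $B_-$-type integrals with phase $\pm(\nu_1-\nu_2)/h$ (stationary only at $x=0$, treated by a cut-off splitting and Lemma \ref{phasestat2}, or Lemmas \ref{phasestat1} and \ref{statphase2} when the parameter $x$ is near $0$). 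Summing everything and using the bound on the prefactor $1/(h^2\W_{1,R}\W_{2,R})$ yields the $\cO(h^{1/3})$ estimate exactly as on $I_L$; interchanging the roles of $W$ and $W^*$ gives the same for $h^2K_{2,R}W^*K_{1,R}W$, proving \eqref{estnormM1R}. Finally, \eqref{estnormK2R} follows by the same reduction as \eqref{estnormK2L}: the explicit kernel reduces $hK_{2,R}W^*v(0)$ to a constant multiple of $\int W_2u_{2,R}^-(t)v(t)\,dt$ plus $\cO(h)\sup|v|$, and that integral is $\cO(\sup|v|)$ by the decay and oscillation bounds on $W_2u_{2,R}^-$; symmetrically for $hK_{1,R}Wv(0)$.

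The step I expect to be the main obstacle is making this transcription faithful at the crossing point $x=0$: on $I_R^\theta$ the stationary point of $\nu_1-\nu_2$ coincides with the endpoint of the interval \emph{and} with the point where the fundamental solution changes its analytic form, so one must use the half-line versions of the stationary phase (Remarks \ref{rem3.5} and \ref{rem3.7}) and check that the resulting half leading terms, together with the endpoint terms produced by the integrations by parts and the WKB/Airy matching at $c(E)$, are all $\cO(h)\sup|v|$ and do not conspire to spoil the final bound. Once that is verified, the remainder is a routine rewriting of the $I_L$ argument.
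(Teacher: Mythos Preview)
Your proposal is correct and follows exactly the paper's approach: the paper's entire proof of this proposition is the sentence ``Then, the same arguments as in the previous section also give'' the result, i.e., a direct transcription of the $I_L$ argument to $I_R^\theta$. Your outline is in fact more detailed than what the paper provides, and your identification of the one genuinely new feature --- that the unique stationary point $x=0$ of $\nu_1-\nu_2$ now sits at the endpoint of the interval, forcing the use of the half-line stationary-phase statements (Remarks~\ref{rem3.5} and~\ref{rem3.7} and Lemma~\ref{statphase2}) --- is exactly the point that needs attention.
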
 

\subsection{Solutions of the system on $I_L$ and $I_R^\theta$}

Following \cite[Section 4]{FMW1}, we set,
$$\begin{aligned}
& M_L:= h^2K_{1,L}WK_{2,L}W^* \quad ; \quad \widetilde M_L:=h^2K_{2,L}W^*K_{1,L}W\\
& M_R:=h^2K_{2,R}W^*K_{1,R}W \quad ; \quad \widetilde M_R:=h^2K_{1,R}WK_{2,R}W^*
\end{aligned}
$$
and, thanks to Propositions \ref{PropK1K2} and \ref{PropK1RK2R}, we see that the convergent series given by,
\be
\label{wL}
\begin{aligned}
& w_{1,L}:=\left(\begin{array}{c} \sum_{j\geq 0}M_L^ju_{1,L}^-\\-hK_{2,L} W^*\sum_{j\geq 0}M_L^ju_{1,L}^-\end{array}\right);\\
& w_{2,L}:=\left(\begin{array}{c} -hK_{1,L}W\sum_{j\geq 0}\widetilde M_L^j u_{2,L}^-\\ \sum_{j\geq 0}\widetilde M_L^ju_{2,L}^-,\end{array}\right)
\end{aligned}
\ee
are solutions to \eqref{sch} on $I_L$, while the convergent series,
\be
\label{wR}
\begin{aligned}
& w_{1,R}:=\left(\begin{array}{c} \sum_{j\geq 0}\widetilde M_R^ju_{1,R}^-\\
-hK_{2,R} W^*\sum_{j\geq 0}\widetilde M_R^ju_{1,R}^-
\end{array}\right);\\
& w_{2,R}:=\left(\begin{array}{c} 
-hK_{1,R} W\sum_{j\geq 0}M_R^ju_{2,R}^-\\ 
\sum_{j\geq 0}M_R^ju_{2,R}^-
\end{array}\right)
\end{aligned}
\ee
(where $u_{j,R}^-$ are the solutions to $(P_j-E)u_{j,R}^-=0$ constructed in \cite[Appendix 2]{FMW1}) 
are solutions to \eqref{sch} on $I_R^\theta$. In addition, one also has (see \cite[Proposition 4.1]{FMW1}),
\be
w_{j,L} \in L^2(I_L)\oplus L^2(I_L)\quad ; \quad w_{j,R} \in L^2(I_R^\theta)\oplus L^2(I_R^\theta).
\ee

\section{Existence and location of resonances}

The four solutions constructed in the previous section permits us to write the quantization condition that determines the resonances of $P$ in ${\mathcal D}_h(\delta_0,C_0)$ as,
\be
\label{condquant}
 {\mathcal W}_0(E)=0,
\ee
where ${\mathcal W}_0(E):={\mathcal W}(w_{1,L}, w_{2,L}, w_{1,R}, w_{2,R})$ stands for the Wronskian of $w_{1,L}$, $w_{2,L}$, $w_{1,R}$ and $w_{2,R}$.
Since this Wronskian is constant with respect to $x$, we plan to compute it at $x=0$. We first show,
\begin{proposition}\sl
\label{w(0)}
For $S=L,R$, one has,
$$
\begin{aligned}
& w_{1,S}(0) = \left(\begin{array}{c} u_{1,S}^-(0)\\0\end{array}\right)+\cO(h^{\frac13})\quad ; \quad w_{1,S}'(0)= \left(\begin{array}{c} (u_{1,S}^-)'(0)\\0\end{array}\right)+\cO(h^{-\frac23}); \\
& w_{2,S}(0) = \left(\begin{array}{c} 0 \\ u_{2,S}^-(0)\end{array}\right)+\cO(h^{\frac13})\quad ; \quad w_{2,S}'(0)= \left(\begin{array}{c} 0 \\ (u_{2,S}^-)'(0)\end{array}\right)+\cO(h^{-\frac23}).
\end{aligned}
$$
\end{proposition}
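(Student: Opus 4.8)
The plan is to evaluate each series in \eqref{wL}–\eqref{wR} termwise at $x=0$ and show that all correction terms beyond the leading scalar solution are small, using the estimates of Propositions \ref{PropK1K2} and \ref{PropK1RK2R}. I would treat $S=L$ in detail; the case $S=R$ is identical thanks to Proposition \ref{PropK1RK2R}. Write $w_{1,L}=\bigl({}^t\!(\Phi_L,\Psi_L)\bigr)$ with $\Phi_L:=\sum_{j\geq 0}M_L^j u_{1,L}^-$ and $\Psi_L:=-hK_{2,L}W^*\Phi_L$. Since $u_{1,L}^-$ is uniformly bounded on $I_L$ (by \eqref{u1L-} and the decay at $-\infty$) and $\pal M_L\pal_{\cL(C^0_b(I_L))}=\ord(h^{1/3})$ by \eqref{estnormM1L}, the Neumann series converges in $C^0_b(I_L)$ and $\Phi_L=u_{1,L}^-+M_L(\sum_{j\geq0}M_L^j u_{1,L}^-)=u_{1,L}^-+\ord(h^{1/3})$ uniformly on $I_L$; in particular $\Phi_L(0)=u_{1,L}^-(0)+\ord(h^{1/3})$. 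For the second component, apply \eqref{estnormK2L} to $v=\Phi_L$: this gives $|\Psi_L(0)|=|hK_{2,L}W^*\Phi_L(0)|=\ord(\sup_{I_L}|\Phi_L|)=\ord(1)$. To get that it is in fact $\ord(h^{1/3})$, I would instead feed into \eqref{estnormK2L} the function $v=\Phi_L-u_{1,L}^-=\ord(h^{1/3})$, obtaining $hK_{2,L}W^*(\Phi_L-u_{1,L}^-)(0)=\ord(h^{1/3})$; combined with the bound $hK_{2,L}W^*u_{1,L}^-(0)=\ord(h^{1/3})$ — which follows from the argument producing \eqref{estnormK2L} applied with $v=u_{1,L}^-$, noting that $u_{1,L}^-$ has the same oscillatory/exponential profile on $I_L$ used there — we get $\Psi_L(0)=\ord(h^{1/3})$. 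This establishes the first line of the proposition for $w_{1,L}(0)$, and the computation for $w_{2,L}(0)$ is symmetric with $M_L$ replaced by $\widetilde M_L$ and the roles of the two components swapped.

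For the derivatives I would differentiate the integral representation \eqref{eq1} of $K_{j,L}$ directly. Differentiating $K_{1,L}WK_{2,L}W^*$ and the other compositions, the derivative falls on $u_{j,L}^\pm(x)$, which on the oscillatory region costs a factor $h^{-1}$ but comes with the small amplitudes from \eqref{estnormM1L}; more precisely, one shows $\pal \frac{d}{dx}M_L\pal_{C^0_b\to C^0_b}=\ord(h^{1/3-1})=\ord(h^{-2/3})$, and likewise $\frac{d}{dx}(hK_{2,L}W^*)$ evaluated at $0$ gains at most $h^{-1}$ relative to \eqref{estnormK2L}, i.e. is $\ord(h^{-1})$ — actually $\ord(h^{-2/3})$ after the same cancellation as above, using that $(u_{1,L}^-)'(0)=\ord(h^{-1})$ is the dominant scale. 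Since the leading derivative term $(u_{1,L}^-)'(0)$ is itself only $\ord(h^{-1})$ (indeed $\ord(h^{-5/6})$ in the oscillatory regime), the corrections $\frac{d}{dx}M_L(\cdots)=\ord(h^{-2/3})$ are comparatively of lower order, and I would state the error as $\ord(h^{-2/3})$ to be safe. Thus $w_{1,L}'(0)={}^t\!\bigl((u_{1,L}^-)'(0),0\bigr)+\ord(h^{-2/3})$, and symmetrically for $w_{2,L}'(0)$.

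The main obstacle is the second component of $w_{1,S}'(0)$, namely controlling $\frac{d}{dx}\bigl(hK_{2,L}W^*\Phi_L\bigr)(0)$: a naive bound via the kernel $U_2$ would only give $\ord(h^{-1})$ times $\sup|\Phi_L|=\ord(1)$, which is not small enough to beat the $h^{-2/3}$ claimed. The way around it is the same subtraction trick as for the values: split $\Phi_L=u_{1,L}^-+(\Phi_L-u_{1,L}^-)$; on the smallness part use $\sup|\Phi_L-u_{1,L}^-|=\ord(h^{1/3})$ together with the $\ord(h^{-1})$ cost of the derivative to get $\ord(h^{-2/3})$, and on the $u_{1,L}^-$ part redo the integration-by-parts computation that led to \eqref{estnormK2L} — the one exhibiting the cancellation between $\widetilde U_2$ and the $\delta_{t=0}$ term — now with one extra $x$-derivative on $u_{2,L}^\pm(0)$, which contributes a bounded factor times $h^{-1}$ and, after the same oscillatory cancellations on $[\re b(E),0]$, leaves $\ord(h^{-2/3})$. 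I would also remark that the symmetric statements on $I_R^\theta$ follow verbatim because the analytic distortion only affects the region $x\gg 1$, far from $x=0$, so all pointwise estimates at $0$ are unchanged.
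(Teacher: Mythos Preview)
Your overall plan (Neumann series, the subtraction $\Phi_L = u_{1,L}^- + (\Phi_L - u_{1,L}^-)$, then handling the two pieces separately) matches the paper's, but there is a genuine gap at the step where you assert $hK_{2,L}W^*u_{1,L}^-(0)=\ord(h^{1/3})$. The argument that produces \eqref{estnormK2L} treats $v$ as a generic element of $C^0_b(I_L)$ and uses only the absolute-value bound $|W_2 u_{2,L}^-(t)|=\cO(h^{1/6}|t-b(E)|^{-1/4})$ on the oscillatory interval; no oscillation of $v$ is ever exploited. Feeding $v=u_{1,L}^-$ into that argument therefore yields only $\cO(\sup_{I_L}|u_{1,L}^-|)=\cO(1)$ (at best $\cO(h^{1/6})$ if you restrict to the interior of the well), which is not the $\cO(h^{1/3})$ you need.

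The missing ingredient is a separate oscillatory-integral estimate. On $[\re b(E),0]$ both $u_{2,L}^-$ and $u_{1,L}^-$ oscillate, with phases $\pm\nu_2/h$ and $\pm\nu_1/h$ respectively, and the combined phase $\nu_1-\nu_2$ has a \emph{non-degenerate critical point at $t=0$} (since $V_1(0)=V_2(0)$ while $V_1'(0)\neq V_2'(0)$). One has to run stationary phase on a half-interval whose endpoint is exactly this critical point, and this is not a consequence of the estimates in Proposition~\ref{PropK1K2}; the paper isolates it as Lemma~\ref{lemU2uL} (relying on Lemma~\ref{statphase2}) to obtain $\int_{-\infty}^0\widetilde U_2(0,t)u_{1,L}^-(t)\,dt=\cO(h)$ and hence $hK_{2,L}W^*u_{1,L}^-(0)=\cO(h^{2/3})$. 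Your phrase ``$u_{1,L}^-$ has the same oscillatory/exponential profile used there'' does not capture this: the saving comes from the \emph{interaction} of the two oscillations at the crossing point, not from the profile of either one alone. The same issue recurs in your treatment of the derivative, where you again invoke ``the same oscillatory cancellations on $[\re b(E),0]$'' without supplying them.
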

\begin{proof}
We write the proof for $j=1$ and $S=L$, the ones for $j=2$ or $S=R$ being similar. By Proposition \ref{PropK1K2} and \eqref{wL}, we have,
\be
\label{w1lapprox1}
w_{1,L}(0) = \left(\begin{array}{c} u_{1,L}^-(0)\\ -hK_{2,L}W^*u_{1,L}^-(0)\end{array}\right)+\cO(h^{\frac13}).
\ee
On the other hand, by \eqref{noyaux} and \eqref{wronsL}-\eqref{eq1},
$$
\begin{aligned}
hK_{2,L}W^*u_{1,L}^-(0)  =\cO(h^{-\frac13}) \int_{-\infty}^{0} & \widetilde U_2(0,t) u_{1,L}^-(t) dt\\
&+\cO(h^{\frac23})\overline{r_1(0)}u_{2,L}^-(0)u_{2,L}^+(0)u_{1,L}^-(0),
\end{aligned}
$$
and thus, using the fact that $|u_{j,L}^\pm (0)| =\cO(h^{\frac16})$,
\be
\label{K2Lu1L}
hK_{2,L}W^*u_{1,L}^-(0)  =\cO(h^{-\frac13}) \int_{-\infty}^{0}  \widetilde U_2(0,t) u_{1,L}^-(t) dt +\cO(h^{\frac76}).
\ee
Now, we show,
\begin{lemma}\sl 
\label{lemU2uL}
$$
\int_{-\infty}^{0}  \widetilde U_2(0,t) u_{1,L}^-(t) dt  =\cO(h).
$$
\end{lemma}
\begin{proof} Since $W_2u_{2,L}^-(t)$ is exponentially small on $(-\infty, \re b(E)-\delta]$ ($\delta >0$ arbitrarily small), 
both with respect to $t$ and $h$, we have,
\be
\label{U2u1L}
\begin{aligned}
\int_{-\infty}^{0}  \widetilde U_2(0,t) u_{1,L}^-(t) dt & =u_{2,L}^+(0)\int_{-\infty}^0W_2u_{2,L}^-(t)u_{1,L}^-(t) dt \\
& = \cO(h^{\frac16})\int_{\re b(E)-\delta}^0W_2u_{2,L}^-(t)u_{1,L}^-(t) dt +\cO(e^{-\alpha /h})
\end{aligned}
\ee
with $\alpha =\alpha(\delta)>0$. Then, using the asymptotic behaviour of $u_{2,L}^-$ and $u_{1,L}^-$ near $b(E)$ (see, e.g., \cite[Appendix 2]{FMW1}), we have (for some $\beta >0$ constant),
\be
\begin{aligned}
\int_{\re b(E)-\delta}^{\re b(E)} W_2u_{2,L}^-(t)u_{1,L}^-(t) dt 
 &= \cO(h^{\frac13})\int_{\re b(E)-\delta}^{\re b(E)} \frac{e^{-\beta |t-b(E)|^{3/2}/h}}{|t-b(E)|^{1/4}}dt\\
 &=\cO(h^{5/6}),
\end{aligned}
\ee
and, for any fixed $\lambda>0$ arbitrarily large,
\be
\begin{aligned}
\int_{\re b(E)}^{\re b(E)+\lambda h^{\frac23}}W_2u_{2,L}^-(t)u_{1,L}^-(t) dt 
 &= \cO(h^{\frac16})\int_{\re b(E)}^{\re b(E)+\lambda h^{\frac23}} dt\\
 &=\cO(h^{5/6}).
\end{aligned}
\ee
Moreover, if $\lambda$ has been taken large enough, we also have (see, e.g., \cite[Propositions A.5]{FMW1}),
\begin{align*}
 &\ \int_{\re b(E)+\lambda h^{\frac23}}^{0}W_2u_{2,L}^-(t)u_{1,L}^-(t) dt\\ 
=&\ \sum_{\sigma\in \{\pm 1\}^2}\int_{\re b(E)+\lambda h^{\frac23}}^{0}\frac{h^{\frac13}f_\sigma (t)}{(t-b(E))^{\frac14}} 
e^{i \nu_\sigma(t)/h}dt +\cO (h^{\frac76}),
\end{align*}
where $f_\sigma$ is smooth and bounded (together with all its derivatives), and where, for $\sigma=(\sigma_1,\sigma_2)\in \{\pm1\}^2$ we have used the notation,
$$
\nu_\sigma:= \sigma_1\nu_1+\sigma_2\nu_2.
$$
(Here, $\nu_1$ and $\nu_2$ are the functions defined in \eqref{defnu1} and \eqref{defnu2}.)
In particular, we see that $\nu_\sigma$ has no critical point in $[\re b(E)+\lambda h^{\frac23}, -\delta]$, 
and thus, writing $e^{i \nu_\sigma(t)/h}=\frac{h}{i\nu_\sigma'(t)}\frac{d}{dt}e^{i \nu_\sigma(t)/h}$, and integrating by parts, we obtain,
\be
\begin{aligned}
\int_{\re b(E)+\lambda h^{\frac23}}^{-\delta}W_2u_{2,L}^-(t)u_{1,L}^-(t) dt 
 &=\cO(h^{\frac76})+\cO(h^{\frac43})\int_{\re b(E)+\lambda h^{\frac23}}^{-\delta}(t-b(E))^{-\frac54}dt\\
 &=\cO(h^{\frac76}).
\end{aligned}
\ee
 It remains us to study $\int_{-\delta}^0W_2u_{2,L}^-(t)u_{1,L}^-(t) dt$. Since the phase functions $\nu_\sigma$ corresponding to $\sigma =\pm(1,-1)$ have a  non-degenerate critical point at 0, after a change of variables the corresponding integrals can be transformed into,
$$
h^{\frac13}\int_0^{\delta'}f(t)e^{it^2/2h}dt,
$$
with $f$ smooth and bounded together with all its derivatives, and $\delta'>0$ constant. But then, Lemma \ref{statphase2} can be applied and leads to,
$$
\left|\int_{-\delta}^0W_2u_{2,L}^-(t)u_{1,L}^-(t) dt \right| =\cO(h^{\frac56}),
$$
and by \eqref{U2u1L} Lemma \ref{lemU2uL} follows.
\end{proof}

Going back to \eqref{w1lapprox1} and \eqref{K2Lu1L}, the previous lemma implies,
\be
\label{w1lapprox2}
w_{1,L}(0) = \left(\begin{array}{c} u_{1,L}^-(0)\\ 0\end{array}\right)+\cO(h^{\frac13}).
\ee
Concerning $w_{1,L}'(0)$, we just observe that for $j=1,2$, the function $hD_xu_{j,L}^\pm$ have a behaviour of the same type as $u_{j,L}^\pm$, and the same computations as in the proof of Proposition  \ref{PropK1K2} also give,
\begin{align*}
\pal h^2D_xK_{1,L}WK_{2,L}W^* \pal_{{\mathcal L}(C^0_b(I_L))} &= \ord (h^{-\frac{2}{3}});\\
  hD_xK_{2,L}W^* v(0) &= \ord (h^{-1})\sup_{I_L}|v|.
 \end{align*}
(Observe that, when differentiating $K_{j,L}v(x)$, the terms that come out when the derivative acts on the $x$ of $\int_x$ or $\int^x$ cancel each other.) 

For the same reason, we also obtain,
$$
hD_xK_{2,L}W^*u_{1,L}^-(0)=\cO(h^{-\frac13}).
$$
Thus, in the same way as for $w_{1,L}(0)$ (that is, using \eqref{wL}), we have,
\be
\label{w1lapprox3}
w_{1,L}'(0) = \left(\begin{array}{c} (u_{1,L}^-)'(0)\\ 0\end{array}\right)+\cO(h^{-\frac23}).
\ee
Bu using \eqref{wL}-\eqref{wR} and Proposition \ref{estnormM1L}, we can repeat the same arguments for $j=2$ and/or $S=R$, and Proposition \ref{w(0)} follows.
\end{proof}

\begin{proposition}\sl
\label{approxW0}
For any $E\in {\mathcal D}_h(\delta_0,C_0)$, one has,
$$
{\mathcal W}_0(E)= \frac{4\sqrt{2}}{\pi^2} e^{-i\frac{\pi}4}h^{-\frac43}\cos \frac{{\mathcal A}(E)}{h} + {\mathcal O}(h^{-\frac76}),
$$
uniformly as $h\to 0_+$.
\end{proposition}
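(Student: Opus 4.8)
The plan is to compute the $4\times 4$ Wronskian ${\mathcal W}_0(E)={\mathcal W}(w_{1,L},w_{2,L},w_{1,R},w_{2,R})$ at the point $x=0$, exploiting the fact that it is $x$-independent. By Proposition \ref{w(0)}, each $w_{j,S}(0)$ and $w_{j,S}'(0)$ is, up to controlled errors, a ``scalar'' vector supported in the $j$-th component, built out of the scalar solutions $u_{j,S}^-$. Concretely, the $4\times 4$ matrix
$$
\left(\begin{array}{cccc} w_{1,L}(0) & w_{2,L}(0) & w_{1,R}(0) & w_{2,R}(0)\\ w_{1,L}'(0) & w_{2,L}'(0) & w_{1,R}'(0) & w_{2,R}'(0)\end{array}\right)
$$
is, after reordering rows, block-approximately a $2\times 2$ block structure: the first and third columns live (mod $\cO(h^{1/3})$, resp. $\cO(h^{-2/3})$ in the derivative rows) in the ``$P_1$-plane'', the second and fourth in the ``$P_2$-plane''. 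Expanding the determinant along this structure, the leading term factorizes as a product of two scalar $2\times 2$ Wronskians,
$$
{\mathcal W}_0(E) = \W[u_{1,L}^-,u_{1,R}^-](0)\cdot \W[u_{2,L}^-,u_{2,R}^-](0) + (\text{error}),
$$
where $\W[f,g]=fg'-f'g$. First I would make this factorization precise, tracking how the off-diagonal $\cO(h^{1/3})$ entries (with $\cO(h^{-2/3})$ in the derivative rows) contribute to the error: a generic cross term pairs one small entry against the cofactor, and since the surviving scalar Wronskians are themselves of a definite size, the error turns out to be of lower order than the main term. This is the bookkeeping that has to be done carefully but is essentially routine given Proposition \ref{w(0)}.

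Next I would evaluate the two scalar Wronskians at $x=0$. For the $P_2$-factor, $x=0$ lies well inside the classically allowed region $(\re b(E),\re c(E))$ for $V_2$ (since $V_2(0)=0<E_0$ and $b<0$), so $u_{2,L}^-$ and $u_{2,R}^-$ are oscillatory WKB solutions near $0$; their connection has already been analyzed in \cite{FMW1}, and the constant-coefficient WKB normalization (together with the normalizations fixed in \cite[Appendix 2]{FMW1}) gives $\W[u_{2,L}^-,u_{2,R}^-](0)$ equal to a fixed $h$-power times an oscillatory factor with no zeros that can cancel the main term — in fact it contributes the overall constant and the $e^{-i\pi/4}$ phase. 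For the $P_1$-factor, I would use the explicit expansion \eqref{u1L-} for $u_{1,L}^-$ at $x=0$ (note $\nu_1(0)$ is explicit, $\nu_1(x)=\int_{c(E)}^x\sqrt{E-V_1}$, so $\nu_1(0)=-\int_0^{c(E)}\sqrt{E-V_1}$) together with its derivative, and the analogous expansion for $u_{1,R}^-$ coming from the right turning point. Computing $u_{1,L}^- (u_{1,R}^-)' - (u_{1,L}^-)' u_{1,R}^-$ with these cosine/sine asymptotics produces, via the standard trigonometric identity $\cos\alpha\sin\beta-\sin\alpha\cos\beta=\sin(\beta-\alpha)$, a term proportional to $\sin$ or $\cos$ of $h^{-1}$ times the total action across $(a(E),c(E))$, i.e. exactly ${\mathcal A}(E)/h$ up to an additive phase of $\pi/2$, which converts the $\sin$ into $\cos\frac{{\mathcal A}(E)}{h}$.

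Finally I would assemble the pieces: the product of the $h^{1/6}$-type prefactors in $u_{1,L}^-,u_{1,R}^-$ and the $h$-powers in the $P_2$-Wronskian must combine to $h^{-4/3}$, the turning-point factors $(E-V_1(0))^{\pm 1/4}$ and the analogous $V_2$ factors must combine with the numerical constants to give $\frac{4\sqrt2}{\pi^2}$, and the phases must collect into $e^{-i\pi/4}$. The error terms from Proposition \ref{w(0)} (the $\cO(h^{1/3})$ in the value rows against an $\cO(h^{-2/3})$-type cofactor, etc.) are all $\cO(h^{-7/6})$ or smaller, uniformly for $E\in{\mathcal D}_h(\delta_0,C_0)$. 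The main obstacle is the second step: keeping exact track of the normalizing constants and phases of $u_{j,L}^-,u_{j,R}^-$ at $x=0$ — in particular making sure the $P_2$-Wronskian really is nonvanishing of the exact claimed size and contributes the precise constant and $e^{-i\pi/4}$ — since everything downstream (and the statement of the main theorem) depends on getting these constants right; the algebra of combining the trigonometric factors into $\cos\frac{{\mathcal A}(E)}{h}$ is the delicate but mechanical part.
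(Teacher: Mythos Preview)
Your proposal is correct and follows essentially the same approach as the paper: use Proposition~\ref{w(0)} together with the sizes $u_{j,S}^-(0)=\cO(h^{1/6})$, $(u_{j,S}^-)'(0)=\cO(h^{-5/6})$ to reduce ${\mathcal W}_0(E)$ to the product ${\mathcal W}(u_{1,L}^-,u_{1,R}^-)\,{\mathcal W}(u_{2,L}^-,u_{2,R}^-)+\cO(h^{-7/6})$, and then invoke the known values of these two scalar Wronskians. One small efficiency: since the scalar Wronskians are constant in $x$, you need not evaluate them specifically at $x=0$ via the WKB expansions there; the paper simply quotes their values from standard WKB/turning-point analysis (cf.\ \cite{FMW1}), namely ${\mathcal W}(u_{1,L}^-,u_{1,R}^-)=-\tfrac{4}{\pi}h^{-2/3}\cos\frac{{\mathcal A}(E)}{h}+\cO(h^{1/3})$ and ${\mathcal W}(u_{2,L}^-,u_{2,R}^-)=\tfrac{i\sqrt{2}}{\pi}h^{-2/3}e^{i\pi/4}+\cO(h^{1/3})$.
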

\begin{proof} 
Using Proposition \ref{w(0)}, and the fact that, for $S=L,R$ and $j=1,2$, one has $u_{j,S}(0)=\cO(h^{\frac16})$ and $(u_{j,S})'(0)=\cO(h^{-\frac56})$, we immediately obtain,
$$
{\mathcal W}_0(E)={\mathcal W}(u_{1,L}^-,u_{1,R}^-){\mathcal W}(u_{2,L}^-,u_{2,R}^-)+\cO(h^{-\frac76}).
$$
On the other hand, we know from standard WKB constructions (see also \cite[Appendix]{FMW1}) that we have,
\be
\begin{aligned}
& {\mathcal W}(u_{1,L}^-,u_{1,R}^-)= -\frac4{\pi}h^{-\frac23}\cos \frac{{\mathcal A}(E)}{h} + \cO(h^{\frac13});\\
& {\mathcal W}(u_{2,L}^-,u_{2,R}^-)=\frac{i\sqrt 2}{\pi}h^{-\frac23}e^{i\frac{\pi}4}+\cO(h^{\frac13}),
\end{aligned}
\ee
and the result follows.
\end{proof}

Now, we are able to establish the existence of resonances, together with a preliminary (but fundamental) result on their location. With the definition of $e_k(h)$ given  in \eqref{defekh}, we have,

\begin{theorem} \sl
\label{ThapproxRes}
Under Assumptions (A1)-(A5), there exists $\delta_0>0$ such that for any $C_0>0$, one has, for $h>0$ small enough
$$
{\rm Res}\,(P)\cap {\mathcal D}_h(\delta_0, C_0)
=\{E_k(h); k\in\Z\}\cap{\mathcal D}_h(\delta_0, C_0)
$$
where the
 $E_k(h)$'s are complex numbers that satisfy,
\be
\label{Ekprelim}
E_k(h)=e_k(h) + \cO(h^{\frac76}),
\ee
uniformly as $h\to 0_+$.
\end{theorem}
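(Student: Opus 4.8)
The plan is to combine the reduction of the quantization condition to a single scalar equation, carried out in Section~4, with a Rouché-type argument on the explicit leading term obtained in Proposition~\ref{approxW0}. Recall that $E\in{\mathcal D}_h(\delta_0,C_0)$ is a resonance if and only if ${\mathcal W}_0(E)=0$, and that Proposition~\ref{approxW0} gives
$$
{\mathcal W}_0(E)= \frac{4\sqrt 2}{\pi^2}e^{-i\frac{\pi}4}h^{-\frac43}\cos\frac{{\mathcal A}(E)}{h}+{\mathcal O}(h^{-\frac76}),
$$
uniformly on ${\mathcal D}_h(\delta_0,C_0)$. Factoring out the nonzero prefactor, the resonance condition is equivalent to $\cos({\mathcal A}(E)/h)={\mathcal O}(h^{1/6})$, and the zeros of the leading term $\cos({\mathcal A}(E)/h)$ are exactly the points $e_k(h)={\mathcal A}^{-1}((k+\tfrac12)\pi h)$ introduced in \eqref{defekh}. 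Since ${\mathcal A}'(E)$ is bounded below by a positive constant for real $E$ near $E_0$ (and remains so on the complex disc ${\mathcal D}_h(\delta_0,C_0)$ for $\delta_0$ small and $h$ small), these zeros are simple and $h$-separated, spaced by $\sim\pi h/{\mathcal A}'(E_0)$, and each $e_k(h)$ lies on the real axis inside ${\mathcal D}_h(\delta_0,C_0)$ provided $(k+\tfrac12)\pi h\in{\mathcal A}([E_0-2\delta_0,E_0+2\delta_0])$.

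The main step is then a standard Rouché/implicit-function argument performed in each small complex disc. First I would note that ${\mathcal W}_0(E)$ is holomorphic in $E$ on a neighbourhood of ${\mathcal D}_h(\delta_0,C_0)$: this follows because all the ingredients (the WKB solutions $u_{j,S}^\pm$, the fundamental solutions $K_{j,S}$, and hence the series \eqref{wL}--\eqref{wR}) depend holomorphically on $E$, the Neumann series converging uniformly by Propositions~\ref{PropK1K2} and \ref{PropK1RK2R}. Around each $e_k(h)$ I would take the circle $\gamma_k:=\{|E-e_k(h)|=r_k h\}$ for a suitable fixed small constant $r_k\in(0,\pi/(2{\mathcal A}'(E_0)))$ chosen so that $\gamma_k$ stays away from the other zeros of $\cos({\mathcal A}(\cdot)/h)$. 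On $\gamma_k$ one has $|\cos({\mathcal A}(E)/h)|\ge c_0>0$ for some $h$-independent $c_0$ (because ${\mathcal A}(E)/h$ stays at distance $\gtrsim 1$ from $(k+\tfrac12)\pi$), hence the leading term of $h^{4/3}{\mathcal W}_0(E)$ is $\ge c_1>0$ in modulus while the remainder is ${\mathcal O}(h^{1/6})\to 0$; Rouché's theorem then gives exactly one zero $E_k(h)$ of ${\mathcal W}_0$ inside $\gamma_k$. Combined with the fact that outside $\bigcup_k\{|E-e_k(h)|<r_k h\}$ the estimate $|\cos({\mathcal A}(E)/h)|\ge c(r_k)>0$ shows ${\mathcal W}_0$ has no zeros there, this yields the claimed description ${\rm Res}(P)\cap{\mathcal D}_h(\delta_0,C_0)=\{E_k(h)\}\cap{\mathcal D}_h(\delta_0,C_0)$.

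It remains to improve the a priori bound $E_k(h)=e_k(h)+{\mathcal O}(h)$ coming from the disc of radius $r_k h$ to the sharper $E_k(h)=e_k(h)+{\mathcal O}(h^{7/6})$. For this I would expand ${\mathcal W}_0$ near $e_k(h)$: writing $E=e_k(h)+\zeta$ with $|\zeta|\le r_k h$, a Taylor expansion gives $\cos({\mathcal A}(E)/h)=\cos({\mathcal A}(e_k(h))/h)\mp {\mathcal A}'(e_k(h))h^{-1}\zeta\cdot(1+{\mathcal O}(\zeta/h))=\mp{\mathcal A}'(e_k(h))h^{-1}\zeta+{\mathcal O}(h^{-1}\zeta^2)$, so the equation $h^{4/3}{\mathcal W}_0(E)=0$ becomes, after dividing by the nonzero constant prefactor,
$$
\mp{\mathcal A}'(e_k(h))h^{-1}\zeta+{\mathcal O}(h^{-1}\zeta^2)+{\mathcal O}(h^{1/6})=0,
$$
whence $\zeta={\mathcal O}(h^{7/6})$ since ${\mathcal A}'(e_k(h))$ is bounded below; this is \eqref{Ekprelim}. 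The main obstacle is therefore not the argument itself but ensuring the uniformity of the remainder in Proposition~\ref{approxW0} across the whole disc ${\mathcal D}_h(\delta_0,C_0)$ — in particular that the $\cO(h^{-7/6})$ there is genuinely uniform in $E$ including its (small) imaginary part up to $C_0 h$, which in turn relies on the uniformity (with respect to $E$) of the estimates in Propositions~\ref{PropK1K2}--\ref{PropK1RK2R} and of the WKB Wronskian asymptotics; these have all been established in the cited references and the preceding sections, so one only needs to check that nothing degenerates as $\im E$ ranges over $[-C_0h,0]$, which it does not because the turning points $a(E),c(E),b(E)$ move by ${\mathcal O}(h)$ and the phase functions pick up only ${\mathcal O}(h)$ imaginary parts.
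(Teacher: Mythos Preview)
Your proposal is correct and follows essentially the same approach as the paper: rewrite the quantization condition ${\mathcal W}_0(E)=0$ as $\cos({\mathcal A}(E)/h)=f(E,h)$ with $f=\cO(h^{1/6})$ via Proposition~\ref{approxW0}, apply Rouch\'e's theorem on discs of radius $\sim h$ centered at the $e_k(h)$'s, and then use the size of the remainder to upgrade the localization to $\cO(h^{7/6})$. The paper's proof is more terse (it simply says ``since $f=\cO(h^{1/6})$ we immediately see'' for the last step, where you spell out the Taylor expansion), but the argument is the same.
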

\begin{rem}
The ellipticity of $W$ assumed in Assumption (A5) is not used in this theorem. 
It will be used in the next section in the microlocal method for a more precise estimate.
\end{rem}
\begin{proof} 
We set,
$$
f(E,h):=\cos \frac{{\mathcal A}(E)}{h}-\frac{e^{i\frac{\pi}4}\pi^2h^{\frac43}}{4\sqrt 2}{\mathcal W}_0(E).
$$
Then, by Proposition \ref{approxW0}, we have $f(E,h)=\cO(h^{1/6})$, and the quantization condition \eqref{condquant} can be written as,
\be
\label{condquantbis}
\cos \frac{{\mathcal A}(E)}{h}=f(E,h).
\ee

We first observe that, near $E_0$, the roots of the equation $\cos ({\mathcal A}(E)/h)=0$ are precisely given by $E=e_k(h)$. Moreover, since ${\mathcal A}'(E_0)=\frac12\int_{a}^{c} (E_0-V_1(x))^{-1/2}dx\not=0$, we see that the distance between two consecutive $e_k(h)$'s is of order $h$. 
As a consequence, if $E$ is at a distance $\varepsilon h$ from the $e_k(h)$ with small enough positive $\varepsilon$, 
then $\cos ({\mathcal A}(E)/h)$ remains at some $h$-independent positive distance from $0$.
Therefore, we can apply the Rouch\'e theorem and conclude that, for each $k$ such that $e_k(h)\in [E_0-\delta, E_0+\delta]$, and for $h>0$ small enough, there exists a unique solution to \eqref{condquantbis} such that,
$$
E_k(h) = e_k(h) + o(h),
$$
and conversely, all the roots of \eqref{condquantbis} in ${\mathcal D}_h(\delta_0, C_0)$ are of this type. But since $f=\cO(h^{1/6})$, 
we immediately see that these roots actually satisfy,
$$
E_k(h) = e_k(h) +\cO(h^{\frac76}),
$$
and the result is proved.
\end{proof}

Now, in order to specify better the location of the resonances (in particular their widths), we will compare the corresponding resonant states with formal constructions that will be made microlocally (that is, in phase-space) near the characteristic set of $P-E_0$,
${\rm Char} (P-E_0)=\Gamma_1\cup \Gamma_2$ (see Figure 2).

\section{Microlocal constructions near the crossing points of the characteristic sets}

For $E\in {\mathcal D}_h(\delta_0, C_0)$, $E_1:=\re E$, we plan to construct microlocal solutions to the (matrix) equation $(P-E)u=0$, that are concentrated near some arbitrary point of  $\Gamma_1(E_1)\cup \Gamma_2(E_1)$, where, for $j=1,2$, we have set,
$$
\Gamma_j(E_1):=\{(x,\xi) \in \R^2\, ;\, \xi^2 +V_j(x)=E_1\}.
$$
Actually, since we plan to compare these solutions with the resonant states obtained from Theorem \ref{ThapproxRes},
 we will construct them in such a way that they are ``out-going'', which in this case means
 that they have no microsupport on the in-coming set $\Gamma_{2,R}^-(E_1)$, defined by,
\be
\label{Gamma2-}
\Gamma_{2,R}^-(E_1):=\{(x,\xi) \in \R^2\, ;\, \xi^2 +V_2(x)=E_1\, ,\, x>0\, , \xi <0\}.
\ee
This fact is well-known in the scalar case (see [He Sj], [Be-Ma]), 
and our system is reduced to such a scalar pseudo-differential operator on $\Gamma_{2,R}^-$ thanks to the microlocal ellipticity of $P_1$.
In addition, since such constructions are standard away from the crossing points $\rho_\pm (E_1):=(0,\pm\sqrt{E_1})$ (they are the usual WKB constructions), we will concentrate on small neighbourhoods of $\rho_\pm (E_1) $.

\begin{center}
\scalebox{0.4}{
\includegraphics{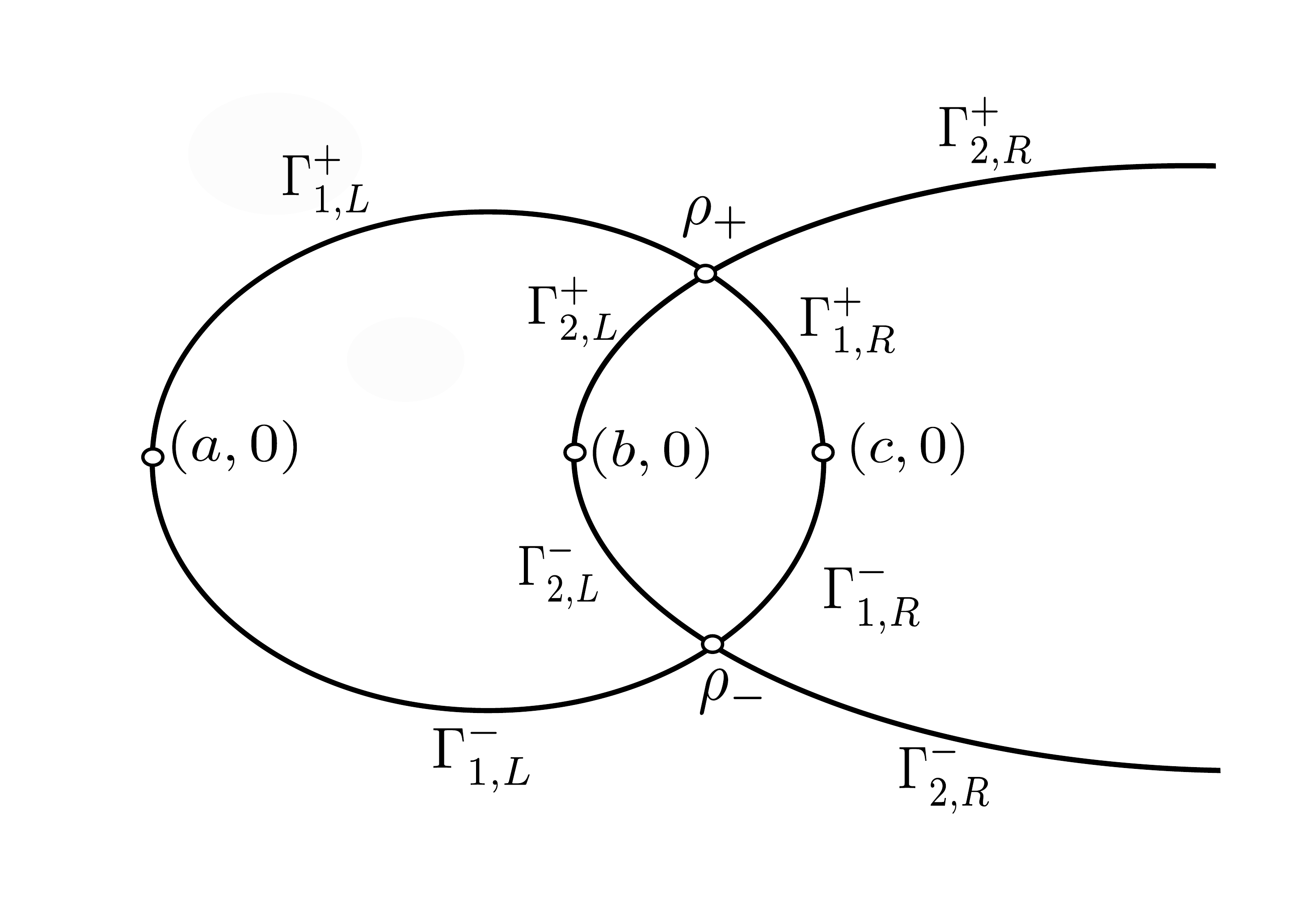}
}
\centerline{Figure 3: Characteristic sets on the phase-space}
\end{center}

At a first stage, we work with real values of $E$ only, considering $E$ as an extra-parameter. At the end, we will just observe that all our constructions depend analytically on $E$, and can be extended to complex values as long as $\im E$ remains $\cO(h)$.

For  $j=1,2$, we also set,
$$
\Gamma_{j,L}=\{(x,\xi)\in \Gamma_j(E); x<0\},\quad
\Gamma_{j,L}^\pm=\{(x,\xi)\in \Gamma_j(E); x<0,\pm\xi>0\},
$$
$$
\Gamma_{j,R}=\{(x,\xi)\in \Gamma_j(E); x>0\},\quad
\Gamma_{j,R}^\pm=\{(x,\xi)\in \Gamma_j(E); x>0,\pm\xi>0\}.
$$

We start by working near $\rho_-(E)$ where we are going to construct a basis of microlocal solutions, with one of them microlocally concentrated on $\Gamma_1(E)\cup\Gamma_2(E)\backslash \Gamma_{2,R}^-(E)$. In all of these constructions, the various pseudodifferential operators are considered at a formal level only, which means that they are identified with their formal analytic Weyl-semiclassical symbol (near a given point of $\R^{2n}$), as in \cite[Appendix]{HeSj2}. The same thing is valid for the Fourier integral operators, once their phase function is fixed. Moreover, if $A$ is such a formal analytic pseudodifferential operator, with symbol $a$ defined near $(x_0,\xi_0)\in \R^{2n}$, one can define its (microlocal) action on ${\mathcal D}'(\R)$ by re-summing its symbol up to $\cO(e^{-\alpha /h})$ with $\alpha >0$ (see, e.g., \cite{Sj}), by multiplying it by a cut-off function centered at $(x_0,\xi_0)$, and by taking its  usual Weyl-quantization. In that case, the writing
\be
\label{sim0}
Av \sim 0 \mbox{ microlocally near } (x_0,\xi_0)
\ee
just means that the microsupport $MS(Av)$ of $Av$ does not contain $(x_0,\xi_0)$ (and thus neither a neighbourhood of it). Here, the notion of microsupport is the one used, e.g., in \cite{HeSj2, Ma}. 

In particular, if $a$ vanishes near $(x_0,\xi_0)$, then \eqref{sim0} is valid for any $v\in L^2_{loc}$, possibly $h$-dependent, with its local $L^2$-norm near $x_0$ that is $\cO(h^{-N})$ for some $N>0$. In that case, we also write $A\sim 0$ microlocally near $(x_0,\xi_0)$.

Now, microlocally near $\rho_-(E)$, we investigate the solutions $v=(v_1,v_2)$ to $(P-E)v \sim 0$. Thanks to Assumption (A5), the operator $W^*$ is  elliptic at $\rho_-(E)$. Therefore, by using the symbolic calculus, we can construct a microlocal parametrix to it, that we denote by $W_*^{-1}$. Then, microlocally near $\rho_-(E)$, the system $(P-E)v \sim 0$ can be 
re-written as,
\be
\label{systemrewrite}
\left\{ \begin{aligned}
& v_1\sim -\frac1{h}W_*^{-1}(P_2-E)v_2;\\
& W^*(P_1-E)W_*^{-1}(P_2-E)v_2 -h^2W^*Wv_2\sim 0.
\end{aligned}
\right.
\ee
Since the principal symbol of $Q:= W^*(P_1-E)W_*^{-1}(P_2-E) -h^2W^*W$ is $(\xi^2+V_1(x)-E)(\xi^2+V_2(x)-E)$, it has a saddle point at $\rho_-(E)$, 
and we can apply Theorem b.1 of \cite{HeSj2} or, rather, its generalization to non-selfadjoint operators given in \cite[Proposition 6.1]{Ra} 
(see also \cite[Proposition 4.6]{Ba}). 
We obtain the existence of a formal  Fourier integral operator $U$ 
(with associated canonical transform ${\mathcal \kappa}$ sending $\rho_-(E)$ to $(0,0)$)
 and a formal analytic symbol $F=F(t,h)$ (defined near $t=0$), such that,
\be
\label{reduc}
UF(Q,h)U^{-1} \sim \frac12(yhD_y + hD_y\cdot y)=:G_0 \mbox{ microlocally near } (0,0).
\ee
(Here, the formal pseudodifferential operator $F(Q,h)$ is defined as in \cite[Appendix]{HeSj2}.) 

In particular, $\widetilde v_2:= Uv_2$ is a solution to,
\be
\label{eqModel}
G_0\widetilde v_2 \sim F(0,h)\widetilde v_2 \, \mbox{ microlocally near } (0,0).
\ee
Actually, there are several ways of reducing $Q$ to $G_0$, depending on which affine transformation is used for sending $T_{\rho_-(E)}\Gamma_1(E) \cup T_{\rho_-(E)}\Gamma_2(E)$ onto $\{ y\eta =0\}$.
We show,
\begin{proposition}\sl 
The reduction \eqref{reduc} can be made in such a way that one has, 
$$
F(0,h) = -\frac{i}2 h +\cO( h^2)
$$
uniformly with respect to $h>0$ small enough and $E\in{\mathcal D}_h(\delta_0, C_0)$. 
\end{proposition}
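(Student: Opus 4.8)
The plan is to track explicitly the normalizing constant $F(0,h)$ through the reduction of $Q$ to the model operator $G_0$, using the fact that $F(0,h)$ is (up to the choice of the Fourier integral operator) rigidly determined by the subprincipal data of $Q$ at the saddle point $\rho_-(E)$. First I would recall that the construction of $U$ and $F$ via \cite[Proposition 6.1]{Ra} proceeds by first conjugating $Q$ to its Birkhoff-type normal form near $\rho_-(E)$: a formal analytic pseudodifferential operator of the form $G(Q,h)$ whose symbol, after applying the canonical transformation $\kappa$, depends only on the product $p:=y\eta$. The value $F(0,h)$ is then the value at $p=0$ of the symbol of this normal form, i.e. it is the ``subprincipal part'' of $Q$ evaluated at the critical point, read in the coordinates $(y,\eta)$ in which the two branches of the characteristic set become $\{y=0\}$ and $\{\eta=0\}$.

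The key computation is therefore the following. The principal symbol of $Q$ is $q_0(x,\xi)=(\xi^2+V_1(x)-E)(\xi^2+V_2(x)-E)$, which vanishes to second order (as a product of two transversal linear factors) at $\rho_-(E)=(0,-\sqrt E)$; its Hessian there has one positive and one negative eigenvalue, with the product of the two nonzero ``characteristic speeds'' controlling how the affine change of variables rescales $y\eta$. Concretely, near $\rho_-(E)$ one has $\xi^2+V_1(x)-E \approx \alpha_1 (\xi+\sqrt E) + \beta_1 x$ and similarly for $V_2$, with $\alpha_j = -2\sqrt E$ and $\beta_j = V_j'(0)$; since $V_1'(0)>0>V_2'(0)$ these two linear forms are independent and $q_0$ is, to leading order, $(\alpha_1\delta\xi+\beta_1 x)(\alpha_2\delta\xi+\beta_2 x)$. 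Matching this with $y\eta$ fixes the linear part of $\kappa$ up to the one-parameter family of hyperbolic rescalings $(y,\eta)\mapsto(\lambda y,\lambda^{-1}\eta)$, which is exactly the freedom mentioned in the proposition; crucially all such choices leave $F(0,h)$ unchanged. Then I would compute the order-$h$ term: the symbol of $Q$ has a subprincipal part coming from two sources — the Weyl-quantization reordering in the products $W^*(P_1-E)W_*^{-1}(P_2-E)$ and the lower-order term $-h^2 W^*W$, the latter contributing at order $h^2$ only and hence going into the remainder. Evaluating the resulting order-$h$ symbol of $Q$ at $\rho_-(E)$, dividing by the symbol of $F'$ at the principal value $0$ (i.e. transferring through the analytic function $F$), and comparing with the subprincipal symbol $\tfrac{i}{2}h$ of $G_0=\tfrac12(yhD_y+hD_y y)$ (recall that $\mathrm{Op}^w(y\eta)=\tfrac12(yhD_y+hD_y y)$ has vanishing subprincipal symbol, while the constant we seek is the discrepancy), yields $F(0,h)=-\tfrac{i}{2}h+\cO(h^2)$. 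The sign and the precise coefficient $-\tfrac12$ come from the saddle-point signature together with the normalization of $G_0$, and I would pin them down by a direct symbol computation in the model linear case $P_j = h^2D_x^2+V_j(x)$ with $V_j$ replaced by their linearizations, where everything is explicit.

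The main obstacle I expect is bookkeeping of the order-$h$ (subprincipal) symbol through the composition $W^*(P_1-E)W_*^{-1}(P_2-E)$ and through the function $F$: one must be careful that $W_*^{-1}$ is only a microlocal parametrix of $W^*$, so its symbol contributes an order-$h$ correction via the Moyal product, and that the reduction theorem of \cite{HeSj2, Ra} produces $F$ and $U$ simultaneously, so the ``subprincipal part of $Q$ at the saddle point'' must be extracted in a coordinate-invariant way (e.g. as the value of the normal-form symbol at the critical value of its action variable) rather than naively. The ellipticity of $W$ at $\rho_-(E)$ from Assumption (A5) is what makes $W_*^{-1}$ available and keeps all these manipulations microlocally legitimate. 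Once the invariant subprincipal quantity is identified, its evaluation is a finite computation with the explicit symbols of $P_1,P_2,W$ at $(0,-\sqrt E)$, and the freedom in $\kappa$ is used precisely to absorb any would-be dependence on the rescaling parameter, leaving the stated universal value $-\tfrac{i}{2}h+\cO(h^2)$, uniform in $E\in\mathcal D_h(\delta_0,C_0)$ by analyticity in $E$.
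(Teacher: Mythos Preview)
Your plan follows essentially the same route as the paper: identify the principal symbol $q_0$ of $Q$ as a product of two transversal linear forms at $\rho_-(E)$, choose a linear symplectic map sending them to $\{y=0\}$ and $\{\eta=0\}$, and then read off the order-$h$ coefficient of $F(0,h)$ from the subprincipal data of $Q$ at the saddle. The paper carries this out explicitly: it writes down the affine map $\kappa_0$ in closed form, derives from the Weyl calculus the two relations $f_0(q_0\circ\kappa^{-1})=y\eta$ and $f_1(q_0\circ\kappa^{-1})+q_1\circ\kappa^{-1}\cdot f_0'(q_0\circ\kappa^{-1})=0$, and then evaluates at $(0,0)$ using $q_1(\rho_-)=\tfrac{1}{2i}\{p_1,p_2\}(\rho_-)=-i\sqrt{E}(\tau_1+\tau_2)$ and $f_0'(0)=-\tfrac{1}{2(\tau_1+\tau_2)\sqrt{E}}$ to get $f_1(0)=-\tfrac{i}{2}$.

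One point where your sketch is slightly off: you write that matching $q_0$ with $y\eta$ fixes $\kappa$ up to the one-parameter family of hyperbolic rescalings, and that all such choices leave $F(0,h)$ unchanged. The rescalings are indeed harmless, but there is a further discrete freedom --- swapping the roles of $y$ and $\eta$ (equivalently, deciding which of $\Gamma_1$, $\Gamma_2$ is sent to $\{y=0\}$) --- and the paper's remark immediately after the proof shows that this swap flips the sign of $F(0,h)$ to $+\tfrac{i}{2}h$. So ``can be made in such a way'' in the statement really does record a choice, and the paper fixes it by writing $\kappa_0$ explicitly; your plan to ``pin down the sign by a direct symbol computation in the model linear case'' would uncover this, but you should not expect the answer to be invariant before that choice is made. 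Also, your parenthetical about the subprincipal symbol of $G_0$ is tangled: in Weyl quantization $G_0=\mathrm{Op}^w(y\eta)$ has \emph{zero} subprincipal symbol, and the $-\tfrac{i}{2}h$ arises entirely from $q_1(\rho_-)f_0'(0)$ on the $Q$ side, not from $G_0$.
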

\begin{proof} Setting $\tau_1 =V'_1(0)$ and $\tau_2= - V_2'(0)$ 
(both positive by Assumption (A4)), we obtain that, near $\rho_-(E)$, the principal symbol $q_0$ of $Q$ satisfies,
$$
q_0(x,\xi ) = 4E(\xi+\sqrt{E}-\frac{\tau_1}{2\sqrt E}x) (\xi+\sqrt{E}+\frac{\tau_2}{2\sqrt E}x)+\cO(|x|^3 + |\xi +\sqrt{E}|^3).
$$
As it can be seen in the proof of \cite[Theoreme b.1]{HeSj2}, the first step in the reduction \eqref{reduc} consists in transforming $q_0$ into $\alpha y\eta +\cO(|(y,\eta)|^3)$ (with $\alpha \not= 0$ constant) by means of an affine canonical transformation $\kappa_0$. Here, we choose $\kappa_0(x,\xi)=(y,\eta)$ defined by,
\be
\label{choixrot}
\left\{
\begin{aligned}
&y=\frac{\tau_1}{2\sqrt E}x-\xi -\sqrt{E} ;\\
& \eta =  \frac{\tau_2}{\tau_1+\tau_2} x + \frac{2\sqrt E}{\tau_1+\tau_2}(\xi + \sqrt{E}).
\end{aligned}
\right.
\ee
One can immediately check that $\kappa_0$ is symplectic, 
and that under this change $q_0(x,\xi)$ becomes 
$-2(\tau_1+\tau_2)\sqrt{E}\, y\eta +\cO(|(y,\eta)|^3)$. 
Then, the next steps in the proof of \cite[Theoreme b.1]{HeSj2} 
mainly consist in correcting this choice by terms that are 
$\cO(|(y,\eta)|^2)$, and in constructing $F(t,h)$ in order to finally obtain \eqref{reduc}. In particular, the canonical transformation $\kappa$ associated to $U$ satisfies,
\be
\label{approxkappa}
\kappa (x,\xi )=\kappa_0(x,\xi) + \cO(|x|^2 + |\xi +\sqrt{E}|^2).
\ee
Moreover, by the Weyl-symbolic calculus (see \cite{HeSj2}), 
we also know that the Weyl-symbol $\sigma$ of $UF(Q,h)U^{-1}$ satisfies,
$$
\sigma (y,\eta;h) = F(q(\kappa^{-1}(y,\eta )),h) + \cO(h^2),
$$
where $q$ stands for the Weyl-symbol of $Q$. Thus, denoting by $\sum_{k\geq 0}h^k f_k(t)$ the semiclassical asymptotic expansion of $F(t,h)$, and $q_1$ the subprincipal symbol of $Q$, by a Taylor expansion we deduce from \eqref{reduc},
\be
\label{eqF0F1}
\begin{aligned}
 & f_0(q_0(\kappa^{-1}(y,\eta )) )=y\eta\\
 & f_1(q_0(\kappa^{-1}(y,\eta )) ) + q_1(\kappa^{-1}(y,\eta ))f_0'(q_0(\kappa^{-1}(y,\eta )) ) =0.
 \end{aligned}
\ee
In particular, we have $f_0(0)=0$, and,  applying $\partial_y\partial_\eta$ to the first equation of \eqref{eqF0F1} and taking the value at $y=\eta=0$, we also obtain (using that $q_0(\kappa^{-1}(y,\eta ))=-2(\tau_1+\tau_2)\sqrt{E}\, y\eta +\cO(|(y,\eta)|^3)$),
\be
\label{f'(0)}
-2(\tau_1+\tau_2)\sqrt{E} f_0'(0) = 1.
\ee
Then, observing that $q_1(x,\xi) = \frac1{2i}\{ \xi^2 + V_1(x), \xi^2+ V_2(x)\} + \cO(|\xi^2 + V_1(x)|+|\xi^2 + V_2(x)|)=\frac1{i}\xi (V_2'(x)-V_1'(x))+ \cO(|\xi^2 + V_1(x)|+|\xi^2 + V_2(x)|)$ (so that $q_1(\kappa^{-1}(0,0 ))=-i\sqrt{E}(\tau_1+\tau_2)$), we deduce,
$$
f_1(0)=i\sqrt{E}(\tau_1+\tau_2)f_0'(0)= -\frac{i}2,
$$
and the result follows.
\end{proof}
\begin{remark}\sl If in \eqref{choixrot} we had exchanged the roles of $y$ and $\eta$ (e.g. by the symplectic change $(y,\eta) \mapsto (\eta, -y)$), we would have obtained $F(0,h) = \frac{i}2 h+\cO(h^2)$ and the equation \eqref{eqModel} would have become $yd_y\widetilde v_2=-(1+\cO(h))\widetilde v_2$, making
 the construction of the solution less easy (as we will see, with our choice, instead, \eqref{eqModel} becomes $yd_y\widetilde v_2=\cO(h)\widetilde v_2$).
\end{remark}

In that case, we also have,
\begin{proposition}\sl 
\label{propF(0)} 
It holds that 
$$
F(0,h) = -\frac{i}2 h +\mu h^2,
$$
with 
$$
\mu =\mu(h) = -\frac{r_0(0)^2+r_1(0)^2E}{2(\tau_1+\tau_2)\sqrt{E} } + \cO(h)
$$
uniformly with respect to $h>0$ small enough and $E\in{\mathcal D}_h(\delta_0, C_0)$. 
\end{proposition}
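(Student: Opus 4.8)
The plan is to continue, by one further order in $h$, the asymptotic expansion carried out in the proof of the previous proposition. Writing $F(t,h)\sim\sum_{k\ge 0}h^kf_k(t)$, we already have there $f_0(0)=0$, $f_0'(0)=-1/(2(\tau_1+\tau_2)\sqrt E)$ (this is \eqref{f'(0)}), $f_1(0)=-\tfrac i2$, together with $\kappa=\kappa_0+\cO(|x|^2+|\xi+\sqrt E|^2)$ and the fact that $q_1(\rho_-(E))=-i\sqrt E(\tau_1+\tau_2)$. Since $F(0,h)=-\tfrac i2 h+h^2f_2(0)+\cO(h^3)$, one has $\mu(h)=f_2(0)+\cO(h)$, so it is enough to compute $f_2(0)$, and to identify its leading term.

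First I would push the identity \eqref{reduc} to order $h^2$. The Weyl symbol of $G_0=\tfrac12(yhD_y+hD_y\cdot y)$ is exactly $y\eta$, so, denoting by $q=q_0+hq_1+h^2q_2+\cdots$ the Weyl symbol of $Q$, the Weyl-symbolic calculus (together with the transformation of the subprincipal symbol under the Fourier integral operator $U$) yields an identity of the form
$$
f_2\bigl(q_0\circ\kappa^{-1}(y,\eta)\bigr)=\Phi\bigl(f_0,f_0',f_0'',f_1,f_1';\,q_0,q_1,q_2;\,\kappa\bigr)(y,\eta),
$$
where $\Phi$ collects the second-order Moyal corrections in the composition $U\,F(Q,h)\,U^{-1}$, the $h^2$-part of the subprincipal-symbol transformation, and the quadratic part of $\kappa-\kappa_0$. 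Evaluating at $(y,\eta)=(0,0)$, i.e. at $\rho_-(E)$, and using that $q_0(\rho_-(E))=0$ and that $\rho_-(E)$ is a \emph{critical} point of $q_0$ (so $dq_0(\rho_-(E))=0$), together with the recalled values of $f_0(0),f_0'(0),f_1(0)$, this collapses to an explicit expression for $f_2(0)$ in terms of $q_2(\rho_-(E))$ and the lower-order data. In parallel I would compute $q_2(\rho_-(E))$ itself: writing $w=r_0+ir_1\xi$ for the principal symbol of $W$ and $p_j:=\xi^2+V_j(x)-E$, the term $-h^2W^*W$ contributes $-|w(\rho_-(E))|^2=-(r_0(0)^2+r_1(0)^2E)$, while the remaining contributions, coming from the Moyal products in $W^*(P_1-E)W_*^{-1}(P_2-E)$ and from the decomposition $W^*(P_1-E)W_*^{-1}=(P_1-E)+[W^*,P_1]W_*^{-1}$, are evaluated at $\rho_-(E)$ using that $p_1$ and $p_2$ both vanish there. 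Putting the two computations together gives $f_2(0)=-\,q_2^{\,W^*W}(\rho_-(E))\,f_0'(0)+\cO(h)=(r_0(0)^2+r_1(0)^2E)\,f_0'(0)+\cO(h)$, which is precisely the asserted value of $\mu$ once one substitutes $f_0'(0)=-1/(2(\tau_1+\tau_2)\sqrt E)$.

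The main obstacle is exactly the bookkeeping at order $h^2$: one must keep simultaneous track of the second-order Moyal terms of all the compositions involved, of the $h^2$-corrections produced by the Fourier integral operator $U$, and of the quadratic part of $\kappa-\kappa_0$, and then verify that, once everything is evaluated at the fixed point $\rho_-(E)$, all the contributions other than the $-h^2W^*W$ one combine with the terms in $\Phi$ so that only $(r_0(0)^2+r_1(0)^2E)f_0'(0)$ survives at leading order. Conceptually this must happen because $F(0,h)$ is a microlocal invariant of $Q$ at $\rho_-(E)$ — it is unaffected by the conjugations and gauge choices entering the reduction — so, at the order claimed, it can only feel the non-removable (ellipticity) part $|w(\rho_-(E))|^2$ of the interaction at the crossing point; and the vanishing of $dq_0$ at $\rho_-(E)$ is what kills the would-be spurious $h$-independent terms. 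Finally, since all the symbols, the transformation $\kappa$ and the operator $U$ depend analytically on $E$, the estimate is uniform for $E\in\cD_h(\delta_0,C_0)$, exactly as in the previous proposition.
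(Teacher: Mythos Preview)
Your strategy --- pushing the symbol identity \eqref{reduc} to order $h^2$ and reading off $f_2(0)$ --- is in principle legitimate, but you have not actually carried it out. The heart of the argument is the claim that, when everything is evaluated at $\rho_-(E)$, the various order-$h^2$ contributions (second Moyal corrections in the four-fold product $W^*(P_1-E)W_*^{-1}(P_2-E)$, the $h^2$-terms generated by conjugation by $U$, and the quadratic part of $\kappa-\kappa_0$) conspire so that only $-|w(\rho_-(E))|^2 f_0'(0)$ survives. You acknowledge this bookkeeping as ``the main obstacle'' and then replace it by the heuristic that $F(0,h)$ is a microlocal invariant which ``can only feel'' the interaction strength $|w(\rho_-(E))|^2$. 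That is not a proof: the invariant $F(0,h)$ at order $h^2$ could in principle depend on higher jets of $p_1$, $p_2$, and $w$ at the crossing point, and nothing you have written rules this out. In particular, the assertion that the $h^2$-part of $W^*(P_1-E)W_*^{-1}(P_2-E)$ vanishes at $\rho_-(E)$ ``using that $p_1$ and $p_2$ both vanish there'' is not justified, since Moyal corrections involve derivatives of the symbols, not just their values.

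The paper avoids this entire computation by a different idea. One takes a concrete function $v$ solving $(P_2-E)v=0$ (namely $v=h^{-1/6}u_{2,L}^-$), so that $Qv=-h^2W^*Wv$ exactly. Since $F(Q,h)$ commutes with $Q$, writing $F(t,h)=F(0,h)+tF_1(t,h)$ gives $F(Q,h)v=F(0,h)v-h^2F_1(Q,h)W^*Wv$, and modulo $\cO(h^3)$ the last term is $-h^2f_0'(0)W^*Wv$. After conjugation by $U$ the transform $\widetilde v=Uv$ is microlocally smooth (its microsupport sits on $\{\eta=0\}$), and one tests the resulting microlocal identity against an FBI transform evaluated at $(0,0)$. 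The left-hand side $G_0\widetilde v$ contributes $-\tfrac{ih}{2}T\widetilde v(0,0)$, the term $-h^2f_0'(0)UW^*WU^{-1}\widetilde v$ contributes $-h^2f_0'(0)|w(\rho_-(E))|^2 T\widetilde v(0,0)$, and since $T\widetilde v(0,0)\sim 1$ one can divide through. This yields $F(0,h)=-\tfrac{ih}{2}+h^2f_0'(0)|w(\rho_-(E))|^2+\cO(h^3)$ directly, with no need to track the $h^2$-symbol of $Q$ or the second-order corrections of the FIO. The test-function trick is what replaces your missing cancellation argument.
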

\begin{proof} We first observe that, under the conjugation by $U$, the characteristic set of  $P_2-E$ is changed into $\{\eta=0$\}. In particular, it projects bijectively on the base $\R_y$ near $(0,0)$, and thus the microlocal solutions to $(P_2-E)u=0$ near $\rho_-(E)$ are transformed into smooth functions of $y$. 
If we set $v=h^{-1/6}u_{2,L}^-$ (so that $v(0)$ is $\cO(1)$), 
then $v$ is solution to $Qv=-h^2W^*Wv$, and  the smooth function $\widetilde v:=Uu_{2,L}^-$ is solution to,
\be
G_0 \widetilde v \sim UF(-h^2W^*W,h)U^{-1} \widetilde v \, \mbox{ microlocally near } (0,0).
\ee
Writing $F(t,h) = F(0,h) + t F_1(t,h)= F(0,h) + t(\widetilde F_1(t) +\cO(h))$, we deduce,
\be
\label{eqmodelu2}
\left(yhD_y-\frac{ih}2\right)\widetilde v\sim F(0,h)\widetilde v-h^2\widetilde F_1(0)UW^*WU^{-1}\widetilde v +h^3R\widetilde v,
\ee
microlocally near $(0,0)$, where $R$ is a $0$-th order semiclassical pseudodifferential operator. 
Applying the FBI transform $T\, :\, w \mapsto Tw$ given by,
$$
Tw(y,\eta;h):=h^{-\frac12}\int e^{i(y-y')\eta /h - (y-y')^2/2h}\chi (y')w(y')dy',
$$
(where $\chi$ is a cut-off function near $0$), integrating by parts, and taking the value at $y=\eta=0$, we deduce from \eqref{eqmodelu2},
\be
\label{F(0,h)modh3prelim}
-\frac{ih}2 T\widetilde v(0,0)=\left( F(0,h)-h^2\widetilde F_1(0)|W(\rho_-(E))|^2\right) T\widetilde v(0,0)+\cO(h^3),
\ee
where we have denoted by $W(x,\xi):=r_0(x)+ir_1(x)\xi$ the principal symbol of $W(x,hD_x)$.\\
Since the microsupport of $\widetilde v$ near $(0,0)$ coincides with $\eta =0$, and $\widetilde v(0) \sim 1$, 
a stationary-phase expansion shows that $T\widetilde v(0,0)\sim 1$, too, as $h\to 0_+$. 
Therefore \eqref{F(0,h)modh3prelim} implies,
\be
\label{F(0,h)modh3prelim2}
F(0,h) = -\frac{ih}2+h^2\widetilde F_1(0)|W(\rho_-(E))|^2+\cO(h^3).
\ee
On the other hand, by definition we have,
$$
\widetilde F_1(0)=F_1(0,h)+\cO(h) = \partial_tF(0,h)+\cO(h) = f_0'(0) +\cO(h),
$$
and the result follows from \eqref{f'(0)} and \eqref{F(0,h)modh3prelim2}.
\end{proof}

Now, thanks to Proposition \ref{propF(0)}, we can re-write \eqref{eqModel} as,
\be
\label{eqModelbis}
y\frac{ d\widetilde v_2}{dy}\sim i\mu h\, \widetilde v_2  \, \mbox{ microlocally near } (0,0).
\ee
A basis of solutions to \eqref{eqModelbis} is given by the two functions,
$$
u^\dashv (y) := H(-y)|y|^{i\mu h}\quad ; \quad u^\vdash (y):= H(y)|y|^{i\mu h},
$$
where $H$ stands for the Heaviside function. In particular, near $(0,0)$, their microsupports satisfy,
$$
\begin{aligned}
& MS(u^\dashv (y)) = \{ y< 0,\, \eta =0\}\cup \{ y=0\};\\
& MS(u^\vdash (y)) = \{ y> 0,\, \eta =0\}\cup \{ y=0\}.
\end{aligned}
$$
As a consequence, by construction the functions 
$v_2^\dashv :=U^{-1}u^\dashv $ and $v_2^\vdash :=U^{-1}u^\vdash$ 
are solutions to $Qv_2\sim0$ microlocally  in a small neighborhood 
${\mathcal V}_-$ of $\rho_-(E)$, and their microsupports satisfy,
$$
\begin{aligned}
& MS(v_2^\dashv ( x)) \cap{\mathcal V}_- \subset (\Gamma_1(E)\cup\Gamma_{2,L}^-(E)) \cap{\mathcal V}_-;\\
& MS(v_2^\vdash ( x)) \cap{\mathcal V}_- \subset (\Gamma_1(E)\cup\Gamma_{2,R}^-(E)) \cap{\mathcal V}_-.
\end{aligned}
$$
Going back to \eqref{systemrewrite}, if we set $v_1^\dashv:= -\frac1{h}W_*^{-1}(P_2-E)v_2^\dashv$, $v_1^\vdash:= -\frac1{h}W_*^{-1}(P_2-E)v_2^\vdash$, $v^\dashv:= (v_1^\dashv,v_2^\dashv)$, and $v^\vdash= (v_1^\vdash,v_2^\vdash)$, then $v^\dashv$ and $v^\vdash$ are both solutions to $(P-E)v\sim 0$ 
microlocally  in  ${\mathcal V}_-$, and their microsupports satisfy,
$$
\begin{aligned}
& MS(v^\dashv ( x)) \cap{\mathcal V}_- \subset  (\Gamma_1(E)\cup\Gamma_{2,L}^-(E))
 \cap{\mathcal V}_-;\\
& MS(v^\vdash ( x)) \cap{\mathcal V}_- \subset  (\Gamma_1(E)\cup\Gamma_{2,R}^-(E))
 \cap{\mathcal V}_-.
\end{aligned}
$$

Then we plan to compute the microlocal asymptotic behaviour of $v^\dashv$ and $v^\vdash$ on each of the three branches of their microsupport. 

We first observe that, by a convenient normalization of the Fourier integral operator $U$, we can assume,
$$
U^{-1} u (x;h) = \int_\R e^{i\psi (x,y)/h} c(x,y;h)u (y) dy,
$$ 
where $c\sim \sum_{k\geq 0}h^kc_k$ is an analytic symbol, $c_0(0,0)=1$, and $\psi$ is a generating function of $\kappa^{-1}$, in the sense that we have  $\kappa^{-1}\, :\, (y, -\nabla_y\psi) \mapsto (x, \nabla_x\psi )$.

Now we compute the asymptotic behaviour of the solutions $v^\dashv$ and $v^\vdash$ on each of $\Gamma_{1,L}^- (E)$, $\Gamma_{1,R}^- (E)$, $\Gamma_{2,R}^- (E)$, or $\Gamma_{2,L}^- (E)$. Microlocally near these curves, the operator
$Q$ is of principal type (it can be microlocally transformed into $hD_y$), and thus the space of microlocal solutions to $(P-E)u=0$ is one-dimensional. 
In addition, using the phase functions $-\nu_j^0(x)$ ($j=1,2$), it is not difficult to construct microlocal WKB solutions there, of the type $a^-(x;h)e^{-i\nu_1^0(x)/h}$ on 
$\Gamma_{1,R}^- (E)$ or $\Gamma_{1,L}^- (E)$, and of the type $b^-(x;h)e^{-i\nu_2^0(x)/h}$ on $\Gamma_{2,R}^- (E)$ or $\Gamma_{2,L}^- (E)$, where $a$ and $b$ are elliptic analytic (vector-valued) symbols that become singular at $x=0$. More precisely,
\begin{proposition}
\label{wkb-}
There exist functions $f_{1,L}^-, f_{1,R}^-, f_{2,L}^-, f_{2,R}^-\in L^2(\R)$ such that 
$$
(P-E)f_{1,L}^-\sim 0,\quad f_{1,L}^-\sim 
\left (
\begin{array}{c}
a_1^{-} \\[8pt]
ha_2^{-}
\end{array}
\right )
e^{- i\nu_1^0(x)/h}
\,\,\,{\rm microlocally\, on}\,\,
\Gamma_{1,L}^-(E),
$$
$$
(P-E)f_{1,R}^-\sim 0,\quad f_{1,R}^-\sim 
\left (
\begin{array}{c}
a_1^{-} \\[8pt]
ha_2^{-}
\end{array}
\right )
e^{- i\nu_1^0(x)/h}
\,\,\,{\rm microlocally\, on}\,\,
\Gamma_{1,R}^-(E),
$$
$$
(P-E)f_{2,L}^-\sim 0,\quad f_{2,L}^-\sim 
\left (
\begin{array}{c}
hb_1^{-} \\[8pt]
b_2^{-}
\end{array}
\right )
e^{- i\nu_2^0(x)/h}
\,\,\,{\rm microlocally\, on}\,\,
\Gamma_{2,L}^-(E),
$$
$$
(P-E)f_{2,R}^-\sim 0,\quad f_{2,R}^-\sim 
\left (
\begin{array}{c}
hb_1^{-} \\[8pt]
b_2^{-}
\end{array}
\right )
e^{- i\nu_2^0(x)/h}
\,\,\,{\rm microlocally\, on}\,\,
\Gamma_{2,R}^-(E).
$$
Here, 
$$
\nu_j^0(x):= \int_0^x \sqrt{ E-V_j(t)} \,dt \quad (j=1,2),
$$
and $a_j^-= a_j^-(x;h)\sim \sum_{k\geq 0}h^ka_{j,k}^-(x)$ and $b_j^-=b_j^-(x;h)\sim \sum_{k\geq 0}h^kb_{j,k}^-(x)$ ($j=1,2$) are analytic symbols whose first coefficients are given by \eqref{coeffaj}, \eqref{coeffbj} below.
\end{proposition}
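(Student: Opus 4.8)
The plan is to obtain the four functions by the classical WKB (geometric optics) construction, carried out along each branch separately, exactly as announced before the statement (these are the ``usual WKB constructions'' away from the crossing points). I write the argument for $f_{1,L}^-$ and $f_{1,R}^-$; the cases of $f_{2,L}^-$ and $f_{2,R}^-$ follow by exchanging the indices $1$ and $2$ and replacing $\nu_1^0$ by $\nu_2^0$. The key point is that on each of the open curves $\Gamma_{1,L}^-(E)$, $\Gamma_{1,R}^-(E)$ the operator $P-E$ is of principal type once one uses the microlocal ellipticity of $P_2-E$ there to express $v_2$ in terms of $v_1$; hence the space of microlocal solutions along such a branch is one dimensional, and a generator can be written down explicitly by solving an eikonal equation together with a hierarchy of transport equations.

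First I would make the (vector-valued) ansatz $f=c(x;h)e^{-i\nu_1^0(x)/h}$ with $c={}^t(c_1,c_2)$. Since $\xi=-\partial_x\nu_1^0(x)=-\sqrt{E-V_1(x)}$ parametrises $\Gamma_1^-$, the phase $-\nu_1^0$ already solves the eikonal equation attached to the diagonal entry $P_1-E$; plugging the ansatz into $(P-E)f=0$ and ordering in powers of $h$ then gives, on the first line, the usual scalar transport hierarchy for $P_1-E$ with a source coming from the coupling term $hWc_2$, and, on the second line, the relation $(P_2-E)\bigl(c_2e^{-i\nu_1^0/h}\bigr)=-hW^*\bigl(c_1e^{-i\nu_1^0/h}\bigr)$. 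Along $\xi=-\sqrt{E-V_1(x)}$ the principal symbol of $P_2-E$ equals $V_2(x)-V_1(x)$, which by Assumption (A4) is non-zero on $\Gamma_1^-$ away from $x=0$ and has a simple zero at $x=0$ of strength $V_1'(0)-V_2'(0)=\gamma>0$; hence $P_2-E$ is microlocally elliptic on $\Gamma_1^-\setminus\{x=0\}$, and at each order $c_2$ is determined algebraically from $c_1$, one power of $h$ smaller. Thus $c_1=a_1^-$, $c_2=ha_2^-$ with $a_1^-,a_2^-=\cO(1)$, and it is precisely the division by $V_2(x)-V_1(x)$ that makes $a_2^-$ (and, through the higher-order transport equations, $a_1^-$ as well) blow up polynomially as $x\to0$. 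Reading off the leading terms: $a_{1,0}^-$ solves the first-order transport equation of $P_1-E$, whence $a_{1,0}^-$ is proportional to $(E-V_1(x))^{-1/4}$, and $a_{2,0}^-$ is obtained from $a_{1,0}^-$ by multiplication by $-\overline W(x,-\sqrt{E-V_1(x)})/(V_2(x)-V_1(x))$, where $\overline W$ denotes the principal symbol of $W^*$; after fixing the remaining multiplicative constant this is exactly \eqref{coeffaj}. The recursion for $a_{j,k}^-$, $k\geq1$, is solved in the same manner, each step costing one more derivative and one more division by $V_2-V_1$, so the $a_{j,k}^-$ extend holomorphically to a fixed complex neighbourhood of the relevant real interval with, on compact subsets avoiding $x=0$ and the turning point, bounds of analytic-symbol type.

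Next I would resum these formal symbols. Thanks to the analyticity hypotheses (A1) and (A5), $a_1^-$ and $a_2^-$ are analytic symbols in the sense of \cite{Sj} on any compact subset of the open curve $\Gamma_{1,L}^-(E)$ (resp.\ $\Gamma_{1,R}^-(E)$); resumming them up to an error $\cO(e^{-\alpha/h})$ with $\alpha>0$, forming $(\text{resummed }a^-)\,e^{-i\nu_1^0(x)/h}$ and multiplying by a cut-off $\chi(x)$ equal to $1$ on a fixed compact sub-curve of the branch and vanishing near its two endpoints, produces $f_{1,L}^-,f_{1,R}^-\in L^2(\R)$ that satisfy $(P-E)f_{1,S}^-\sim0$ and have the announced asymptotics microlocally on $\Gamma_{1,S}^-(E)$ (note that the two limit points of this branch, namely the crossing point $\rho_-(E)$, where $x\to0$, and the turning point, where $\xi\to0$, do not belong to $\Gamma_{1,S}^-(E)$, so the polynomial blow-up of the amplitudes there is irrelevant for the statement). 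Exchanging $1$ and $2$ gives $f_{2,L}^-,f_{2,R}^-$, with $b_{2,0}^-$ proportional to $(E-V_2(x))^{-1/4}$ and $b_{1,0}^-$ obtained from $b_{2,0}^-$ by multiplication by $-\overline W(x,-\sqrt{E-V_2(x)})/(V_1(x)-V_2(x))$, which is \eqref{coeffbj}. Finally, all the data ($\nu_j^0$, the coefficients $a_{j,k}^-$, $b_{j,k}^-$, and the cut-offs, which can be chosen $E$-independent) depend holomorphically on $E$ in a fixed complex neighbourhood of $[E_0-\delta_0,E_0+\delta_0]$, so the construction carries over to $E\in\mathcal D_h(\delta_0,C_0)$.

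The main obstacle is organisational rather than conceptual: one must keep the analytic-symbol estimates uniform on regions that approach --- but never reach --- the crossing point $x=0$ (where the amplitudes are singular, and where the relevant microlocal information is instead carried by the model solutions $v^\dashv$, $v^\vdash$ constructed above) and the turning points $a(E),b(E),c(E)$ (where one would match with the Airy-type solutions $u_{j,S}^-$ of Section \ref{sect3}); and one must pin down the free normalisation constants so that the leading coefficients equal \eqref{coeffaj}--\eqref{coeffbj} exactly, rather than merely up to a constant. Apart from this, the argument is the standard WKB construction for a scalar operator of principal type, transported to the $2\times2$ system by means of the microlocal ellipticity of the off-characteristic component.
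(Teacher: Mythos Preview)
Your proposal is correct and follows the same approach as the paper: the WKB ansatz with phase $-\nu_j^0$, the algebraic determination of the off-characteristic component via the ellipticity of $P_{3-j}-E$ along $\Gamma_j^-$, and the computation of the leading coefficients by solving the transport equations. The paper's proof is in fact considerably terser than yours --- it merely writes down the ansatz and records \eqref{coeffaj}--\eqref{coeffbj} --- so your additional details on resummation, cut-off to land in $L^2$, and analytic dependence on $E$ are a welcome expansion rather than a deviation.
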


\begin{proof}
On $\Gamma_1^-  (E)$, writing $a^-=\left(\begin{array}{c} a_1^- \\ ha_2^- \end{array}\right)$ with $a_j^-(x;h)\sim \sum_{k\geq 0}h^ka_{j,k}^-(x)$ ($j=1,2$), and solving the transport equations, we find,
\be
\label{coeffaj}
\begin{aligned}
& a_{1,0}^-=\frac1{(E-V_1)^{\frac14}}\quad ; \quad a_{2,0}^-=\frac{r_0+ir_1\sqrt{E-V_1}}{(V_1-V_2)(E-V_1)^{\frac14}}.
\end{aligned}
\ee
Similarly,  on $\Gamma_2^-(E)$, writing $b^-=\left(\begin{array}{c} hb_1^- \\ b_2^- \end{array}\right)$ with $b_j^-\sim \sum_{k\geq 0}h^kb_{j,k}^-(x)$ ($j=1,2$), we obtain,
\be
\label{coeffbj}
\begin{aligned}
& b_{2,0}^-=\frac1{(E-V_2)^{\frac14}}\quad ; \quad b_{1,0}^-=\frac{r_0-ir_1\sqrt{E-V_2}}{(V_2-V_1)(E-V_2)^{\frac14}}.
\end{aligned}
\ee
\end{proof}

\begin{proposition}\sl 
\label{exprho-}
Let ${\mathcal V}_-$ be a small enough neighbourhood of $\rho_-(E)$. 
The functions $v^\dashv$ and $v^\vdash$ have the following microlocal asymptotic behaviours:
\be
\label{vgauche}
\begin{aligned}
& v^\dashv \sim \alpha_R^\dashv h^{i\mu h}
 f_{1,R}^- \, \mbox{ microlocally on } \Gamma_{1,R}^- (E)\cap {\mathcal V}_-;\\
& v^\dashv \sim \alpha_L^\dashv h^{i\mu h}  f_{1,L}^- \, \mbox{ microlocally on } \Gamma_{1,L}^- (E)\cap {\mathcal V}_-;\\
& v^\dashv \sim \beta_L^\dashv \sqrt{h} f_{2,L}^- \, \mbox{ microlocally on } \Gamma_{2,L}^- (E)\cap {\mathcal V}_-,
\end{aligned}
\ee
and,
\be
\label{vgdroite}
\begin{aligned}
& v^\vdash \sim \alpha_R^\vdash h^{i\mu h} f_{1,R}^- \, \mbox{ microlocally on } \Gamma_{1,R}^- (E)\cap {\mathcal V}_-;\\
& v^\vdash \sim \alpha_L^\vdash h^{i\mu h} f_{1,L}^- \, \mbox{ microlocally on } \Gamma_{1,L}^- (E)\cap {\mathcal V}_-;\\
& v^\vdash \sim \beta_R^\vdash\sqrt{h} f_{2,R}^- \, \mbox{ microlocally on } \Gamma_{2,R}^- (E)\cap {\mathcal V}_-,
\end{aligned}
\ee
with $\alpha_S^\dashv= \alpha_S^\dashv(h)\sim  \sum_{k\geq 0}h^k\alpha_{S,k}^\dashv $, $\beta_S^\dashv=\beta_S^\dashv(h) \sim \sum_{k\geq 0}h^k\beta_{S,k}^\dashv $, $\alpha_S^\vdash= \alpha_S^\vdash(h)\sim  \sum_{k\geq 0}h^k\alpha_{S,k}^\vdash $, $\beta_S^\vdash=\beta_S^\vdash(h) \sim \sum_{k\geq 0}h^k\beta_{S,k}^\vdash $ ($S=L,R$),
\be
\label{alphabeta1}
\begin{aligned}
&  \alpha_{L,0}^\dashv = \alpha_{R,0}^\dashv =  - \alpha_{L,0}^\vdash=-\alpha_{R,0}^\vdash=\frac{i(\tau_1+\tau_2)E^{\frac14}}{r_0(0)+ir_1(0)\sqrt{E}};\\
& \beta_{L,0}^\dashv =\beta_{R,0}^\vdash= \sqrt{\pi (\tau_1+\tau_2)}e^{i\pi /4}.
\end{aligned}
\ee
\end{proposition}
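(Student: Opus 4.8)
The plan is to use that, away from $\rho_-(E)$, the operator $Q$ (equivalently the system $(P-E)\cdot\sim 0$) is of principal type on each of the three branches that carry the microsupport of $v^\dashv$, resp. of $v^\vdash$; hence along each branch the space of microlocal solutions is one-dimensional and is generated by the WKB solution produced in Proposition~\ref{wkb-}. Thus $v^\dashv$ coincides, on the respective branches, with a scalar multiple of $f_{1,L}^-$, $f_{1,R}^-$, $f_{2,L}^-$, and $v^\vdash$ with a scalar multiple of $f_{1,L}^-$, $f_{1,R}^-$, $f_{2,R}^-$, so that the whole content of the statement is to compute these scalars to leading order. Since $v_1^{\dashv/\vdash}=-\frac1hW_*^{-1}(P_2-E)v_2^{\dashv/\vdash}$ by \eqref{systemrewrite}, and both components of each $f_{j,S}^-$ carry the same scalar (because $(P-E)f_{j,S}^-\sim 0$), it suffices to analyse the single scalar function $v_2^\dashv=U^{-1}u^\dashv$ (and likewise $v_2^\vdash=U^{-1}u^\vdash$) on each branch. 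First I would fix the representation $U^{-1}u(x)=\int_\R e^{i\psi(x,y)/h}c(x,y;h)\,u(y)\,dy$ with $c_0(0,0)=1$ and $\psi$ a generating function of $\kappa^{-1}$, and read off from the affine part \eqref{choixrot} of $\kappa$ that $\kappa$ sends $\Gamma_1(E)$ to $\{y=0\}$ and $\Gamma_2(E)$ to $\{\eta=0\}$, with $\eta\simeq x$ along $\Gamma_1(E)$, $y\simeq\frac{\tau_1+\tau_2}{2\sqrt E}\,x$ along $\Gamma_2(E)$, and $\partial_y^2\psi=\frac{2\sqrt E}{\tau_1+\tau_2}+o(1)$ near the point over $\rho_-(E)$; in particular $\psi(x,0)=-\nu_1^0(x)$ along $\Gamma_1(E)$ and $\psi(x,y^*(x))=-\nu_2^0(x)$ along $\Gamma_2(E)$, where $y^*(x)$ is the non-degenerate critical point of $\psi(x,\cdot)$.

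On the $\Gamma_2$-branch ($\Gamma_{2,L}^-(E)$ for $v^\dashv$, $\Gamma_{2,R}^-(E)$ for $v^\vdash$), the relevant part of $MS(u^\dashv)$ is the smooth half-line $\{y<0,\ \eta=0\}$ away from $y=0$, on which $u^\dashv$ is a classical symbol, so the stationary-phase theorem applies directly and gives $U^{-1}u^\dashv(x)\sim e^{i\psi(x,y^*(x))/h}|y^*(x)|^{i\mu h}c_0(x,y^*(x))\sqrt{2\pi h/|\partial_y^2\psi(x,y^*(x))|}\,e^{i\pi/4}$. Matching with the leading part $b_{2,0}^-(x)e^{-i\nu_2^0(x)/h}=(E-V_2(x))^{-1/4}e^{-i\nu_2^0(x)/h}$ of $f_{2,L}^-$, inserting $c_0(0,0)=1$, $(E-V_2(0))^{-1/4}=E^{-1/4}$, $\partial_y^2\psi\to 2\sqrt E/(\tau_1+\tau_2)$, and simplifying $E^{1/4}\sqrt{2\pi/(2\sqrt E/(\tau_1+\tau_2))}=\sqrt{\pi(\tau_1+\tau_2)}$, yields both the $\sqrt h$ prefactor and $\beta_{L,0}^\dashv=\sqrt{\pi(\tau_1+\tau_2)}\,e^{i\pi/4}$; the change $y\mapsto-y$ gives the analogous statement for $v^\vdash$, with $\beta_{R,0}^\vdash=\beta_{L,0}^\dashv$.

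On the two $\Gamma_1$-branches the point $(x,\nabla_x\psi)$ corresponds under $\kappa$ to $(0,\eta)$ with $\eta\simeq x\neq 0$, so $\partial_y\psi(x,0)=-\eta\neq 0$: there is no critical point and the leading contribution to $\int e^{i\psi/h}c\,u^\dashv\,dy$ comes entirely from the conormal singularity of $u^\dashv$ at $y=0$. Freezing $c$ at $c_0(x,0)$ and linearizing $\psi$ in $y$, one reduces to the model integral $\int_{-\infty}^0 e^{iy\partial_y\psi(x,0)/h}|y|^{i\mu h}\,dy=\Gamma(1+i\mu h)(-i\eta/h)^{-1-i\mu h}$ (a boundary value of the homogeneous distribution $\eta^{-1-i\mu h}$, $\Gamma$ being the Euler Gamma function); writing $(\eta/h)^{-1-i\mu h}=\frac{h}{\eta}(\eta/h)^{-i\mu h}$ isolates exactly the factor $h^{i\mu h}$ of the statement, while $\Gamma(1+i\mu h)=1+\cO(h)$. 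This gives $v_2^\dashv\sim i\,h^{i\mu h}\,\frac{h}{x}\,|x|^{-i\mu h}\,e^{-i\nu_1^0(x)/h}(1+\cO(h))$ near $x=0$ (using $c_0(0,0)=1$, $\psi(x,0)=-\nu_1^0(x)$, $\eta\simeq x$); the twist $|x|^{-i\mu h}$ is precisely the one carried by $f_{1,R}^-$ (resp. $f_{1,L}^-$) near $\rho_-(E)$, which originates from the effective scalar potential $-h^2|W|^2/(V_2-V_1)\sim h^2/x$, i.e. from the same mechanism that produces $\mu$ in Proposition~\ref{propF(0)}. Matching with the second component $h\,a_{2,0}^-(x)e^{-i\nu_1^0(x)/h}$ of $f_{1,R}^-$, the twists cancel and, using $a_{2,0}^-(x)\sim\frac{r_0(0)+ir_1(0)\sqrt E}{(\tau_1+\tau_2)E^{1/4}\,x}$ as $x\to 0$, one finds $\alpha_{R,0}^\dashv=\alpha_{L,0}^\dashv=\frac{i(\tau_1+\tau_2)E^{1/4}}{r_0(0)+ir_1(0)\sqrt E}$; passing from $u^\dashv$ to $u^\vdash$ multiplies the Fourier transform by $(-1)^{-1-i\mu h}=-(1+\cO(h))$, so $\alpha^\vdash=-\alpha^\dashv$ to leading order. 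Finally, applying $-\frac1hW_*^{-1}(P_2-E)$ and using that on $\Gamma_1^-(E)$ the symbol of $P_2-E$ equals $V_2-V_1$ and that of $W^*$ equals $r_0+ir_1\sqrt{E-V_1}$ recovers the first component $a_{1,0}^-e^{-i\nu_1^0/h}$ with the same scalar, so that all the coefficients in \eqref{alphabeta1} are identified; the existence of the full symbols $\alpha_S^{\dashv/\vdash}$, $\beta_S^{\dashv/\vdash}$ then follows since every ingredient is (up to the $|x|^{\pm i\mu h}$-twist) an analytic symbol.

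The step I expect to be the main obstacle is this $\Gamma_1$-analysis: transporting the conormal singularity of $u^\dashv$ through the Fourier integral operator $U^{-1}$ rigorously, i.e. controlling the error of the linearization of $\psi$ and of the freezing of $c$ uniformly down to $x$ of size $\cO(h)$ (where the $1/x$-singularities of $a_{2,0}^-$ and of $\eta^{-1}$ become $\cO(1)$), carefully bookkeeping the cancellation of the $|x|^{\pm i\mu h}$-twist against the corresponding singular behaviour of $f_{j,S}^-$, and, above all, pinning down every normalizing constant (the $c_0(0,0)=1$ normalization of $U$, the Jacobian of $\kappa$ at $\rho_-(E)$ from \eqref{choixrot}, and the phase factors coming from the boundary values $(\mp i)^{-1-i\mu h}$) so that the connection coefficients come out exactly as in \eqref{alphabeta1} and not merely up to an undetermined multiplicative constant.
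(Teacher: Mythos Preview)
Your strategy coincides with the paper's: write $v_2^{\dashv/\vdash}=U^{-1}u^{\dashv/\vdash}$ as the oscillatory integral $\int e^{i\psi(x,y)/h}c(x,y;h)\,u^{\dashv/\vdash}(y)\,dy$, separate the contribution of the non-degenerate critical point $y_c(x)$ of $\psi(x,\cdot)$ (the $\Gamma_2$-branch, by stationary phase) from that of the endpoint singularity at $y=0$ (the $\Gamma_1$-branches, by a Laplace-transform computation after a complex change of contour), and match each against the second component of the corresponding WKB solution of Proposition~\ref{wkb-}. Your $\Gamma_2$-computation and the sign relation $\alpha^\vdash_{S,0}=-\alpha^\dashv_{S,0}$ are exactly as in the paper.

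Where you diverge is in the $\Gamma_1$-analysis, and the paper's argument is simpler than what you anticipate. After the scaling $y\mapsto hy$ near $y=0$, the endpoint contribution is written as $h^{1+i\mu h}\,\widetilde\sigma^\dashv(x;h)\,e^{-i\nu_1^0(x)/h}$ with $\widetilde\sigma^\dashv$ a \emph{classical} symbol whose principal part is $\widetilde\sigma_0^\dashv(x)=ix^{-1}c_0(x,0)$. The factor $|\eta|^{-i\mu h}\sim|x|^{-i\mu h}$ that you isolate is simply absorbed into this symbol, since at each fixed $x\neq 0$ one has $|x|^{-i\mu h}=1-i\mu h\log|x|+\cdots=1+\cO(h)$. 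The matching then reads $\widetilde\sigma^\dashv=\alpha_L^\dashv\,a_2^-$ as an equality of classical symbols, valid first for $x<0$; since the principal symbols on both sides extend meromorphically across $x=0$, one multiplies through by $x$ and sends $x\to 0$, and the residue of $a_{2,0}^-$ at $0$ (coming from $V_1-V_2=(\tau_1+\tau_2)x+\cO(x^2)$) immediately gives $\alpha_{L,0}^\dashv$. The same argument on $x>0$ gives $\alpha_{R,0}^\dashv$.

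Two of your anticipated obstacles therefore disappear. First, there is no need to control anything ``uniformly down to $x$ of size $\cO(h)$'': the proportionality factor is $x$-independent because both $v_2^\dashv$ and $ha_2^-e^{-i\nu_1^0/h}$ are microlocal solutions of a principal-type equation along $\Gamma_{1,S}^-$, whose solution space is one-dimensional; one matches once, at a single fixed $x$, and extracts the constant by the meromorphic-extension trick. Second, your assertion that $f_{1,S}^-$ itself carries a compensating $|x|^{-i\mu h}$ twist is neither used nor established in the paper; whether the higher $a_{2,k}^-$ resum to such a structure is irrelevant here, since only the principal symbols enter the computation of \eqref{alphabeta1}.
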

\begin{proof}
We just have to  determine the coefficients of microlocal proportionality between $v^\dashv$ (respectively $v^\vdash$) and the WKB solutions.

Using \eqref{choixrot}-\eqref{approxkappa}, we see that the phase function $\psi$ in the definition of $U^{-1}$ satisfies,
\be
\label{exppsi}
\psi (x,y) =\frac{\tau_1}{4\sqrt E}x^2 + \frac{\sqrt E}{\tau_1+\tau_2}y^2 -xy -\sqrt{E}\, x + \cO(|(x,y)|^3).
\ee
Near 0 the map $y\mapsto \psi (x,y)$ admits a unique critical point $y_c(x)$, that is non-degenerate and is such that,
$$
y_c(x) = \frac{\tau_1+\tau_2}{2\sqrt E} x + \cO(x^2).
$$
In particular, the corresponding critical value $\psi(x,y_c(x))$ vanishes at $x=0$, 
and since we know that  $\kappa^{-1}$ sends $\{\eta =0\}$ onto $\Gamma_2(E)\cap {\mathcal V}_-$, we necessarily have 
$$\psi (x,y_c(x)) = -\nu_2^0(x).$$


Working with $v^\dashv$ on $x<0$, by definition we have,
$$
v_2^\dashv(x;h) =\int_{-\infty}^0 e^{i\psi (x,y)/h} c(x,y;h)|y|^{i\mu h} dy.
$$
Only two points contribute, up to exponentially small quantities, to this integral. One is the critical point $y_c(x)$ of $\psi$, and the other one is the singular point $y=0$  of $u^\dashv$.
By the stationary-phase theorem, the  contribution of $y_c(x)$ is of  the type
$$
\sqrt{2\pi h}\, \sigma^\dashv (x;h)e^{-i\nu_2^0(x)/h},
$$
where $\sigma^\dashv(x;h)\sim \sum_{k\geq 0} h^k \sigma^\dashv_k(x)$ is a classical symbol of order 0, with $\sigma_0^\dashv (x) = e^{i\pi /4}(\partial_y^2 \psi (x, y_c(x)))^{-\frac12}c_0(x, y_c(x))$ (here we use the fact that $\partial_y^2\psi (0,0)=\frac{2\sqrt E}{\tau_1+\tau_2} >0$). 

On the other hand, the contribution of $y=0$ can be computed by a complex change of contour of integration, 
that reduces the integral to a Laplace transform (see, e.g., \cite{Er}). After the change of scale $t\to ht$, we obtain an expression of the type,
$$
h^{1+i\mu h}\widetilde \sigma^\dashv (x;h) e^{i\psi (x,0)/h}
$$
where $\widetilde \sigma^\dashv(x;h)\sim \sum_{k\geq 0} h^k \widetilde \sigma^\dashv_k(x)$ is a classical symbol of order 0, 
with $\widetilde \sigma_0^\dashv (x) = ix^{-1}c_0(x, 0)$. Since $\kappa^{-1}$ sends $\{ y=0\}$ onto $\Gamma_1(E)\cap{\mathcal V}_-$ and $\psi (0,0)=0$, we necessarily have $\psi (x,0)=-\nu_1^0(x)$, and thus, locally on  $\{x<0\}$, we finally obtain,
$$
v_2^\dashv(x;h) = \sqrt{2\pi h}\, \sigma^\dashv (x,h)e^{-i\nu_2^0(x)/h}+h^{1+i\mu h}\widetilde \sigma^\dashv (x;h) e^{-i\nu_1^0(x)/h}+\cO(e^{-\varepsilon /h}),
$$
with $\varepsilon >0$ constant. In particular, using \eqref{systemrewrite}, this also gives the asymptotic behaviour of $v_1^\dashv$ there, and thus also the microlocal asymptotic behaviour of $v^\dashv$ on $\Gamma_{1,L}^- (E)$ and on $\Gamma_{2,L}^- (E)$. Comparing with the microlocal  WKB solutions we deduce the existence of constants  $\alpha_L^\dashv\sim \sum_{k\geq 0}h^k\alpha_{L,k}^\dashv $ and $\beta_L^\dashv\sim \sum_{k\geq 0}h^k\beta_{L,k}^\dashv $, such that \eqref{vgauche} holds, and thus in particular,
$$
\sqrt{2\pi }\, \sigma^\dashv  = \beta_L^\dashv  \,b_2^-\quad ; \quad \widetilde \sigma^\dashv  =\alpha_L^\dashv a_2^-.
$$
At the level of principal symbols, we obtain,
$$
\begin{aligned}
& \sqrt{2\pi }\,e^{i\pi /4}(\partial_y^2 \psi (x, y_c(x)))^{-\frac12}c_0(x, y_c(x))= \beta_{L,0}^\dashv (E-V_2(x))^{-\frac14};\\
& ix^{-1}c_0(x, 0)= \alpha_{L,0}^\dashv \frac{r_0(x)+ir_1(x)\sqrt{E-V_1(x)}}{(V_1(x)-V_2(x))(E-V_1(x))^{\frac14}},
\end{aligned}
$$
where, at a first stage, the equalities are valid for $x<0$ only. But since all the functions extend analytically to a whole neighbourhood of $0$, and $V_1(x)-V_2(x) = (\tau_1+\tau_2)x+\cO(x^2)$ , we can multiply the second equality by $x$ and  make $x$ tend to 0. Then,  using \eqref{exppsi} and the fact that $c_0(0,0)=1$, we obtain the values of $\alpha_{L,0}^\dashv $ and $\beta_{L,0}^\dashv $ given in \eqref{alphabeta1}. 

The same arguments hold for $x>0$, with the difference that only $y=0$ contributes to the integral (in this case the critical point does not belong to the interval of integration), and we obtain the expression of $\alpha_{R}^\dashv $ given in \eqref{alphabeta1}.

The result for $v^\vdash$ follows in the same way.
\end{proof}

Now we exchange the role of the indices 1 and 2. Namely, we reduce the system to a scalar equation with unknown $v_1$:
\be
\label{systemrewrite2}
\left\{ \begin{aligned}
& v_2\sim -\frac1{h}W^{-1}(P_1-E)v_1;\\
& \widehat Qv_1:=W(P_2-E)W^{-1}(P_1-E)v_1 -h^2WW^*v_1\sim 0.
\end{aligned}
\right.
\ee
The equation $\widehat Q v_1\sim 0$ is reduced to $G_0\widehat v_1\sim \widehat F(0,h)\widehat v_1$
with $\widehat v_1:= \widehat U v_1$, where the Fourier integral operator $\widehat U$ is given by,
$$
\widehat U^{-1}u(x,h)=\int_\R e^{i\widehat\psi(x,y)/h}\widehat c(x,y;h)u(y)dy,
$$
$$
\widehat\psi(x,y)=-\frac{\tau_2}{4\sqrt E}x^2-\frac{\sqrt E}{\tau_1+\tau_2}y^2 + xy - \sqrt Ex+\cO(|(x,y)|^3),
$$
$$
\widehat c\sim  \sum_{k=0}^\infty h^kc_k(x,y),\quad \widehat c_0(0,0)=1.
$$
and the formal analytic symbol $\widehat F(t,h)$ is such that,
$$
{\widehat F(0,h)=-\frac{i}{2}h + \widehat \mu h^2,}
$$
with
$$
\widehat \mu= +\frac {r_0(0)^2+r_1(0)^2E}{2(\tau_1+\tau_2)\sqrt E}+\cO(h).
$$
Notice that $\widehat \mu = -\mu$ modulo $\cO(h)$. 
Let us define as before
$$\widehat u^\dashv(y):=H(-y)|y|^{i\widehat\mu h}, \quad \widehat u^\vdash(y):=H(y)|y|^{i\widehat\mu h},
$$
$$
\widehat v_1^\dashv:=\widehat U^{-1}\widehat u^\dashv,\quad 
\widehat v_1^\vdash:=\widehat U^{-1}\widehat u^\vdash,
$$
$$
\widehat v_2^\dashv=-\frac1{h}W^{-1}(P_1-E)\widehat v_1^\dashv,\quad 
\widehat v_2^\vdash=-\frac1{h}W^{-1}(P_1-E)\widehat v_1^\vdash.
$$
Then $\widehat v^\dashv:=(\widehat v_1^\dashv,\widehat v_2^\dashv)$ and 
$\widehat v^\vdash:=(\widehat v_1^\vdash,\widehat v_2^\vdash)$ are both solutions to $(P-E)v\sim 0$ microlocally in a small neighborhood of $\rho_-(E)$ that we still denote by  ${\mathcal V}_-$,
and their microsupports satisfy,
$$MS(\widehat v^\dashv)\cap {\mathcal V}_- \subset (\Gamma_2(E)\cup\Gamma_{1,L}^-(E))\cap{\mathcal V}_-;$$
$$MS(\widehat v^\vdash)\cap {\mathcal V}_- \subset (\Gamma_2(E)\cup\Gamma_{1,R}^-(E))\cap{\mathcal V}_-.$$

\begin{proposition}\sl 
\label{exprho-bis}
The functions $\widehat v^\dashv$ and $\widehat v^\vdash$ have the following microlocal asymptotic behaviours:
\be
\label{vgauche2}
\begin{aligned}
& \widehat v^\dashv \sim \widehat \alpha_R^\dashv h^{i \hat{\mu} h}
f_{2,R}^- \, \mbox{ microlocally on } \Gamma_{2,R}^- (E)\cap {\mathcal V}_-;\\
& \widehat v^\dashv \sim \widehat \alpha_L^\dashv h^{i\hat{\mu}  h} f_{2,L}^- \, \mbox{ microlocally on } \Gamma_{2,L}^- (E)\cap {\mathcal V}_-;\\
& \widehat v^\dashv \sim \widehat \beta_L^\dashv \sqrt{h} f_{1,L}^- \, \mbox{ microlocally on } \Gamma_{1,L}^- (E)\cap {\mathcal V}_-,
\end{aligned}
\ee
and,
\be
\label{vgdroite2}
\begin{aligned}
& \widehat v^\vdash \sim \widehat \alpha_R^\vdash h^{i \hat{\mu} h} f_{2,R}^- \, \mbox{ microlocally on } \Gamma_{2,R}^- (E)\cap {\mathcal V}_-;\\
& \widehat v^\vdash \sim \widehat \alpha_L^\vdash h^{i \hat{\mu} h} f_{2,L}^- \, \mbox{ microlocally on } \Gamma_{2,L}^- (E)\cap {\mathcal V}_-;\\
& \widehat v^\vdash \sim \widehat \beta_R^\vdash\sqrt{h}f_{1,R}^- \, \mbox{ microlocally on } \Gamma_{1,R}^- (E)\cap {\mathcal V}_-,
\end{aligned}
\ee
with $\widehat \alpha_S^\dashv= \widehat \alpha_S^\dashv(h)\sim  \sum_{k\geq 0}h^k\widehat \alpha_{S,k}^\dashv $, $\widehat \beta_S^\dashv=\widehat \beta_S^\dashv(h) \sim \sum_{k\geq 0}h^k\widehat \beta_{S,k}^\dashv $, $\widehat \alpha_S^\vdash= \widehat \alpha_S^\vdash(h)\sim  \sum_{k\geq 0}h^k\widehat \alpha_{S,k}^\vdash $, $\widehat \beta_S^\vdash=\widehat \beta_S^\vdash(h) \sim \sum_{k\geq 0}h^k\widehat \beta_{S,k}^\vdash $ ($S=L,R$),
\be
\label{alphabeta2}
\begin{aligned}
&  +  \widehat \alpha_{L,0}^\dashv = + \widehat \alpha_{R,0}^\dashv= - \widehat \alpha_{L,0}^\vdash= - \widehat \alpha_{R,0}^\vdash 
=  \frac{i(\tau_1+\tau_2)E^{\frac14}}{r_0(0) - ir_1(0)\sqrt{E}};\\
& \widehat \beta_{L,0}^\dashv =\widehat \beta_{R,0}^\vdash= \sqrt{\pi (\tau_1+\tau_2)}e^{-i\pi /4}.
\end{aligned}
\ee
\end{proposition}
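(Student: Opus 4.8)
The proof runs in complete parallel with that of Proposition~\ref{exprho-}, the only task being to identify the constants of microlocal proportionality between $\widehat v^\dashv$ (respectively $\widehat v^\vdash$) and the WKB solutions of Proposition~\ref{wkb-}; the formal apparatus (the Fourier integral operator $\widehat U$, its symbol $\widehat c$, the generating function $\widehat\psi$, and the analytic symbol $\widehat F$ with $\widehat F(0,h)=-\frac{i}{2}h+\widehat\mu h^2$) has already been set up above. First I would record the geometry of $\widehat\psi$: near $0$ the map $y\mapsto\widehat\psi(x,y)$ has a unique non-degenerate critical point $y_c(x)=\frac{\tau_1+\tau_2}{2\sqrt E}x+\cO(x^2)$, and $\partial_y^2\widehat\psi(0,0)=-\frac{2\sqrt E}{\tau_1+\tau_2}<0$. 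It is this negative sign, the opposite of what occurs in \eqref{exppsi}, that will produce the factor $e^{-i\pi/4}$ in \eqref{alphabeta2} in place of the $e^{i\pi/4}$ of \eqref{alphabeta1}. Since the canonical transformation $\widehat\kappa$ associated with $\widehat U$ is such that $\widehat\kappa^{-1}$ sends $\{\eta=0\}$ onto $\Gamma_1(E)\cap{\mathcal V}_-$ and $\{y=0\}$ onto $\Gamma_2(E)\cap{\mathcal V}_-$, and the corresponding critical and boundary values of $\widehat\psi$ vanish at $x=0$, one has $\widehat\psi(x,y_c(x))=-\nu_1^0(x)$ and $\widehat\psi(x,0)=-\nu_2^0(x)$ (note that the roles of $\nu_1^0$ and $\nu_2^0$ are now exchanged relative to the proof of Proposition~\ref{exprho-}).

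Next, for $\widehat v_1^\dashv=\widehat U^{-1}\widehat u^\dashv$ on $\{x<0\}$, only the critical point $y_c(x)\in(-\infty,0)$ and the singular point $y=0$ contribute, up to $\cO(e^{-\varepsilon/h})$: a stationary-phase expansion at $y_c(x)$ yields a term $\sqrt{2\pi h}\,\widehat\sigma^\dashv(x;h)e^{-i\nu_1^0(x)/h}$ with $\widehat\sigma_0^\dashv(x)=e^{-i\pi/4}\bigl(|\partial_y^2\widehat\psi(x,y_c(x))|\bigr)^{-1/2}\widehat c_0(x,y_c(x))$, while a complex deformation of the contour near $y=0$ followed by the rescaling $t\mapsto ht$ produces $h^{1+i\widehat\mu h}\,\widetilde{\widehat\sigma}^\dashv(x;h)e^{-i\nu_2^0(x)/h}$ with $\widetilde{\widehat\sigma}_0^\dashv(x)=-ix^{-1}\widehat c_0(x,0)$ (the sign being opposite to the one found in the proof of Proposition~\ref{exprho-}, since $\widehat\psi$ carries $+xy$ where $\psi$ carried $-xy$). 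Recovering $\widehat v_2^\dashv$ from the first line of \eqref{systemrewrite2}, I read off the microlocal behaviour of $\widehat v^\dashv$ on $\Gamma_{1,L}^-(E)$ and $\Gamma_{2,L}^-(E)$, and comparing first components with $f_{1,L}^-$ and $f_{2,L}^-$ of Proposition~\ref{wkb-} I obtain $\sqrt{2\pi}\,\widehat\sigma^\dashv=\widehat\beta_L^\dashv a_1^-$ and $\widetilde{\widehat\sigma}^\dashv=\widehat\alpha_L^\dashv b_1^-$. Passing to leading symbols, using $a_{1,0}^-=(E-V_1)^{-1/4}$, $b_{1,0}^-=\frac{r_0-ir_1\sqrt{E-V_2}}{(V_2-V_1)(E-V_2)^{1/4}}$, $V_2-V_1=-(\tau_1+\tau_2)x+\cO(x^2)$, $\widehat c_0(0,0)=1$, and letting $x\to0$ (after multiplying the second identity by $x$), gives the values in \eqref{alphabeta2}. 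For $x>0$ the critical point leaves the interval of integration, so only $y=0$ contributes and one reads off $\widehat\alpha_R^\dashv$; the function $\widehat v^\vdash$ is treated identically, replacing $H(-y)$ by $H(y)$, which reverses the orientation of the contour near $y=0$ (hence the signs of $\widehat\alpha_{L,0}^\vdash$ and $\widehat\alpha_{R,0}^\vdash$) and exchanges $\Gamma_{1,L}^-(E)$ with $\Gamma_{1,R}^-(E)$. Finally, the analyticity of all the constructions in $E$ allows the extension to $E\in{\mathcal D}_h(\delta_0,C_0)$.

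The main point requiring care, exactly as in the proof of Proposition~\ref{exprho-}, is the bookkeeping of phases and signs. In particular one must verify that ${\rm sgn}\,\partial_y^2\widehat\psi(0,0)=-1$ is indeed responsible for the $e^{-i\pi/4}$ in the $\widehat\beta$'s, and that the residue produced by the Laplace-type integral at $y=0$ comes with the sign opposite to the unhatted case, so that it compensates the replacement of $(V_1-V_2)^{-1}$ by $(V_2-V_1)^{-1}$ in $b_{1,0}^-$ versus $a_{2,0}^-$ and leaves $\widehat\alpha_{L,0}^\dashv$ with a $+i$ in the numerator. I would establish this last point directly from a contour analysis of $\int_{-\infty}^0 e^{i\widehat\psi(x,y)/h}\widehat c(x,y;h)|y|^{i\widehat\mu h}\,dy$ rather than by invoking a symmetry argument with the proof of Proposition~\ref{exprho-}.
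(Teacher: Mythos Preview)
Your proposal is correct and follows precisely the approach the paper intends: the paper does not write out a separate proof for this proposition, having set up $\widehat U$, $\widehat\psi$, $\widehat c$, and $\widehat F(0,h)$ just before the statement, with the understanding that the argument of Proposition~\ref{exprho-} goes through verbatim with the roles of the indices $1$ and $2$ exchanged. Your careful tracking of the two sign changes---the negative $\partial_y^2\widehat\psi(0,0)$ producing $e^{-i\pi/4}$ in the stationary-phase contribution, and the $+xy$ term in $\widehat\psi$ flipping the sign of the boundary contribution at $y=0$---is exactly what is needed to recover the constants in \eqref{alphabeta2}, and your verification that these signs combine correctly with the denominator $V_2-V_1$ in $b_{1,0}^-$ is the only point requiring any genuine attention.
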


From Proposition \ref{exprho-} and Proposition \ref{exprho-bis}, we can compute the following transfer matrix, which connects microlocal data on
$\Gamma_{1,L}^-$ and $\Gamma_{2,R}^-$ to those on $\Gamma_{1,R}^-$ and $\Gamma_{2,L}^-$.

\begin{proposition}
\label{connection1}
Suppose $u(x,h)\in L^2(\R)$ satisfies $(P-E)u\sim 0$ microlocally in a neighbourhood ${\mathcal V}_-$ of $\rho_-(E)$ and set
$$
u\sim t_{1,L}^{-}f_{1,L}^{-}
\,\,\,{\rm microlocally\, on}\,\,
\Gamma_{1,L}^{-}(E),
$$
$$
u\sim t_{1,R}^{-}f_{1,R}^{-}
\,\,\,{\rm microlocally\, on}\,\,
\Gamma_{1,R}^{-}(E),
$$
$$
u\sim t_{2,L}^{-}f_{2,L}^{-}
\,\,\,{\rm microlocally\, on}\,\,
\Gamma_{2,L}^{-}(E),
$$
$$
u\sim t_{2,R}^{-}f_{2,R}^{-}
\,\,\,{\rm microlocally\, on}\,\,
\Gamma_{2,R}^{-}(E),
$$
for constants $t_{j,S}^-=t_{j,S}^-(E,h)$.
Then it holds that
\begin{equation}
\label{tau-}
\left (
\begin{array}{c}
t_{1,R}^{-} \\[8pt]
t_{2,L}^{-}
\end{array}
\right )
=
\left (
\begin{array}{cc}
\tau_{1,1}^-(E,h) & \tau_{1,2}^-(E,h) \\[8pt]
\tau_{2,1}^-(E,h) & \tau_{2,2}^-(E,h)
\end{array}
\right )
\left (
\begin{array}{c}
t_{1,L}^{-} \\[8pt]
t_{2,R}^{-}
\end{array}
\right ),
\end{equation}
where the coefficients $\tau_{j,k}^-$ have the asymptotic behaviour
$$
\begin{array}{ll}
\tau^-_{1,1}=1+\ord (h),\quad &\tau^-_{1,2}=  e^{\frac{\pi}{4} i}\tau_- h^{\frac 12+i\mu h}+\ord (h^{\frac 32}), \\[8pt]
\tau^-_{2,1}=  e^{-\frac{\pi}{4} i}\tau_+ h^{\frac 12-i\mu h}+\ord (h^\frac 32),\quad &\tau^-_{2,2}=1+\ord (h),
\end{array}
$$
as $h\to 0_+$ uniformly for $E\in{\mathcal D}_h(\delta_0, C_0)$. Here $\tau_\pm$ are constants defined by 
$$
 \tau_\pm=\sqrt\frac\pi{\tau_1+\tau_2}\left (r_0(0)E^{-\frac 14} \pm ir_1(0)E^{\frac 14}\right ).
$$
\end{proposition}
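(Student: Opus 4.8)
The strategy is purely linear-algebraic, once the two microlocal bases of Propositions~\ref{exprho-} and~\ref{exprho-bis} are at hand. By \eqref{systemrewrite} and the microlocal ellipticity of $W^*$ at $\rho_-(E)$, a microlocal solution $u=(u_1,u_2)$ of $(P-E)u\sim0$ near $\rho_-(E)$ is entirely determined by $u_2$, which solves $Qu_2\sim0$; after the reduction this becomes the model equation \eqref{eqModelbis}, whose microlocal solutions near $(0,0)$ form a two-dimensional space spanned by $u^\dashv=H(-y)|y|^{i\mu h}$ and $u^\vdash=H(y)|y|^{i\mu h}$. Hence, near $\rho_-(E)$, the space of microlocal solutions of $(P-E)u\sim0$ is two-dimensional, and both $\{v^\dashv,v^\vdash\}$ and $\{\widehat v^\dashv,\widehat v^\vdash\}$ are bases of it; moreover, near $\rho_-(E)$ the characteristic set is the union of the four half-branches $\Gamma_{1,L}^-,\Gamma_{1,R}^-,\Gamma_{2,L}^-,\Gamma_{2,R}^-$, so the four scalars $t_{j,S}^-$ capture the full microlocal data of $u$. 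Thus for $u$ as in the statement one may write, microlocally in $\mathcal V_-$, $u\sim c_1v^\dashv+c_2v^\vdash$ and also $u\sim\widehat c_1\widehat v^\dashv+\widehat c_2\widehat v^\vdash$, for suitable ($h$-dependent) scalars.

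First I would use the representation $u\sim c_1v^\dashv+c_2v^\vdash$. Comparing, on each branch, the asymptotics \eqref{vgauche}--\eqref{vgdroite} with the WKB solutions $f_{j,S}^-$ of Proposition~\ref{wkb-}, and using that $v^\vdash$ has no microsupport on $\Gamma_{2,L}^-$ and $v^\dashv$ none on $\Gamma_{2,R}^-$, one reads off $t_{2,L}^-=c_1\beta_L^\dashv\sqrt h$, $t_{2,R}^-=c_2\beta_R^\vdash\sqrt h$, $t_{1,L}^-=(c_1\alpha_L^\dashv+c_2\alpha_L^\vdash)h^{i\mu h}$ and $t_{1,R}^-=(c_1\alpha_R^\dashv+c_2\alpha_R^\vdash)h^{i\mu h}$. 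Solving the middle two relations for $c_1,c_2$ in terms of $t_{1,L}^-,t_{2,R}^-$ and substituting gives $\tau_{1,1}^-=\alpha_R^\dashv/\alpha_L^\dashv$, $\tau_{2,1}^-=(\beta_L^\dashv/\alpha_L^\dashv)h^{1/2-i\mu h}$ and $\tau_{2,2}^-=-\beta_L^\dashv\alpha_L^\vdash/(\alpha_L^\dashv\beta_R^\vdash)$. Inserting the leading coefficients \eqref{alphabeta1} (and using that $h^{i\mu h}=\cO(1)$, so that a relative $\cO(h)$ error in a term of size $h^{1/2}$ becomes an absolute $\cO(h^{3/2})$ error) yields $\tau_{1,1}^-=1+\cO(h)$, $\tau_{2,2}^-=1+\cO(h)$, and, after the short computation $\beta^\dashv_{L,0}/\alpha^\dashv_{L,0}=\frac{e^{i\pi/4}}{i}\sqrt{\pi/(\tau_1+\tau_2)}\,(r_0(0)E^{-1/4}+ir_1(0)E^{1/4})=e^{-i\pi/4}\tau_+$, the stated value of $\tau_{2,1}^-$.

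The main obstacle is that this first basis does \emph{not} pin down the leading behaviour of $\tau_{1,2}^-$: the same substitution gives $\tau_{1,2}^-=\tfrac{h^{-1/2+i\mu h}}{\beta_R^\vdash}\big(\alpha_R^\vdash-\alpha_R^\dashv\alpha_L^\vdash/\alpha_L^\dashv\big)$, but by \eqref{alphabeta1} the combination $\alpha^\vdash_{R,0}-\alpha^\dashv_{R,0}\alpha^\vdash_{L,0}/\alpha^\dashv_{L,0}$ vanishes identically, so one only obtains $\tau_{1,2}^-=\cO(h^{1/2})$ and recovering its coefficient would require the unknown subprincipal symbols. To circumvent this I would use the second representation $u\sim\widehat c_1\widehat v^\dashv+\widehat c_2\widehat v^\vdash$, for which it is now $\Gamma_{1,L}^-$ and $\Gamma_{1,R}^-$ that carry a single basis element: from \eqref{vgauche2}--\eqref{vgdroite2} one gets $t_{1,L}^-=\widehat c_1\widehat\beta_L^\dashv\sqrt h$, $t_{1,R}^-=\widehat c_2\widehat\beta_R^\vdash\sqrt h$ and $t_{2,R}^-=(\widehat c_1\widehat\alpha_R^\dashv+\widehat c_2\widehat\alpha_R^\vdash)h^{i\widehat\mu h}$; eliminating $\widehat c_1,\widehat c_2$ yields $\tau_{1,2}^-=(\widehat\beta_R^\vdash/\widehat\alpha_R^\vdash)h^{1/2-i\widehat\mu h}$ and $\tau_{1,1}^-=-\widehat\beta_R^\vdash\widehat\alpha_R^\dashv/(\widehat\alpha_R^\vdash\widehat\beta_L^\dashv)=1+\cO(h)$, consistent with the first computation. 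Plugging in \eqref{alphabeta2}, using $\widehat\beta^\vdash_{R,0}/\widehat\alpha^\vdash_{R,0}=\frac{e^{-i\pi/4}}{-i}\sqrt{\pi/(\tau_1+\tau_2)}\,(r_0(0)E^{-1/4}-ir_1(0)E^{1/4})=e^{i\pi/4}\tau_-$ and $h^{-i\widehat\mu h}=h^{i\mu h}(1+\cO(h))$ (since $\mu+\widehat\mu=\cO(h)$ by Proposition~\ref{propF(0)} and the analogous statement for $\widehat\mu$), one gets $\tau_{1,2}^-=e^{i\pi/4}\tau_-\,h^{1/2+i\mu h}+\cO(h^{3/2})$.

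Finally, each $\tau_{j,k}^-$ is a fixed rational expression in the formal analytic symbols $\alpha^\bullet_\bullet,\beta^\bullet_\bullet$ (or their hatted counterparts), whose leading coefficients are bounded and bounded away from zero thanks to Assumption~(A5) and \eqref{alphabeta1}--\eqref{alphabeta2}; since these symbols depend analytically on $E$ and the whole microlocal construction extends from real $E$ to $\im E=\cO(h)$, the asymptotics for the $\tau^-_{j,k}$ hold uniformly for $E\in\mathcal D_h(\delta_0,C_0)$, which completes the proof. The only genuinely delicate point is the one highlighted above, namely that one basis alone is insufficient for the off-diagonal entry $\tau^-_{1,2}$ and the two complementary bases must be played off against each other.
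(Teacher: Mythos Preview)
Your argument is correct and follows essentially the same route as the paper's proof: both use the pair $(v^\dashv,v^\vdash)$ for the first column of the transfer matrix and the pair $(\widehat v^\dashv,\widehat v^\vdash)$ for the second, arriving at the identical ratios $\tau_{1,1}^-=\alpha_R^\dashv/\alpha_L^\dashv$, $\tau_{2,1}^-=(\beta_L^\dashv/\alpha_L^\dashv)h^{1/2-i\mu h}$, $\tau_{1,2}^-=(\widehat\beta_R^\vdash/\widehat\alpha_R^\vdash)h^{1/2-i\widehat\mu h}$, $\tau_{2,2}^-=\widehat\alpha_L^\vdash/\widehat\alpha_R^\vdash$, which are then evaluated via \eqref{alphabeta1}--\eqref{alphabeta2}.

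The only difference is organizational. The paper proceeds column by column: for the input $(t_{1,L}^-,t_{2,R}^-)=(1,0)$ it observes that $v^\dashv$ already has vanishing $\Gamma_{2,R}^-$ component, so the normalized $v^\dashv$ \emph{is} the desired solution and one reads off $\tau_{1,1}^-,\tau_{2,1}^-$ immediately; symmetrically, $\widehat v^\vdash$ has vanishing $\Gamma_{1,L}^-$ component and gives $\tau_{1,2}^-,\tau_{2,2}^-$ directly. You instead expand a general $u$ in each full basis, solve the resulting linear system, and then discover the leading-order cancellation in $\tau_{1,2}^-$ that forces the switch to the hatted basis. This is slightly less economical (you compute $\tau_{2,2}^-$ from the first basis as $-\beta_L^\dashv\alpha_L^\vdash/(\alpha_L^\dashv\beta_R^\vdash)$ rather than the paper's $\widehat\alpha_L^\vdash/\widehat\alpha_R^\vdash$, though both equal $1+\cO(h)$), but it has the pedagogical merit of making explicit \emph{why} two bases are needed rather than one.
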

\begin{proof}
If $t_{1,L}^-=1$ and $t_{2,R}^-=0$, then $u$ should be equal to $(\alpha_L^{\dashv}h^{i\mu h})^{-1}v^{\dashv}$ microlocally in a neighbourhood of $\rho^-$,
and hence we have
$$
\tau_{1,1}^-=t_{1,R}^-=\frac{\alpha_R^\dashv}{\alpha_L^\dashv},\quad \tau_{2,1}^-=t_{2,L}^-=\frac{\beta_L^\dashv}{\alpha_L^\dashv}h^{\frac 12-i\mu h}.
$$
Similarly, if $t_{1,L}^-=0$ and $t_{2,R}^-=1$, then $u$ should be equal to $(\widehat \alpha_R^{\vdash}h^{i\widehat\mu h})^{-1}\widehat v^{\vdash}$ microlocally in a neighbourhood of $\rho^-$,
and hence we have
$$
\tau_{1,2}^-=t_{1,R}^-=\frac{\widehat\beta_R^\vdash}{\widehat\alpha_R^\vdash}h^{\frac 12-i\widehat\mu h},\quad \tau_{2,2}^-=t_{2,L}^-=\frac{\widehat\alpha_L^\vdash}{\widehat\alpha_R^\vdash}.
$$
Then Proposition \ref{connection1} follows from \eq{alphabeta1} and \eq{alphabeta2}.
\end{proof}

We obtain in the same way  the transfer matrix near $\rho_+(E)$. First, we define WKB solutions microlocally on the characteristics
$\Gamma_{j,S}^+$ as Proposition \ref{wkb-}:
\begin{proposition}
\label{wkb+}
There exist functions $f_{1,L}^+, f_{1,R}^+, f_{2,L}^+, f_{2,R}^+\in L^2(\R)$ such that 
$$
(P-E)f_{1,L}^+\sim 0,\quad f_{1,L}^+\sim 
\left (
\begin{array}{c}
a_1^{+} \\[8pt]
ha_2^{+}
\end{array}
\right )
e^{i\nu_1^0(x)/h}
\,\,\,{\rm microlocally\, on}\,\,
\Gamma_{1,L}^+(E),
$$
$$
(P-E)f_{1,R}^+\sim 0,\quad f_{1,R}^+\sim 
\left (
\begin{array}{c}
a_1^{+} \\[8pt]
ha_2^{+}
\end{array}
\right )
e^{ i\nu_1^0(x)/h}
\,\,\,{\rm microlocally\, on}\,\,
\Gamma_{1,R}^+(E),
$$
$$
(P-E)f_{2,L}^+\sim 0,\quad f_{2,L}^+\sim 
\left (
\begin{array}{c}
hb_1^{+} \\[8pt]
b_2^{+}
\end{array}
\right )
e^{i\nu_2^0(x)/h}
\,\,\,{\rm microlocally\, on}\,\,
\Gamma_{2,L}^+(E),
$$
$$
(P-E)f_{2,R}^+\sim 0,\quad f_{2,R}^+\sim 
\left (
\begin{array}{c}
hb_1^{+} \\[8pt]
b_2^{+}
\end{array}
\right )
e^{ i\nu_2^0(x)/h}
\,\,\,{\rm microlocally\, on}\,\,
\Gamma_{2,R}^+(E).
$$
Here, 
$a_j^+= a_j^-(x;h)\sim \sum_{k\geq 0}h^ka_{j,k}^+(x)$ and $b_j^+=b_j^+(x;h)\sim \sum_{k\geq 0}h^kb_{j,k}^+(x)$ ($j=1,2$) are analytic symbols whose first coefficients are given by 
\be
\label{coeffaj2}
\begin{aligned}
& a_{1,0}^+=\frac1{(E-V_1)^{\frac14}}\quad ; \quad a_{2,0}^+=\frac{r_0 - ir_1\sqrt{E-V_1}}{(V_1-V_2)(E-V_1)^{\frac14}}.
\end{aligned}
\ee
\be
\label{coeffbj2}
\begin{aligned}
& b_{2,0}^+=\frac1{(E-V_2)^{\frac14}}\quad ; \quad b_{1,0}^+=\frac{r_0 + ir_1\sqrt{E-V_2}}{(V_2-V_1)(E-V_2)^{\frac14}}.
\end{aligned}
\ee
\end{proposition}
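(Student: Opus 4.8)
The plan is to carry over, essentially verbatim, the argument used for Proposition \ref{wkb-}, simply replacing everywhere the incoming phases $-\nu_j^0$ by the outgoing ones $+\nu_j^0$. Fix one of the four branches, say $\Gamma_{1,S}^+(E)$. Away from $\rho_+(E)$ the curves $\Gamma_1(E)$ and $\Gamma_2(E)$ are disjoint, so $P_2-E$ is microlocally elliptic along $\Gamma_{1,S}^+(E)$; solving the second component of $(P-E)u\sim 0$ for $v_2$ and inserting it into the first, one reduces the system to the scalar equation $\bigl((P_1-E)-h^2W(P_2-E)^{-1}W^*\bigr)v_1\sim 0$, whose principal symbol is $\xi^2+V_1(x)-E$. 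This operator is of principal type along $\Gamma_{1,S}^+(E)$, hence microlocally conjugate to $hD_y$, so its space of microlocal solutions is one-dimensional, which gives existence and uniqueness up to a multiplicative constant; the same discussion with the indices $1$ and $2$ exchanged (using ellipticity of $P_1-E$ along $\Gamma_2$) handles $\Gamma_{2,S}^+(E)$.

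Next I would feed in the WKB ansatz with phase $\nu_1^0(x)=\int_0^x\sqrt{E-V_1(t)}\,dt$ and amplitude an analytic symbol with entries $a_1^+$ and $ha_2^+$. The eikonal equation $((\nu_1^0)')^2+V_1-E=0$ holds by the choice of $\nu_1^0$, and the first transport equation for $a_{1,0}^+$ is the classical one, with solution $(E-V_1)^{-\frac14}$ after the usual normalisation; the slaved component is then recovered algebraically, $v_2\sim -h(P_2-E)^{-1}W^*v_1$, and at the level of principal symbols this divides by $\xi^2+V_2(x)-E$ and multiplies by the symbol of $W^*$ evaluated at $\xi=+\sqrt{E-V_1(x)}$, namely $r_0-ir_1\sqrt{E-V_1}$ (this is precisely the place where the $+$ branch differs from the $-$ branch of Proposition \ref{wkb-}, where one had $\xi=-\sqrt{E-V_1}$). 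Keeping track of the signs of $V_1-V_2$ then yields exactly the coefficients in \eqref{coeffaj2}, and the symmetric computation on $\Gamma_{2,S}^+(E)$ gives \eqref{coeffbj2}; the higher-order amplitudes $a_{j,k}^+$, $b_{j,k}^+$ are determined recursively from the subsequent transport equations and form analytic symbols. The only singular point is $x=0$, where $V_1-V_2$ has a simple zero and hence $a_2^+$ (resp. $b_1^+$) acquires a simple pole, exactly as stated; everywhere else the amplitudes are elliptic. Finally, as in Proposition \ref{wkb-}, one passes from these formal symbols to honest $L^2(\R)$ functions by resumming the symbol up to an exponentially small error, multiplying by a cutoff supported near the relevant branch, and extending by zero, which does not affect any microlocal statement.

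I do not expect a genuine obstacle here: this proposition is the mirror image at $\rho_+(E)$ of the already established Proposition \ref{wkb-}, and all the microlocal machinery (reduction to principal type, solvability of the transport equations, the notion of analytic symbol) has already been set up. The one point requiring care is purely bookkeeping, namely tracking the sign of the square roots $\pm\sqrt{E-V_j(x)}$ — equivalently the orientation of the branches $\Gamma_{j,S}^+$ versus $\Gamma_{j,S}^-$ — since this is what makes \eqref{coeffaj2}--\eqref{coeffbj2} differ from \eqref{coeffaj}--\eqref{coeffbj} exactly by the change $r_1\mapsto -r_1$ in the off-diagonal coefficients.
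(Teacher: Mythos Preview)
Your proposal is correct and follows exactly the approach the paper takes for Proposition~\ref{wkb-}: WKB ansatz with phase $+\nu_j^0$, reduction to a scalar equation via the microlocal ellipticity of the other $P_k-E$, and solution of the transport equations, the only change being the sign of $\xi=+\sqrt{E-V_j}$ on the upper branches (whence the flip $r_1\mapsto -r_1$ in the off-diagonal symbols). The paper does not supply a separate proof of Proposition~\ref{wkb+}, treating it as the evident mirror of Proposition~\ref{wkb-}, so your write-up is in fact more detailed than the original.
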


\begin{proposition}
\label{connection2}
Suppose $u(x,h)\in L^2(\R)$ satisfies $(P-E)u\sim 0$ microlocally in a neighbourhood ${\mathcal V}_+$ of $\rho_+(E)$ and set
$$
u\sim t_{1,L}^{+}f_{1,L}^{+}
\,\,\,{\rm microlocally\, on}\,\,
\Gamma_{1,L}^{+}(E),
$$
$$
u\sim t_{1,R}^{+}f_{1,R}^{+}
\,\,\,{\rm microlocally\, on}\,\,
\Gamma_{1,R}^{+}(E),
$$
$$
u\sim t_{2,L}^{+}f_{2,L}^{+}
\,\,\,{\rm microlocally\, on}\,\,
\Gamma_{2,L}^{+}(E),
$$
$$
u\sim t_{2,R}^{+}f_{2,R}^{+}
\,\,\,{\rm microlocally\, on}\,\,
\Gamma_{2,R}^{+}(E).
$$
Then  it holds that
\begin{equation}
\label{tau+}
\left (
\begin{array}{c}
t_{1,L}^{+} \\[8pt]
t_{2,R}^{+}
\end{array}
\right )
=
\left (
\begin{array}{cc}
\tau_{1,1}^+(E,h) & \tau_{1,2}^+(E,h) \\[8pt]
\tau_{2,1}^+(E,h) & \tau_{2,2}^+(E,h)
\end{array}
\right )
\left (
\begin{array}{c}
t_{1,R}^{+} \\[8pt]
t_{2,L}^{+}
\end{array}
\right ),
\end{equation}
where 
\begin{align*}
&\tau^+_{1,1}= 1+\ord (h), &&\tau^+_{1,2}= -e^{-\frac{\pi}{4}i}\tau_+ h^{\frac 12-i\mu h}+\ord (h^\frac 32),\\
&\tau^+_{2,1}= -e^{\frac{\pi}{4}i}\tau_- h^{\frac 12+i\mu h}+\ord (h^{\frac 32}), && \tau^+_{2,2}= 1+\ord (h),
\end{align*}
as $h\to 0_+$ uniformly for $E\in{\mathcal D}_h(\delta_0, C_0)$. 
\end{proposition}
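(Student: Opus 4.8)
The plan is to mirror at $\rho_+(E)=(0,\sqrt E)$ the whole construction performed near $\rho_-(E)$; the only structural difference is that the two tangent lines $T_{\rho_+}\Gamma_1(E)$ and $T_{\rho_+}\Gamma_2(E)$ have slopes of opposite sign to their counterparts at $\rho_-(E)$ (near $(0,\sqrt E)$ one has $\xi-\sqrt E\simeq-\frac{\tau_1}{2\sqrt E}x$ on $\Gamma_1$ and $\xi-\sqrt E\simeq\frac{\tau_2}{2\sqrt E}x$ on $\Gamma_2$, whereas these signs are reversed at $\rho_-(E)$), and this reflection is exactly what reverses the direction of the transfer matrix \eqref{tau+} relative to \eqref{tau-} and produces the extra minus signs.

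First I would repeat the reductions \eqref{systemrewrite} and \eqref{systemrewrite2}: Assumption (A5) still makes $W^*$ (resp. $W$) microlocally elliptic at $\rho_+(E)$, so $(P-E)v\sim0$ reduces near $\rho_+(E)$ to the scalar equations $Qv_2\sim0$ and $\widehat Qv_1\sim0$, whose common principal symbol $(\xi^2+V_1-E)(\xi^2+V_2-E)$ again has a saddle point. Applying \cite[Theoreme b.1]{HeSj2} together with \cite[Proposition 6.1]{Ra}, I reduce $Q$ (resp. $\widehat Q$) to $G_0=\frac12(yhD_y+hD_y\cdot y)$ by a formal FIO $U^+$ (resp. $\widehat U^+$), choosing the initial affine symplectic map $\kappa_0^+$ --- the analogue of \eqref{choixrot} with $\xi+\sqrt E$ replaced by $\xi-\sqrt E$ --- in the orientation that sends $T_{\rho_+}\Gamma_1(E)$ to $\{\eta=0\}$ and $T_{\rho_+}\Gamma_2(E)$ to $\{y=0\}$ (resp. the reverse), and so that the constant term of the conjugated symbol equals $-\frac i2h+\mu h^2$ (resp. $-\frac i2h+\widehat\mu h^2$). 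This last point is verified by the same FBI-transform computation as in Proposition \ref{propF(0)}, now applied to $v=h^{-1/6}u_{j,L}^+$; the opposite orientation would give $+\frac i2h$ and the awkward model equation warned against in the text, so this is the step that requires care.

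Then, taking the model basis $H(\mp y)|y|^{i\mu h}$, pulling it back through $U^+$ and $\widehat U^+$, and defining $v^{+,\dashv},v^{+,\vdash},\widehat v^{+,\dashv},\widehat v^{+,\vdash}$ as in the $\rho_-$ case, I obtain microlocal solutions with the expected microsupports, now carrying the phases $e^{+i\nu_j^0(x)/h}$ because the generating function $\psi^+$ of $(\kappa_0^+)^{-1}$ has critical values $+\nu_j^0(x)$ --- which is exactly what distinguishes $f_{j,S}^+$ from $f_{j,S}^-$. As in Propositions \ref{exprho-} and \ref{exprho-bis}, the proportionality constants between these solutions and the WKB solutions of Proposition \ref{wkb+} come from a stationary-phase evaluation at the non-degenerate critical point $y_c(x)$, giving the $\sqrt h$ factor in the $\beta$-coefficients on the $\Gamma_2$-branches, and from a Laplace-contour analysis at the singular point $y=0$, giving the $h^{i\mu h}$ factor in the $\alpha$-coefficients on the $\Gamma_1$-branches. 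Since $\partial_y^2\psi^+(0,0)$ has the opposite sign to the $\rho_-$ case, the factors $e^{\pm i\pi/4}$ get interchanged; combined with the sign change in $a_{2,0}^+,b_{1,0}^+$ relative to $a_{2,0}^-,b_{1,0}^-$ (compare \eqref{coeffaj2}--\eqref{coeffbj2} with \eqref{coeffaj}, \eqref{coeffbj}), this yields leading coefficients of the same moduli as in \eqref{alphabeta1}--\eqref{alphabeta2} but with precisely the sign and exponent changes responsible for $\tau^+_{1,2}=-e^{-i\pi/4}\tau_+h^{1/2-i\mu h}+\ord(h^{3/2})$ and $\tau^+_{2,1}=-e^{i\pi/4}\tau_-h^{1/2+i\mu h}+\ord(h^{3/2})$.

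Finally, just as in the proof of Proposition \ref{connection1}, I would set $(t^+_{1,R},t^+_{2,L})$ equal to $(1,0)$ and then to $(0,1)$: in the first case $u$ is microlocally a scalar multiple of whichever of $v^{+,\dashv},v^{+,\vdash}$ has microsupport compatible with $t^+_{2,L}=0$, in the second a multiple of the corresponding $\widehat v^{+,\dashv}$ or $\widehat v^{+,\vdash}$; reading off $t^+_{1,L}$ and $t^+_{2,R}$ from the asymptotics above and dividing by the relevant factor of the form $\alpha\,h^{i\mu h}$ gives the four entries, and substituting the explicit leading values produces the stated matrix, the entries $\tau^+_{1,1},\tau^+_{2,2}=1+\ord(h)$ coming from the ratios of $\alpha$-coefficients being $1$ at leading order. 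The main obstacle is purely bookkeeping --- keeping track of every sign, every $e^{\pm i\pi/4}$ and every complex power $h^{\pm i\mu h}$ under the mirrored geometry, and fixing the orientation of $\kappa_0^+$ correctly at the outset. A shortcut worth mentioning is the anti-symplectic reflection $(x,\xi)\mapsto(x,-\xi)$, which conjugates the microlocal problem at $\rho_+(E)$ with the complex conjugate of the one at $\rho_-(E)$; pushing \eqref{tau-} through this reflection gives \eqref{tau+} directly, but one still has to check how the $h^{i\mu h}$-powers and the constants $\tau_\pm$ transform, so the gain over the direct argument is limited.
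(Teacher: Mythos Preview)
Your proposal is correct and follows the same route as the paper, which in fact gives no explicit proof of Proposition~\ref{connection2} beyond the phrase ``We obtain in the same way the transfer matrix near $\rho_+(E)$.'' You have supplied exactly the missing details: repeat the normal-form reduction at $\rho_+(E)$, track the sign flips coming from $\xi-\sqrt E$ replacing $\xi+\sqrt E$ and from the mirrored WKB coefficients \eqref{coeffaj2}--\eqref{coeffbj2}, and read off the transfer matrix by specializing $(t_{1,R}^+,t_{2,L}^+)$ to $(1,0)$ and $(0,1)$ as in the proof of Proposition~\ref{connection1}.

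One small slip worth fixing: in the $\rho_-$ construction, the affine map \eqref{choixrot} sends $T_{\rho_-}\Gamma_2(E)$ to $\{\eta=0\}$ and $T_{\rho_-}\Gamma_1(E)$ to $\{y=0\}$ (this is why $\widetilde v=Uu_{2,L}^-$ is smooth in the proof of Proposition~\ref{propF(0)}), whereas you wrote the opposite assignment for $\kappa_0^+$. This is only a labeling issue and does not affect your argument, since you correctly flag the orientation choice as the delicate point and your final formulas match the statement; just make sure the analogue of \eqref{choixrot} at $\rho_+$ is written so that the characteristic set of $P_2-E$ (not $P_1-E$) lands on $\{\eta=0\}$ when reducing $Q$. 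Your remark on the anti-symplectic reflection $(x,\xi)\mapsto(x,-\xi)$ is a legitimate alternative that the paper does not mention.
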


\section{Microlocal connection formulas and monodromy condition}

In the previous section, we defined WKB solutions $f_{j,S}^\pm(x,h)$ 
microlocally along the 8 curves $\Gamma_{j,S}^\pm$ 
on $\Gamma_1(E)\cup\Gamma_2(E)$ divided by
the 2 crossing points $\rho_\pm(E)$ and 3 turning points (caustics) $(a(E),0), (b(E),0)$ and $(c(E),0)$.
In order to know the global behaviour of solutions to the system $(P-E)u=0$, it is necessary to know the connection formulae between these WKB solutions.
The connection formulae near the crossing points were given in Proposition \ref{connection1} and \ref{connection2}, and it remains to have those between these
microlocal WKB solutions at the turning points. The following lemma is essentially due to Maslov.

\begin{lem}
\label{maslov}
Let $u$ satisfy $(P-E)u\sim 0$ microlocally near $\Gamma_{1,S}$, and suppose
$$
u\sim t_{1,S}^{+}f_{1,S}^{+}
\,\,\,{\rm microlocally\, on}\,\,
\Gamma_{1,S}^{+}(E),
$$
$$
u\sim t_{1,S}^{-}f_{1,S}^{-}
\,\,\,{\rm microlocally\, on}\,\,
\Gamma_{1,S}^{-}(E),
$$
for $S=L,R$.
Then it holds that
\begin{equation}
\label{sigmaS}
t_{1,S}^{+}=\sigma_{1,S}t_{1,S}^{-},
\end{equation}
with constants $\sigma_{1,S}$, which behave, as $h\to 0_+$,
$$
\sigma_{1,L}=-ie^{ + 2iS_{1,L}/h}+\ord (h),\quad \sigma_{1,R}=ie^{ - 2iS_{1,R}/h}+\ord (h).
$$
Here $S_{1,S}(E)$ are the action integrals defined by
$$
S_{1,L}(E) := \int_{a(E)}^0\sqrt{E-V_1(x)}dx,\quad S_{1,R}(E) := \int_0^{c(E)}\sqrt{E-V_1(x)}dx.
$$
Similarly, let $u$ be a solution to $(P-E)u\sim 0$ microlocally near 
$\Gamma_{2,L}$, and suppose
$$
u\sim t_{2,L}^{+}f_{2,L}^{+}
\,\,\,{\rm microlocally\, on}\,\,
\Gamma_{2,L}^{+}(E),
$$
$$
u\sim t_{2,L}^{-}f_{2,L}^{-}
\,\,\,{\rm microlocally\, on}\,\,
\Gamma_{2,L}^{-}(E).
$$
Then it holds that
\begin{equation}
\label{sigmaL}
t_{2,L}^{+}=\sigma_{2,L}t_{2,L}^{-},
\end{equation}
with
$$
\sigma_{2,L}=-ie^{ + 2iS_{2,L}/h}+\ord (h),
$$
where
$$
S_{2,L}(E) := \int_{b(E)}^0\sqrt{E-V_2(x)}dx.
$$
\end{lem}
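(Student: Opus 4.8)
This is the $2\times 2$ version of the classical Maslov connection formula at a simple turning point, so the strategy is to reduce, microlocally near each of the three caustics $(a(E),0)$, $(b(E),0)$, $(c(E),0)$, the system $(P-E)u\sim0$ to a \emph{scalar} Schr\"odinger equation with a simple turning point, and then to apply the Airy model to that scalar operator.

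\emph{Scalar reduction.} Near a turning point of $\Gamma_1$ one of the diagonal blocks is elliptic: by Assumptions (A3)--(A4) one has $V_2(a(E))>E$ (since $a<b$ and $V_2>E_0$ on $(-\infty,b)$) and $V_2(c(E))<E$ (since $b<c$ and $V_2<E_0$ on $(b,+\infty)$), so in both cases $\xi^2+V_2(x)-E$ does not vanish near the caustic and $P_2-E$ admits a microlocal inverse $(P_2-E)^{-1}$ there. Eliminating $v_2=-h(P_2-E)^{-1}W^*v_1$ from the second equation of the system turns the first into $\bigl[(P_1-E)-h^2W(P_2-E)^{-1}W^*\bigr]v_1\sim0$ microlocally near the caustic; the operator in brackets has the same principal and subprincipal symbol as $P_1-E$ up to $\cO(h^2)$, hence shares the turning-point connection formula up to $\cO(h)$. (Unlike in \eqref{systemrewrite2}, this elimination uses only the ellipticity of $P_2-E$, not of $W$, which may vanish at a caustic.) Symmetrically, at $(b(E),0)$ one has $V_1(b(E))<E$ (since $a<b<c$ and $V_1<E_0$ on $(a,c)$), so $P_1-E$ is elliptic there and the same argument with the indices $1$ and $2$ exchanged shows that $v_2$ solves a scalar equation which agrees with $P_2-E$ up to $\cO(h^2)$.

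\emph{Airy model and bookkeeping.} For the scalar operator $P_1-E$ at the simple turning point $a(E)$ (where $V_1'(a)<0$, so the classically allowed region lies to the right), the recessive microlocal solution is, as in \eqref{approxYaf} and \cite{Ya}, of the form $2(\xi_1'(x))^{-1/2}\check\ai(h^{-2/3}\xi_1(x))+\cO(h)$, with $\xi_1$ built from $\int_{a(E)}^x\sqrt{E-V_1}$ exactly as $\xi_2$ is built from $V_2$. Inserting $\check\ai(y)=\pi^{-1/2}y^{-1/4}\sin\bigl(\tfrac23y^{3/2}+\tfrac\pi4\bigr)+\cO(|y|^{-7/4})$ and writing $\int_{a(E)}^x\sqrt{E-V_1}=S_{1,L}+\nu_1^0(x)$ exhibits this solution, on the allowed side, as $\tfrac12(E-V_1)^{-1/4}\bigl(e^{i(S_{1,L}/h-\pi/4)}e^{i\nu_1^0/h}+e^{-i(S_{1,L}/h-\pi/4)}e^{-i\nu_1^0/h}\bigr)$ up to $\cO(h)$; comparing with the WKB solutions $f_{1,L}^{\pm}$ of Propositions \ref{wkb-} and \ref{wkb+}, whose leading coefficients $a_{1,0}^{\pm}=(E-V_1)^{-1/4}$ match on that side, gives $t_{1,L}^{+}/t_{1,L}^{-}=e^{2iS_{1,L}/h-i\pi/2}+\cO(h)$, i.e. $\sigma_{1,L}=-ie^{2iS_{1,L}/h}+\cO(h)$. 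At $c(E)$ the only change is that $V_1'(c)>0$, so the allowed region lies to the \emph{left}, the relevant action is $\int_x^{c(E)}\sqrt{E-V_1}=S_{1,R}-\nu_1^0(x)$, and the $e^{\pm i\nu_1^0/h}$ branches are interchanged, which yields $\sigma_{1,R}=ie^{-2iS_{1,R}/h}+\cO(h)$. The point $b(E)$ is treated exactly like $a(E)$ with $V_2$, $\nu_2^0$, $S_{2,L}$ in place of $V_1$, $\nu_1^0$, $S_{1,L}$, giving $\sigma_{2,L}=-ie^{2iS_{2,L}/h}+\cO(h)$. Since every ingredient depends analytically on $E$, the formulas persist for complex $E\in\mathcal{D}_h(\delta_0,C_0)$, which gives the asserted uniformity.

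\emph{Main obstacle.} There is no genuine analytic difficulty; the work is entirely in the bookkeeping: checking the non-vanishing of the ``other'' potential at each caustic so that the scalar reduction is legitimate, and then carefully tracking (i) the base point of the phase ($0$ versus the caustic, which produces the exponentials $e^{\pm2iS_{j,S}/h}$), (ii) the orientation of the allowed region (left or right of the caustic), and (iii) the sign of $V_j'$ at the turning point, which together fix whether the Maslov factor is $-i$ or $+i$. The $\cO(h)$ remainders are the standard errors from the subprincipal terms in the WKB expansions, from the $\cO(h)$ term in the Airy representation, and from the $\cO(h^2)$ perturbation produced by the scalar reduction.
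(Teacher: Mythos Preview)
Your argument is correct and leads to the same formulas, but the route is genuinely different from the paper's.

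The paper follows Maslov's original scheme: after the same scalar reduction you describe, it represents the recessive solution near the caustic as a Fourier integral in the momentum variable, $u_1(x,h)=(2\pi h)^{-1/2}\int e^{i(\tilde x\xi+g(\xi))/h}c(\xi,h)\,d\xi$, solves the eikonal equation $\xi^2+V_1(a(E)-g'(\xi))=E$ and the first transport equation for $c_0$ in the $\xi$-variable, and then applies stationary phase. The two critical points $\xi_\pm(\tilde x)$ produce the two WKB branches, and the Maslov factor $-i$ comes from the opposite signs of $g''(\xi_\pm)$, i.e.\ from the $e^{\mp i\pi/4}$ prefactors in the stationary-phase expansion.

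You instead invoke directly the uniform Airy representation of the recessive solution (already quoted in the paper as \eqref{approxYaf} and \cite{Ya}), expand $\check\ai$ by its oscillatory asymptotics, and read off the two branches. This is shorter and more concrete for a simple turning point, and has the virtue of reusing a formula the paper has already stated; the paper's approach, on the other hand, is the one that generalises cleanly to more complicated caustics and makes the origin of the Maslov index (as a signature contribution from stationary phase) more transparent. Both methods give the same $\cO(h)$ control, since in either case the error comes from the first subprincipal correction in the WKB/Airy expansion together with the $\cO(h^2)$ perturbation from the scalar reduction.
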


\begin{proof}
We compute $\sigma_{1,L}$ here. The computations of $\sigma_{1,R}$ and $\sigma_{2,L}$ are similar.

Microlocally near $(x,\xi)=(a(E),0)$, the operator $P_2-E$ is elliptic and  invertible.
Hence the system \eq{sch} is reduced to a single semiclassical pseudo-differential equation for $u_1$:
$$
(P_1-E)u_1 - h^2Ru_1=0,
$$
where $R=W(P_2-E)^{-1}W^*$. 
The idea of Maslov is to represent a solution $u_1$ as Fourier inverse transform of a WKB function 
$e^{ig(\xi)/h}c(\xi,h)$, $c(\xi,h)\sim\sum c_k(\xi)h^k$ in the momentum variable, that is, 
$$
u_1(x,h)=\frac 1{\sqrt{2\pi h}}\int_{\R} e^{i(\tilde x\xi+g(\xi))/h}c(\xi,h)d\xi,
$$
where $\tilde x=x-a(E)$.
The WKB function 
$e^{ig(\xi)/h}c(\xi,h)$ is a solution to the semiclassical pseudo-differential equation 
\begin{equation}\label{pDEq}
Q(\xi, hD_\xi; h) (e^{ig(\xi)/h}c(\xi,h))= 0,
\end{equation} 
where $Q(\xi, hD_\xi; h)$ is the standard quantization of $q(\xi, \xi^*;h)$ given by
$$
q(\xi,\xi^*;h) = p_1(-\xi^* + a(E), \xi) - E - h^2r(-\xi^* + a(E), \xi;h),
$$
where $\xi^*$ is the dual variable of $\xi$ and $r(x,\xi;h)$ is the standard symbol of $R$. 
In order to have the phase function and the asymptotic expansion of the symbol, we write 
\begin{align*}
 & e^{-ig(\xi)/h} Q(\xi, hD_\xi; h)\left( e^{ig(\xi)/h}c(\xi,h) \right)\\
 =& \frac{1}{2\pi h} \int\!\!\int e^{i\{(\xi - \eta)\xi^* -(g(\xi) - g(\eta))\}/h} q(\xi, \xi^*; h) c(\eta,h) d\eta d\xi^*\\
 =& \frac{1}{2\pi h} \int\!\!\int e^{i(\xi-\eta)\xi^*/h} q(\xi, \xi^* + \tilde g(\xi,\eta); h) c(\eta,h) d\eta d\xi^*\\
 =& \frac{1}{2\pi h} \int\!\!\int e^{i(\xi-\eta)\xi^*/h} q(\xi, \xi^* + \tilde g(\xi,\eta); h) c_0(\eta) d\eta d\xi^*+\ord(h)
\end{align*}
where $\tilde g(\xi,\eta)$ is defined by $g(\xi) - g(\eta) = (\xi-\eta) \tilde g(\xi,\eta)$. 
Since $\tilde g(\xi,\xi) = g'(\xi)$, $\partial_\eta\tilde g(\xi,\xi)=\frac 12 g''(\xi)$  and $q(\xi, \xi^* + \tilde g(\xi,\eta); h)=\xi^2+V_1(a(E)-\tilde g(\xi,\eta))-E-\xi^*V_1'(a(E)-\tilde g(\xi,\eta))+\ord ((\xi^*)^2)+\ord(h^2)$,
the last integral in the previous identities is equal to
$$
(\xi^2+V_1(a(E)-g'(\xi))-E)c_0(\xi)
+\frac hi \partial_\eta\left(V_1'(a(E)-\tilde g(\xi,\eta))c_0(\eta)
\right)|_{\eta=\xi}+\ord(h^2).
$$
Therefore we have
\begin{align*}
 & e^{-ig(\xi)/h} Q(\xi, hD_\xi; h)\left( e^{ig(\xi)/h}c(\xi,h) \right)\\
 =& (\xi^2+V_1(a(E)-g'(\xi))-E)c_0(\xi)\\
 &\  + \frac hi \left (V_1'(a(E)- g'(\xi)) c_0'(\xi) - \frac{1}{2} g''(\xi) V_1''(a(E)-g'(\xi)) c_0(\xi)\right )+\ord(h^2).
\end{align*}

Hence the phase function $g(\xi)$ and the coefficient $c_0(\xi)$ of the symbol should satisfy respectively
the eikonal equation
\begin{equation}\label{eikonal_xi}
\xi^2+V_1(a(E)-g'(\xi))-E=0,
\end{equation}
and the first transport equation
\begin{equation}\label{1st_transport_xi}
V_1'(a(E)- g'(\xi)) c_0'(\xi) - \frac{1}{2} g''(\xi) V_1''(a(E)-g'(\xi)) c_0(\xi) = 0.
\end{equation}

Recall that, near $x=a(E)$, $V_1(x)$ behaves like
$$
V_1(x)=V_1(a(E)+\tilde x)=E-\tau_0(E)\tilde x+\ord (\tilde x^2),
$$
where $\tau_0(E)$ is an analytic function near $E=E_0$ with $\tau_0(E_0)>0$.
This means that the solution $g(\xi)$ of the eikonal equation \eqref{eikonal_xi} with $g(0)=0$ satisfies
$$
g(\xi)=-\frac 1{3\tau_0(E)}\xi^3+\ord (\xi^5),
$$
and that there are two real critical points 
of the phase $\tilde x\xi+g(\xi)$  for positive $\tilde x$, 
which are denoted by $\xi_\pm(\tilde x)$ satisfying $\xi_\pm(\tilde x) = \pm\sqrt{\tau_0(E)\tilde x} + \ord (\tilde x^{3/2})$.
The critical values $\tilde x\xi_\pm(\tilde x)+g(\xi_\pm(\tilde x))$ behave like
$$
\tilde x\xi_\pm(\tilde x)+g(\xi_\pm(\tilde x))=\pm\frac 23 \sqrt{\tau_0(E)}\tilde x^{\frac 32}+\ord(\tilde x^{\frac 52}),
$$
as $\tilde x\to 0$. Notice that $\mp g''(\xi_\pm(\tilde x))\sim \pm 2\xi_\pm(\tilde x) /{\tau_0(E_0)} >0$ for small $\tilde x>0$.

The stationary phase method gives the asymptotic expansion of $u_1$ as $h\to 0_+$ near $\tilde x$ positive and close to $0$.
Since the phase $\nu_1^0(x)$ of the WKB solution $f_{1,L}^+$ also behaves like 
$$-S_{1,L}(E) + \frac{2}{3} \sqrt{\tau_0(E)}\tilde x^{\frac{3}{2}}+\ord(\tilde x^{\frac{5}{2}})$$
as $\tilde x\to 0$,
it should coincide with $\tilde x\xi_+(\tilde x)+g(\xi_+(\tilde x))$ except the phase shift $-S_{1,L}(E)$.
Hence the contribution of the stationary phase from the critical point $\xi_+(\tilde x)$ 
$$
e^{\frac{i}{h}(\tilde x\xi_+(\tilde x)+g(\xi_+(\tilde x)) )}\sum_{k=0}^\infty \tilde c_k^+(x)h^k, 
\quad \tilde c_0^+(x)=e^{-\pi i/4}\frac{c_0(\xi_+(\tilde x))}{\sqrt{|g''(\xi_+(\tilde x))|}}
$$
should coincide with the WKB expansion $t_{1,L}^+a_1^+e^{i\nu_1^0(x)/h}$
microlocally on $\Gamma_{1,L}^+(E)$, so that one sees $e^{\frac{i}{h} S_{1,L}} \tilde c_{0}^+ = t_{1,L}^+ a_1^+$. 
Similarly the contribution from the critical point $\xi_-(\tilde x)$ 
$$
e^{\frac{i}{h}(\tilde x\xi_-(\tilde x)+g(\xi_-(\tilde x)) )}\sum_{k=0}^\infty \tilde c_k^-(x)h^k, 
\quad \tilde c_0^-(x)=e^{\pi i/4}\frac{c_0(\xi_-(\tilde x))}{\sqrt{g''(\xi_-(\tilde x))}}
$$
should coincide with the WKB expansion $t_{1,L}^-a_1^-e^{-i\nu_1^0(x)/h}$
microlocally on $\Gamma_{1,L}^-(E)$, so that one has $e^{-\frac{i}{h} S_{1,L}} \tilde c_{0}^- = t_{1,L}^- a_1^-$. 
To compute $ c_0^\pm(x)$, 
we solve the first transport equation \eqref{1st_transport_xi}. 
In fact, putting $\phi(\xi):= V_1'(a-g'(\xi))$, we see that \eqref{1st_transport_xi} is equivalent to $\left(\sqrt{\phi(\xi)}c_0(\xi) \right)' = 0$, 
so that we get explicitly 
$c_0(\xi) = \bigl(-V_1'(a - g'(\xi) ) \bigr)^{-\frac{1}{2}}$.
Hence the following holds for small $\tilde x >0$  
$$
c_0(\xi_\pm(\tilde x)) = \left(-V_1'(a + \frac{1}{\tau_0(E)} \xi_\pm(\tilde x)^2 ) \right)^{-\frac{1}{2}} (1+ \ord(\tilde x^{\frac{1}{2}})). 
$$
The above calculations and the fact that $a_{1,0}^\pm$ are independent of the sign $\pm$ 
imply that $\tilde c_0^+(x)/\tilde c_0^-(x) = -i + \ord(\tilde x^{\frac{1}{2}})$. 
Therefore we have verified $\sigma_{1,L}=-ie^{ + 2iS_{1,L}/h}+\ord (h)$. 
\end{proof}

Let $E$ be a resonance in ${\mathcal D}_h(\delta_0,C_0)$ and $u(x,h)$ a resonant state corresponding to $E$.
We normalize $u$ in such a way that 
\be
\label{norm}
u\sim f_{1,L}^- \,\,\,{\rm microlocally\,on}\,\,\Gamma_{1,L}^{-}(E).
\ee
Since $u$ is a resonant state, we also have
\be
u\sim 0 \,\,\,{\rm microlocally\,on}\,\,\Gamma_{2,R}^{-}(E).
\ee
Then using the microlocal connection formulae of the previous and this sections, we obtain
\begin{proposition}\sl
\label{outgoing}
Let $u$ be a resonant state corresponding to a resonance $E$ in ${\mathcal D}_h(\delta_0, C_0)$ satisfying
\eq{norm}. Then it holds that
\be 
u\sim t_{2,R}^+f_{2,R}^+ \,\,\,{\rm microlocally\,on}\,\,\Gamma_{2,R}^{+}(E),
\ee
with
\be\label{outgoing_coeff}
\begin{aligned}
t_{2,R}^+ &= -2i \sqrt{\frac{\pi h}{\tau_1+\tau_2}} e^{\frac{i}{h}(-S_{1,R}+S_{2,L})}\\
 & \times \!\!\left[ r_0(0) E^{-\frac{1}{4}}\! \sin\!\left(\!\frac{{\mathcal B}(E)}{h}\! +\! \frac{\pi}{4}\! \right)  
\! +\! r_1(0) E^{\frac{1}{4}}\! \cos\!\left(\!\frac{{\mathcal B}(E)}{h}\! +\! \frac{\pi}{4}\! \right) \right]
\!\! \left(\! 1\! +\! \ord(h\log \frac{1}{h})\! \right),
\end{aligned}
\ee
where 
$${\mathcal B}(E):=\int_{b(E)}^0\!\!\!\!\sqrt{E-V_2(x)}dx+\int_0^{c(E)}\!\!\!\!\!\sqrt{E-V_1(x)}dx.$$
In particular, the leading term of $| t_{2,R}^+|^2$ is equal to
\be\label{outgoing_coeff_modu}
\frac{4\pi h}{\tau_1+\tau_2} \left| r_0(0) E^{-\frac{1}{4}} \sin\left( \frac{{\mathcal B}(E)}{h} + \frac{\pi}{4}  \right)  
 + r_1(0) E^{\frac{1}{4}} \cos\left( \frac{{\mathcal B}(E)}{h} + \frac{\pi}{4} \right) \right|^2.
\ee
\end{proposition}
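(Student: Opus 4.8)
The plan is to transport the microlocal data of $u$ around the characteristic set $\Gamma_1(E)\cup\Gamma_2(E)$, starting from the two conditions imposed on $u$ — the normalization $t_{1,L}^-=1$ on $\Gamma_{1,L}^-(E)$ from \eqref{norm} and the outgoing condition $t_{2,R}^-=0$ on $\Gamma_{2,R}^-(E)$ — and following the Hamilton flow from these two input branches to the target branch $\Gamma_{2,R}^+(E)$. Along the way one crosses exactly four ``gates'': the crossing point $\rho_-(E)$, then the turning point $(c(E),0)$ on $\Gamma_1$, the turning point $(b(E),0)$ on $\Gamma_2$, and finally the crossing point $\rho_+(E)$; the connection formulas at these gates are precisely Propositions~\ref{connection1} and \ref{connection2} and Lemma~\ref{maslov}. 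So the whole proof consists in composing these four elementary transfer formulas, in this order, and then simplifying the result.

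First I would insert $(t_{1,L}^-,t_{2,R}^-)=(1,0)$ into \eqref{tau-}, which yields $t_{1,R}^-=\tau_{1,1}^-=1+\ord(h)$ and $t_{2,L}^-=\tau_{2,1}^-=e^{-i\pi/4}\tau_+h^{1/2-i\mu h}+\ord(h^{3/2})$. Next, Lemma~\ref{maslov} applied at $(c(E),0)$ and at $(b(E),0)$ gives
$$
t_{1,R}^+=\sigma_{1,R}\,t_{1,R}^-=ie^{-2iS_{1,R}/h}\bigl(1+\ord(h)\bigr),\qquad
t_{2,L}^+=\sigma_{2,L}\,t_{2,L}^-=-ie^{-i\pi/4}\tau_+h^{1/2-i\mu h}e^{2iS_{2,L}/h}\bigl(1+\ord(h)\bigr).
$$
Finally, feeding $(t_{1,R}^+,t_{2,L}^+)$ into \eqref{tau+} and reading off the second row, one obtains
$$
t_{2,R}^+=\tau_{2,1}^+t_{1,R}^++\tau_{2,2}^+t_{2,L}^+
=-ih^{1/2}\Bigl(e^{i\pi/4}\tau_-h^{i\mu h}e^{-2iS_{1,R}/h}+e^{-i\pi/4}\tau_+h^{-i\mu h}e^{2iS_{2,L}/h}\Bigr)+\ord(h^{3/2}).
$$
The other output of \eqref{tau+}, namely $t_{1,L}^+$, is not needed here: equating it with the value provided by Lemma~\ref{maslov} at $(a(E),0)$ is exactly the monodromy condition used in Section~7.

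It remains to bring this into the form \eqref{outgoing_coeff}. I would first replace $h^{\pm i\mu h}=e^{\mp i\mu h\log(1/h)}=1+\ord(h\log(1/h))$, moving these factors into the error term, then factor out $e^{i(S_{2,L}-S_{1,R})/h}$ and substitute $\tau_\pm=\sqrt{\pi/(\tau_1+\tau_2)}\bigl(r_0(0)E^{-\frac14}\pm ir_1(0)E^{\frac14}\bigr)$. The crucial observation is that $\mathcal{B}(E)=S_{1,R}(E)+S_{2,L}(E)$, i.e.\ $\mathcal{B}(E)$ is the sum of the two action integrals $S_{1,R}$ and $S_{2,L}$ of Lemma~\ref{maslov}. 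Grouping the terms proportional to $r_0(0)$ and to $r_1(0)$ separately and using the elementary identities $\cos(\theta-\tfrac\pi4)=\sin(\theta+\tfrac\pi4)$ and $\sin(\theta-\tfrac\pi4)=-\cos(\theta+\tfrac\pi4)$ with $\theta=\mathcal{B}(E)/h$, the parenthesis collapses to
$$
2\sqrt{\tfrac{\pi}{\tau_1+\tau_2}}\,e^{i(S_{2,L}-S_{1,R})/h}\Bigl[r_0(0)E^{-\frac14}\sin\bigl(\tfrac{\mathcal{B}(E)}h+\tfrac\pi4\bigr)+r_1(0)E^{\frac14}\cos\bigl(\tfrac{\mathcal{B}(E)}h+\tfrac\pi4\bigr)\Bigr],
$$
which is exactly \eqref{outgoing_coeff} with relative error $\ord(h\log(1/h))$. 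Formula \eqref{outgoing_coeff_modu} for the leading term of $|t_{2,R}^+|^2$ then follows by taking moduli, the prefactor $e^{i(S_{2,L}-S_{1,R})/h}$ being unimodular since $S_{1,R}$ and $S_{2,L}$ are real for real $E$. Finally, since every object entering the computation depends analytically on $E$, the identity extends to $E\in\mathcal{D}_h(\delta_0,C_0)$.

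There is no deep obstacle here: the genuine analysis is already packaged in the connection formulas of Section~5, and what is left is careful bookkeeping. The two points that actually require attention are (i) checking that the Hamilton flow really carries $\Gamma_{1,L}^-(E)$ and $\Gamma_{2,R}^-(E)$ onto $\Gamma_{2,R}^+(E)$ through precisely these four gates and with the correct orientation, so that the matrices of Propositions~\ref{connection1} and \ref{connection2} and the scalars of Lemma~\ref{maslov} are composed with the right inputs rather than their inverses; and (ii) keeping track of all the signs, the $e^{\pm i\pi/4}$ prefactors and the $h^{\pm i\mu h}$ corrections, so that the trigonometric collapse produces $\sin(\mathcal{B}/h+\pi/4)$ and $\cos(\mathcal{B}/h+\pi/4)$ together with the stated constant $-2i\sqrt{\pi h/(\tau_1+\tau_2)}$ and the stated error $\ord(h\log(1/h))$.
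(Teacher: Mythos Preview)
Your proof is correct and follows essentially the same approach as the paper's own proof: compose the transfer matrices of Propositions~\ref{connection1} and~\ref{connection2} with the Maslov factors of Lemma~\ref{maslov} to obtain $t_{2,R}^+=\sigma_{1,R}\tau_{2,1}^+\tau_{1,1}^-+\sigma_{2,L}\tau_{2,2}^+\tau_{2,1}^-$, then simplify. Your presentation is in fact somewhat more detailed than the paper's, which carries the $h^{\pm i\mu h}$ factors inside the trigonometric argument (writing $\theta=\tfrac{1}{h}(S_{1,R}+S_{2,L})-\tfrac{\pi}{4}-\mu h\log h$) and only absorbs them into the $\ord(h\log\tfrac{1}{h})$ error at the very end, whereas you do so before the trigonometric collapse; the two orderings are equivalent.
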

\begin{proof}
Using the notations in Propositions \ref{connection1}, \ref{connection2} and Lemma \ref{maslov}, we have
\begin{align*}
t_{2,R}^+&=\sigma_{1,R} \tau_{2,1}^+ \tau_{1,1}^-+\sigma_{2,L}\tau_{2,2}^+\tau_{2,1}^-, \\
 &= - \sqrt{\frac{\pi h}{\tau_1+\tau_2}}
\left(
e^{-i\theta}(a-ib) + e^{i\theta}(a+ib)
\right),
\end{align*}
where $\theta = \frac{1}{h}(S_{1,R}+S_{2,L}) - \frac{\pi}{4} - \mu h\log h$, 
$a= r_0(0) E^{-1/4}$ and $b= r_1(0) E^{1/4}$. A simple computation gives 
\begin{align*}
t_{2,R}^+ &= -2 i \sqrt{\frac{\pi h}{\tau_1+\tau_2}} e^{\frac{i}{h}(-S_{1,R}+S_{2,L})}
\left( a\cos\theta - b\sin\theta \right),\\
&= -2 i \sqrt{\frac{\pi h}{\tau_1+\tau_2}} e^{\frac{i}{h}(-S_{1,R}+S_{2,L})}
\left( a\sin(\theta+\frac{\pi}{2}) + b\cos(\theta+\frac{\pi}{2}) \right).
\end{align*}
Hence we obtain \eqref{outgoing_coeff} and \eqref{outgoing_coeff_modu}.

\end{proof}

Similarly, after having made a whole loop on $\Gamma_1(E)$, we also obtain,
\be
t_{1,L}^-=-e^{2i{\mathcal A}(E)/h}+\ord(h),
\ee
and therefore, we have,
\begin{proposition} \sl
\label{bohr}
Any resonance $E$ in ${\mathcal D}_h(\delta_0, C_0)$ satisfies
\be\label{bohr_coeff}
e^{2i{\mathcal A}(E)/h} = -1+\ord(h).
\ee
\end{proposition}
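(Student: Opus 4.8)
The plan is to exploit the fact that a resonant state is a genuine, single-valued object, so that propagating its microlocal data once around the closed characteristic loop $\Gamma_1(E)$ must return the value one started with; imposing this consistency produces a Bohr--Sommerfeld-type relation whose leading part is exactly \eqref{bohr_coeff}. Recall that $\Gamma_1(E)$ is a closed curve, cut by the two caustics $(a(E),0)$, $(c(E),0)$ and the two crossing points $\rho_\pm(E)$ into the four arcs $\Gamma_{1,L}^-,\ \Gamma_{1,L}^+,\ \Gamma_{1,R}^+,\ \Gamma_{1,R}^-$, met in this cyclic order by the Hamilton flow of $\xi^2+V_1(x)$. Fix a resonant state $u$ normalized as in \eqref{norm}, so that $t_{1,L}^-=1$, and recall that, $u$ being outgoing, one also has $t_{2,R}^-=0$.

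First I would pass the caustic at $a(E)$ using Lemma \ref{maslov}: since $t_{1,L}^-=1$, this gives $t_{1,L}^+=\sigma_{1,L}t_{1,L}^-=-ie^{2iS_{1,L}/h}+\ord(h)$. Next, at $\rho_+(E)$, the first line of \eqref{tau+} reads $t_{1,L}^+=\tau_{1,1}^+t_{1,R}^++\tau_{1,2}^+t_{2,L}^+$, with $\tau_{1,1}^+=1+\ord(h)$ and $\tau_{1,2}^+=\ord(h^{1/2})$; to solve this for $t_{1,R}^+$ I need that $t_{2,L}^+$ is small, and this is where the outgoing condition is used. Feeding $t_{2,R}^-=0$ (and $t_{1,L}^-=1$) into the second line of \eqref{tau-} gives $t_{2,L}^-=\tau_{2,1}^-=\ord(h^{1/2})$, hence $t_{2,L}^+=\sigma_{2,L}t_{2,L}^-=\ord(h^{1/2})$ by Lemma \ref{maslov}, so that $\tau_{1,2}^+t_{2,L}^+=\ord(h)$ and therefore $t_{1,R}^+=t_{1,L}^++\ord(h)$. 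Then the caustic at $c(E)$ gives $t_{1,R}^-=\sigma_{1,R}^{-1}t_{1,R}^+$; since $\sigma_{1,R}^{-1}=-ie^{2iS_{1,R}/h}+\ord(h)$ and $S_{1,L}+S_{1,R}={\mathcal A}(E)$, one gets $\sigma_{1,R}^{-1}\sigma_{1,L}=-e^{2i{\mathcal A}(E)/h}+\ord(h)$ and thus $t_{1,R}^-=-e^{2i{\mathcal A}(E)/h}+\ord(h)$. On the other hand, the first line of \eqref{tau-} at $\rho_-(E)$, again with $t_{2,R}^-=0$ and $t_{1,L}^-=1$, forces $t_{1,R}^-=\tau_{1,1}^-=1+\ord(h)$. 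Equating the two expressions for $t_{1,R}^-$ gives $-e^{2i{\mathcal A}(E)/h}+\ord(h)=1+\ord(h)$, which is \eqref{bohr_coeff}.

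Each step here is short; the place where the real content lies, and the main obstacle, is controlling the coupling to the second characteristic curve $\Gamma_2(E)$ at the crossing points. This is exactly why the error in \eqref{bohr_coeff} is of size $h$ rather than $h^{1/2}$: the off-diagonal transfer coefficients at $\rho_\pm$ are only $\ord(h^{1/2})$ and would pollute the relation at order $h^{1/2}$, were it not for the outgoing condition $t_{2,R}^-=0$, which forces the $\Gamma_2$-amplitudes that feed back into the $\Gamma_1$-loop to be themselves $\ord(h^{1/2})$, making their contribution quadratically small. (These same $h^{1/2}$-order corrections, carried by $t_{2,R}^+$ rather than by $t_{1,R}^-$, are precisely what is extracted in Proposition \ref{outgoing} and used to obtain the resonance widths in Theorem \ref{mainth}.) A minor point to verify along the way is that the inversions $\sigma_{1,R}^{-1}$ and $1/\tau_{1,1}^\pm$ are uniformly legitimate on ${\mathcal D}_h(\delta_0,C_0)$: this holds because $\im E=\ord(h)$ keeps $\im S_{1,S}(E)=\ord(h)$, so that $|e^{2iS_{1,S}/h}|$ (and $|h^{i\mu h}|$) stay bounded above and below.
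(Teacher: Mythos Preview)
Your argument is correct and follows the same monodromy-around-$\Gamma_1(E)$ idea that the paper indicates in the line preceding the proposition (``after having made a whole loop on $\Gamma_1(E)$, we also obtain $t_{1,L}^-=-e^{2i{\mathcal A}(E)/h}+\ord(h)$''). You have spelled out explicitly the steps the paper leaves implicit --- in particular the reason the $\Gamma_2$-feedback at the crossing points contributes only at order $h$ thanks to $t_{2,R}^-=0$ --- and your bookkeeping of the $\sigma_{1,S}$, $\tau^\pm_{j,k}$ factors and their invertibility on ${\mathcal D}_h(\delta_0,C_0)$ is accurate.
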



\section{Precise asymptotics of resonances}
We have already given in Theorem \ref{ThapproxRes} the semiclassical distribution of resonances modulo $\cO(h^{\frac 76})$. In this section, we give a more precise 
asymptotic behaviour of the distribution of resonances using the 
microlocal results of Propositions \ref{outgoing} and \ref{bohr}.

First, the formula \eq{reEk} about the real part of resonances is a direct consequence of Proposition \ref{bohr}.
 
Second, the formula \eq{imEk} about the imaginary part of resonances is deduced from
Proposition \ref{outgoing} and the following propositions \ref{propgreen} and \ref{propnorm}.

\begin{proposition}\sl
\label{propgreen}
Let $E$ and $u$ be as in Proposition \ref{outgoing}. Then one has, for any $x_0>c$,  
$$
-\im E=\frac{|t_{2,R}^+|^2}{\Vert u\Vert_{L^2(-\infty,x_0)}^2}h(1+\cO(h)).
$$
\end{proposition}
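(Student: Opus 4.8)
The plan is to prove Proposition \ref{propgreen} by a Green-type flux identity on the finite interval $(-\infty,x_0)$, which converts $\im E$ into the outgoing flux carried by $u_2$ through $x=x_0$. Since $Pu=Eu$, taking the $L^2(-\infty,x_0)\oplus L^2(-\infty,x_0)$ scalar product with $u$ and subtracting the conjugate gives $2i\,\im E\,\Vert u\Vert^2_{L^2(-\infty,x_0)}=\langle Pu,u\rangle-\langle u,Pu\rangle$. The right-hand side is a pure boundary term: integrating by parts, each scalar operator $P_j=h^2D_x^2+V_j$ produces the Wronskian contribution $-h^2(u_j'\overline{u_j}-u_j\overline{u_j'})(x_0)=-2ih^2\,\im(u_j'\overline{u_j})(x_0)$, while the four off-diagonal terms coming from $hW$ and $hW^*$ combine — here one uses crucially the first-order form $W=r_0+ir_1hD_x$ of Assumption (A5) — after a single integration by parts into $2ih^2r_1(x_0)\,\im(u_2\overline{u_1})(x_0)$. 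At $x=-\infty$ no contribution appears: on $(-\infty,0]$ the resonant state is a multiple of $w_{1,L}$, built from the exponentially decaying solutions $u_{j,L}^-$, so $u$ and $u'$ decay exponentially there. Thus $-\im E\,\Vert u\Vert^2_{L^2(-\infty,x_0)}=h^2\bigl(\im(u_1'\overline{u_1})+\im(u_2'\overline{u_2})-r_1(x_0)\im(u_2\overline{u_1})\bigr)(x_0)$.

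Next I would discard the first-channel terms at $x_0$. Since $x_0>c$, Assumption (A3) gives $V_1(x_0)>E_0\ge\re E$ for $\delta_0$ small, so $\Gamma_1(E)$ does not meet $\{x=x_0\}$ and $P_1-E$ is microlocally elliptic there; from the system, $u_1=-h(P_1-E)^{-1}Wu_2$ microlocally near $x_0$. As $u$ is outgoing it has no microsupport on $\Gamma_{2,R}^-(E)$, hence near $x_0$ it lives only on $\Gamma_{2,R}^+(E)$, where $u\sim t_{2,R}^+f_{2,R}^+$ with $|t_{2,R}^+|=\cO(\sqrt h)$ (Proposition \ref{outgoing}) and $f_{2,R}^+$ bounded (we are away from the turning points). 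Hence $|u_2(x_0)|+|hu_2'(x_0)|=\cO(\sqrt h)$, so $|u_1(x_0)|+|hu_1'(x_0)|=\cO(h^{3/2})$, and the $u_1$- and mixed terms are $\cO(h^4)$, negligible against $h^2\,\im(u_2'\overline{u_2})(x_0)=\cO(h^2)$. Therefore $-\im E\,\Vert u\Vert^2_{L^2(-\infty,x_0)}=h^2\,\im\bigl(u_2'(x_0)\overline{u_2(x_0)}\bigr)(1+\cO(h))$.

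To evaluate the right-hand side I would insert the WKB form $u_2(x)=t_{2,R}^+(E-V_2(x))^{-1/4}e^{i\nu_2^0(x)/h}(1+\cO(h))$, valid microlocally on $\Gamma_{2,R}^+(E)$ near $x_0$ by Proposition \ref{wkb+}, with $(\nu_2^0)'(x_0)=\sqrt{E-V_2(x_0)}$. A routine computation of $\im(u_2'\overline{u_2})$ for such a wave gives $\im(u_2'\overline{u_2})(x_0)=h^{-1}|t_{2,R}^+|^2\,|E-V_2(x_0)|^{-1/2}\,\re\sqrt{E-V_2(x_0)}\;e^{-2\im\nu_2^0(x_0)/h}(1+\cO(h))$, and since $|E-V_2(x_0)|^{-1/2}\re\sqrt{E-V_2(x_0)}=1+\cO((\im E)^2)$ one obtains $-\im E=\dfrac{|t_{2,R}^+|^2}{\Vert u\Vert^2_{L^2(-\infty,x_0)}}\,h\;e^{-2\im\nu_2^0(x_0)/h}\,(1+\cO(h))$. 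It remains to get rid of the oscillatory factor: by Theorem \ref{ThapproxRes}, $\im E=\cO(h^{7/6})$, hence $\im\nu_2^0(x_0)=\cO(\im E)$ and this factor is $1+\cO(h^{1/6})$; combining with the upper bound $|t_{2,R}^+|^2=\cO(h)$ and the lower bound $\Vert u\Vert^2_{L^2(-\infty,x_0)}\gtrsim 1$ (because $u\sim f_{1,L}^-$ has an $\cO(1)$ amplitude on the classically allowed part $(a(E),c(E))$ of $\Gamma_1$, and moving $x_0$ changes this norm only by $\cO(h)$ thanks to \eqref{decItheta}), one first deduces $\im E=\cO(h^2)$; re-inserting this improved bound makes $e^{-2\im\nu_2^0(x_0)/h}=1+\cO(h)$, which yields the claimed identity and also shows that it is independent of $x_0>c$ up to $\cO(h)$.

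I expect the main obstacle to be the careful bookkeeping of all the boundary terms produced by the $2\times 2$ integration by parts — especially the cancellation collapsing the $W$ and $W^*$ contributions into a single $r_1$-term — together with the justification that the microlocal relation $u\sim t_{2,R}^+f_{2,R}^+$ may legitimately be used pointwise at $x_0$ (that is, that the discarded part is $\cO(e^{-\alpha/h})$ there, which relies on the analytic microlocal framework of Section 5). By contrast, the exponential decay at $-\infty$, the negligibility of the first channel, and the concluding bootstrap on $\im E$ are comparatively routine.
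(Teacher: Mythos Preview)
Your proof is correct and follows essentially the same route as the paper: derive the flux identity \eqref{green} from $\im\langle (P-E)u,u\rangle_{L^2(-\infty,x_0)}=0$, then plug in the microlocal description $u\sim t_{2,R}^+f_{2,R}^+$ at $x_0>c$ (where $P_1-E$ is elliptic) to see that only the $u_2$-Wronskian contributes, with $(\nu_2^0)'|b_{2,0}^+|^2=1$. The one point where you are more careful than the paper is the factor $e^{-2\im\nu_2^0(x_0)/h}$ coming from the complex energy: the paper silently absorbs it into the $\cO(h)$ (this is legitimate since Proposition~\ref{bohr} already forces $\im E=\cO(h^2)$), whereas your bootstrap from $\im E=\cO(h^{7/6})$ to $\im E=\cO(h^2)$ makes this step self-contained.
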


\begin{proof}
Let $x_0$ be any real number. The width of a resonance $E$ can  be expressed in terms of the resonant state $u(x,h)={}^t(u_1,u_2)$ and its derivative at
the point $x_0$ by the formula,
\be
\label{green}
\begin{aligned}
(\im E) \Vert u\Vert_{L^2(-\infty,x_0)}^2=& -h^2\im \left(
u_1'(x_0)\overline {u_1(x_0)}+u_2'(x_0)\overline {u_2(x_0)}\right)\\
& +h^2\im r_1(x_0)u_2(x_0)\overline {u_1(x_0)}.
\end{aligned}
\ee
Indeed this formula can be deduced easily by integration by parts starting from the identity,
$$
\im \langle (P-E)u,u\rangle_{L^2(-\infty,x_0)}=0.
$$

Suppose now that  $x_0>c$.
Suppose that $u$ is a resonant state corresponding to a resonance $E$ in ${\mathcal D}_h(\delta_0, C_0)$ satisfying
\eq{norm}. Then the asymptotic behaviour of the RHS of \eq{green} can be given in terms of the quantity $t_{2,R}^+$
of Proposition \ref{outgoing}.

Let $\chi(x,\xi)\in C_0^\infty(\R^n)$ be a cutoff function identically one in a neighborhood of the point $(x_0,\xi_0)\in \Gamma_{2,R}^+(E)$ 
and supported in its neighborhood.
Then  the function $u-t_{2,R}^+\chi^Wf_{2,R}^+$ does not have microsupport near the set
$\{x_0\}\times\R_\xi$, and it follows that
$$
u=t_{2,R}^+\chi^Wf_{2,R}^++\cO(h^\infty)
$$ 
locally in a neighbourhood of $x=x_0$.
Hence from Proposition \ref{wkb+}, one has
$$
u_1'(x_0)\overline{u_1(x_0)}-r_1(x_0)u_2(x_0)\overline{u_1(x_0)}=|t_{2,R}^+|^2\cO(h),
$$
$$
u_2'(x_0)\overline{u_2(x_0)}=|t_{2,R}^+|^2(ih^{-1}(\nu_2^0)'|b_2^+|^2+\cO(1))=ih^{-1}|t_{2,R}^+|^2(1+\cO(h)).
$$
\end{proof}

Concerning the $L^2$-norm of $u$ appearing in the previous proposition, we prove,
\begin{proposition}\sl 
\label{propnorm}
For any $x_0>c$ and for any $E\in [E_0 -\delta_0, E_0 + \delta_0]$, one has,
$$
\Vert u\Vert_{L^2(-\infty,x_0)}=I_1+\cO(h^\frac13),
$$
with,
$$
I_1:= \left(2\int_{a(E)}^{c(E)} (E-V_1(x))^{-1/2} dx\right)^{\frac12} = 2\left( {\mathcal A}'(E) \right)^{\frac{1}{2}}.
$$
\end{proposition}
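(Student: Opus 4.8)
The plan is to compute the $L^2$-norm of the resonant state $u$ on $(-\infty, x_0)$ by localizing the integral to the region where $u$ actually has mass, namely the classically allowed region $[a(E), c(E)]$ for the first component $u_1$, and to show that all other contributions are negligible. First I would split $\|u\|_{L^2(-\infty,x_0)}^2 = \|u_1\|^2 + \|u_2\|^2$ and argue that $\|u_2\|_{L^2(-\infty,x_0)}^2 = \cO(h^{2/3})$: indeed, by the normalization \eqref{norm} and the microlocal connection formulas, the second component $u_2$ is microlocally $\cO(h)$ on all branches $\Gamma_{1,S}^\pm$ (it is the small component of $f_{1,S}^\pm$), while on the branches $\Gamma_{2,S}^\pm$ it is $\cO(\sqrt h)$ times a WKB function (since the coefficients $\beta_S$ are $\cO(\sqrt h)$, cf. Propositions \ref{exprho-}, \ref{connection1}), and the corresponding oscillatory integral over the bounded $x$-interval $[b(E),c(E)]$ contributes $\cO(h)$ to the norm squared. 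Near the turning points the usual Airy-type bounds give an extra loss of at most $h^{-1/3}$, still leaving $\|u_2\|^2 = \cO(h^{1/3})$, which is within the claimed error.

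Next, for the main term $\|u_1\|_{L^2(-\infty,x_0)}$, I would observe that on $(-\infty,a(E))$ and on $(c(E),x_0)$ the first component $u_1$ is microlocally supported on branches where it decays exponentially (it lies over $\Gamma_1$ in the forbidden region, away from the crossing and turning points), so these contributions are $\cO(h^\infty)$, or at worst $\cO(h^{N})$ after accounting for the turning-point neighborhoods of size $h^{2/3}$. On the interior interval $[a(E)+\eps, c(E)-\eps]$ the resonant state agrees, up to a normalizing constant that is $1+\cO(h)$ by \eqref{norm} and Lemma \ref{maslov}, with the WKB solution $u_{1,L}^-$, for which \eqref{u1L-} gives
$$
u_{1,L}^-(x) = \frac{2h^{1/6}}{\sqrt\pi}(E-V_1(x))^{-1/4}\cos\!\left(\frac{{\mathcal A}(E)+\nu_1(x)}{h}-\frac\pi4\right)+\cO(h^{7/6}).
$$
Squaring and integrating, the $\cos^2$ averages to $\tfrac12$ modulo an oscillatory remainder that integrates to $\cO(h)$ by non-stationary phase (the phase $h^{-1}\nu_1(x)$ has non-vanishing derivative on the interior), so
$$
\int_{a(E)+\eps}^{c(E)-\eps}|u_{1,L}^-(x)|^2\,dx = \frac{2h^{1/3}}{\pi}\int_{a(E)}^{c(E)}(E-V_1(x))^{-1/2}\,dx\cdot\frac12 + \cO(h^{4/3}) + \cO(\eps).
$$
(The normalization of $u$ in \eqref{norm} is chosen precisely so that $u\sim h^{-1/6}u_{1,L}^-$ up to lower order, which cancels the $h^{1/3}$; I should track this constant carefully — this is where the factor $2$ in $I_1 = 2(\mathcal A'(E))^{1/2}$ comes from, via ${\mathcal A}'(E) = \tfrac12\int_a^c(E-V_1)^{-1/2}dx$.)

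The two turning-point neighborhoods $[a(E)-\eps,a(E)+\eps]$ and $[c(E)-\eps,c(E)+\eps]$ of fixed small width $\eps$ must then be handled, and their contribution to $\|u_1\|^2$ is $\cO(\eps)$ uniformly; letting $\eps\to 0$ after $h\to 0$ (or optimizing $\eps$ as a small power of $h$, using the Airy asymptotics to show the turning-point layers of width $h^{2/3}$ contribute only $\cO(h^{1/3})$) closes the estimate. The main obstacle I expect is the bookkeeping of the normalization constant: one has to trace, through the chain of connection formulas in Sections 5--6, exactly which multiple of $u_{1,L}^-$ the normalized resonant state reduces to on $\Gamma_{1,L}^-$, and verify that this constant is $h^{-1/6}(1+\cO(h))$ so that $\|u_1\|_{L^2}^2 = 4\mathcal A'(E) + \cO(h^{1/3})$ and hence $\|u\|_{L^2(-\infty,x_0)} = 2(\mathcal A'(E))^{1/2} + \cO(h^{1/3})$. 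The oscillatory-integral estimates themselves are routine given Lemma \ref{phasestat2} and the non-stationarity of $\nu_1'$ in the interior.
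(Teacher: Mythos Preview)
Your approach is essentially the same as the paper's: show $u_2=\cO(\sqrt h)$, identify $u_1$ with a scalar multiple of $u_{1,L}^-$ via the microlocal normalization \eqref{norm} together with the Maslov connection (both $f_{1,L}^-$ and $f_{1,L}^+$ contribute, combining into the cosine of \eqref{u1L-}), and then compute $\|u_{1,L}^-\|^2$ by writing $2\cos^2=1+\sin(2\cdot)$ and integrating by parts. The only correction is that the proportionality constant is $\sqrt{\pi}\,h^{-1/6}$, not $h^{-1/6}$: comparing the leading terms of $f_{1,L}^\pm$ in Proposition~\ref{wkb-} with \eqref{u1L-} gives $|u_1|=\frac{\sqrt\pi}{h^{1/6}}|u_{1,L}^-|+\cO(\sqrt h)$, and this extra $\sqrt\pi$ is exactly what produces the final value $I_1=2(\mathcal A'(E))^{1/2}$.
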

\begin{proof} In view of \eqref{u1L-} and Proposition \ref{wkb-}, our choice of normalization \eq{norm} for $u={}^t(u_1,u_2)$ is made in such a way that, locally on $(-\infty, x_0]$ (where both $f_{1,L}^-$ and $f_{1,L}^+$ contribute to $u$), we have,
$$
|u_1|=\frac{\sqrt \pi}{h^{\frac16}} |u_{1,L}^-| +\ord(\sqrt{h})\quad ; \quad u_2=\ord(\sqrt{h}).
$$ 
Moreover, we know that $u$ is exponentially small for $x<a(E)$ (both in $h$ as $h\to 0_+$, and in $x$ as $x\to -\infty$), and that, if we fix some $x_1<a(E)$, then  $h^{-\frac16}\Vert u_{1,L}^-\Vert_{L^2(x_1,x_0)}=\ord(1)$. Therefore, we have,
$$
\begin{aligned}
\Vert u\Vert_{L^2(-\infty,x_0)}^2= &\frac{\pi}{h^{\frac13}} \Vert u_{1,L}^-\Vert_{L^2(x_1,x_0)}^2 +\ord (\sqrt{h}).
\end{aligned}
$$
Finally, $\Vert u_{1,L}^-\Vert_{L^2(x_1,x_0)}^2$ can be computed by using the results of \cite{FMW1}, and one finds,
$$
\begin{aligned}
&\Vert u_{1,L}^-\Vert_{L^2(x_1,x_0)}^2 \\
& =  \frac{4h^{\frac13}}{\pi}\int_{a(E)+h^{\frac23}}^{c(E)-h^{\frac23}}(E-V_1(x))^{-1/2} \cos^2\left( \frac{{\mathcal A}(E) +\nu_1(x)}{h} -\frac{\pi}4\right) dx+\cO(h).
\end{aligned}
$$
Writing,
$$
2\cos^2\left( \frac{{\mathcal A}(E) +\nu_1(x)}{h} -\frac{\pi}4\right) = 1+\sin \left( \frac{2\left({\mathcal A}(E) +\nu_1(x)\right)}{h} \right),
$$
and making an integration by parts on the second term, we obtain,
$$
\begin{aligned}
\Vert u_{1,L}^-\Vert_{L^2(x_1,x_0)}^2 =&  \frac{2h^{\frac13}}{\pi}\int_{a(E)+h^{\frac23}}^{c(E)-h^{\frac23}}(E-V_1(x))^{-1/2}  dx +\cO(h^{\frac23})\\
=& \frac{2h^{\frac13}}{\pi}\int_{a(E)}^{c(E)}(E-V_1(x))^{-1/2}  dx +\cO(h^{\frac23}).
\end{aligned}
$$
Hence,
$$
\begin{aligned}
\Vert u\Vert_{L^2(-\infty,x_0)}^2 = 2\int_{a(E)}^{c(E)}(E-V_1(x))^{-1/2}  dx +\cO(h^{\frac13}),
\end{aligned}
$$
and the result follows.
\end{proof}


\begin{thebibliography}{10}
  

  \bibitem[ABA]{Ab} B.~Abdelmoumen, H.~Baklouti, S.~B.~Abdeljeilil\,:
   \newblock{\em Interacting double-state scattering system.}
    \newblock{Asymptot. Anal.,} 111 (2019); no. 1, 15-42.
 

  \bibitem[AgCo]{AgCo} 
 J.~Aguilar, J.M.~Combes\,:
 \newblock{\em A class of analytic perturbations for one-body Schr\"odinger Hamiltonians.}
 \newblock{Comm. Math. Phys.,} 22 (1971), no. 4, 269--279.
 
   \bibitem[As]{As} S.~Ashida\,: 
   \newblock{\em Molecular predissociation resonances below an energy level crossing. }
  \newblock{Asymptot. Anal.,} 107 (2018), no. 3-4, 135--167.   
 
  \bibitem[Ba]{Ba} H.~Baklouti\,:
   \newblock{\em Asymptotique des largeurs de r\'esonances pour un mod\`ele d'effet tunnel microlocal.}
    \newblock{Ann. Inst. H. Poincare Phys. Theor.,} 68 (1998); no. 2, 179-228.


	\bibitem[Be]{Be} C. A.~Bertulani\,:
	\newblock{\em Tunneling of atoms, nuclei, and molecules.} 
	\newblock{Few-Body Systems,} 56 (2015) no. 11-12,  727--736.

 
 \bibitem[DiSj]{DiSj} M.~Dimassi, S.~Sj\"ostrand\,:
\newblock{\em Spectral Asymptotics in the Semi-Classical Limit.}
\newblock{Cambridge University Press,} 1999.

 \bibitem[DiVi]{DiVi} E.F. van Dishoeck, R. Visser\,: 
   \newblock{\em Molecular photodissociation.}
\newblock{In: Laboratory Astrochemistry - From Molecules through Nanoparticles to Grains, eds. S. Schlemmer, T. Giesen, H. Mutschke, and C. J\"ager 2015, Weinheim: Wiley-VC,} 2015.
 
  
 \bibitem[DyZw]{DyZw} S.~Dyatlov, M.~Zworski\,:
  \newblock{\em Mathematical theory of scattering resonances.}  Book in progress, http://math.mit.edu/~dyatlov/res/res.pdf
 
  \bibitem[Er]{Er} A.~Erd\'elyi\,: 
  \newblock{\em Asymptotic expansions.}
  \newblock{ Dover Publications, Inc., New York,} 1956.

 
  \bibitem[FLM]{FLM} 
  S.~Fujii\'e, A.~Lahamar-Benbernou, A.~Martinez\,:
  \newblock{\em Width of shape resonances for non globally analytic potentials.}
  \newblock{J. Math. Soc. Japan,} 63 (2011), no. 1, 1-78. 
  
   \bibitem[FLN]{FLN} S.~Fujii\'e, C.~Lasser, L.~N\'ed\'elec\,:
   \newblock{\em Semiclassical resonances for a two-level Schr\"odinger operator with a conical intersection.}
     \newblock{Asympt. Anal.,} 65 (2009), no. 1-2, 17-58.
  
     \bibitem[FuRa]{FuRa} 
  S.~Fujii\'e, T.~Ramond\,:
   \newblock{\em Matrice de scattering et r\'esonanes associ\'ees \`a une orbite h\'et\'erocline.}
     \newblock{Ann. Inst. H. Poincare Phys. Theor.,} 69 (1998), no. 1, 31-82.
     
        \bibitem[FMW1]{FMW1} S.~Fujii\'e, A.~Martinez, T.~Watanabe\,:
   \newblock{\em Widths of resonances at an energy-level crossing I: Elliptic interaction}, J. Diff. Eq. 260 (2016) 4051-4085.
     \newblock{}
     
            \bibitem[FMW2]{FMW2} S.~Fujii\'e, A.~Martinez, T.~Watanabe\,:
   \newblock{\em Widths of resonances at an energy-level crossing II: Vector field interaction}, J. Diff. Eq. 262 (2017) 5880-5895.
     \newblock{}
     
  
  \bibitem[GrMa1]{GrMa1} A.~Grigis, A.~Martinez\,:
    \newblock{\em Resonance widths in a case of multidimensional phase space tunneling.}
     \newblock{Asymptot. Anal.,} 91 (2015), no. 1, 33-904.
  
  \bibitem[GrMa2]{GrMa2} A.~Grigis, A.~Martinez, \,:
    \newblock{\em Resonance widths for the molecular predissociation.} 
      \newblock{ Anal. PDE,} 7 (2014), no. 5, 1027-1055.
 
 \bibitem[Ha]{Ha} 
 E.M.~Harrell\,:
 \newblock{\em General lower bounds for resonances in one dimension.}
 \newblock{Comm. Math. Phys.,} 86 (1982), no. 2, 221-225.
 
 \bibitem[HeMa]{HeMa}
B.~Helffer, A.~Martinez\,: 
\newblock{\em Comparaison entre les diverses notions de r\'esonances.}
\newblock{Helv. Phys. Acta,} 60 (1987), no. 8, 992-1003.
 
 \bibitem[HeSj1]{HeSj1} 
 B.~Helffer, J.~Sj\"ostrand\,:
 \newblock{\em R\'esonances en limite semiclassique.}
 \newblock{Bull. Soc. Math. France, M\'emoire No. 24-25,} 1986.
 
  \bibitem[HeSj2]{HeSj2} 
 B.~Helffer, J.~Sj\"ostrand\,:  
 \newblock{\em Semiclassical analysis for Harper's equation. III. Cantor structure of the spectrum. }
 \newblock{M\'em. Soc. Math. France}, (1989), no. 39 1--124.


\bibitem[HiSi]{HiSi} P.~Hislop, I.M.~Sigal\,:
\newblock{\em Introduction to Spectral Theory. With Applications to Schr\"odinger Operators.}
\newblock{Applied Mathematical Sciences, 113. Springer-Verlag, New York,} 1996.

\bibitem[Kl]{Kl} 
M.~Klein\,:
\newblock{\em On the mathematical theory of predissociation.} 
\newblock{Ann. Physics,} 178 (1987), no. 1, 48-73.

  \bibitem[KMSW]{KMSW} 
M.~Klein, A.~Martinez, R.~Seiler, X.W.~Wang\,:
 \newblock{\em  On the Born-Oppenheimer expansion for polyatomic molecules.}
  \newblock{Comm. Math. Physics,} 143 (1992), no. 3, 607-639.

\bibitem[LeSu]{LeSu} S.~Lee, H.~Sun\,: 
\newblock{\em Widths and positions of isolated resonances in the predissociation of SH: Quantal treatments.} 
\newblock{Bull. Korean Chem. Soc.,} 22 (2001), no. 2, 210--212.


\bibitem[Le]{Le} J.C.~Lehmann\,: 
\newblock{\em Laser Spectroscopy and Predissociation of Molecules.} 
\newblock{In: Marrus R., Prior M., Shugart H. (eds) Atomic Physics 5. Springer, Boston, MA,} 1977

  \bibitem[Ma]{Ma} 
A.~Martinez\,:
 \newblock{\em An Introduction to Semiclassical and Microlocal Analysis.}
 \newblock{Springer-Verlag New-York, UTX Series,} 2002.
 
 
  \bibitem[MaBr]{MaBr} A.~Martinez, P.~Briet\,:
  \newblock{\em Estimates on the molecular dynamics for the pre dissociation process.}
   \newblock{Preprint 2015.} http://arxiv.org/abs/1503.05507

 
 \bibitem[MaMe]{MaMe} A.~Martinez, B.~Messirdi\,:
   \newblock{\em Resonances for Diatomic Molecules in the Born-Oppenheimer Approximation.}
     \newblock{Comm. Partial Differential Equations,} 19 (1994), no. 7-8, 1139-1162.
 
 \bibitem[MaSo]{MaSo} A.~Martinez, V.~Sordoni\,:
   \newblock{\em Twisted peudodifferential calculus and application to the quantum evolution of molecules.}
     \newblock{Mem. Amer. Math. Soc.,} 200 (2009), no. 936.
 
  \bibitem[Na]{Na} S.~Nakamura\,:  
   \newblock{\em On an example of phase-space tunneling.} 
   \newblock{Ann. Inst. H. Poincare Phys. Theor.,}  63 (1995), no. 2, 211-229.
   
     \bibitem[Ol]{Ol} Olver, F. W. J.\,:  
   \newblock{\em Airy and related functions.} in
   \newblock{Olver, Frank W. J.; Lozier, Daniel M.; Boisvert, Ronald F.; Clark, Charles W., NIST Handbook of Mathematical Functions,}  Cambridge University Press, ISBN 978-0521192255, MR 2723248 
   
    \bibitem[Ra]{Ra} T.~Ramond\,:
 \newblock{\em Intervalles d'instabilit\'e pour une \'equation de Hill \`a potentiel m\'eromorphe}
 \newblock{bull. SOC. math. France} 121 (1993), 403-444
   
  \bibitem[ReSi]{ReSi} M.~Reed, B.~Simon\,:
 \newblock{\em Methods of Modern Mathematical Physics, Vol. I-IV,}
 \newblock{Academic Press New York}, 1972.
 
   \bibitem[Sj]{Sj} 
 J.~Sj\"ostrand\,: 
  \newblock{\em Singularit\'es analytiques microlocales. } 
  \newblock{Ast\'erisque,} 95, Soc. Math. France, Paris, 1982. 

   \bibitem[Vo]{Vo} A.~Voros\,:
 \newblock{\em The return of the quartic oscillator: the complex WKB method.}
 \newblock{Ann. Inst. H. Poincare Sect. A,}  39 (1983), no. 3, 211-338.

  \bibitem[Ya]{Ya} D.R.~Yafaev\,:
 \newblock{\em The semiclassical limit of eigenfunctions of the Schr\"odinger equation and the Bohr-Sommerfeld quantization condition, revisited.}
 \newblock{Algebra i Analiz}, 22 (2010), no. 6, 270-291; 
 \newblock{translation in St. Petersburg Math. J.,} 22 (2011), no. 6, 1051-1067.
 
   \bibitem[Zw]{Zw} M.~Zworski\,:
  \newblock{\em Semiclassical Analysis, Graduate Studies in Mathematics, 138.}
   \newblock{American Mathematical Soc.,} 2012.





\end{thebibliography}
\end{document}